\newlength{\wordlength}
\newcommand{\wordbox}[3][c]{\settowidth{\wordlength}{#3}\makebox[\wordlength][#1]{#2}}
\newcommand{\mathwordbox}[3][c]{\settowidth{\wordlength}{$#3$}\makebox[\wordlength][#1]{$#2$}}
\newcommand{\padsuperscript}[0]{\mathwordbox{}{^0}}
\newcommand{\pss}{\padsuperscript}
\newcommand{\subfigureCMD}[3]{\ifthenelse{\equal{#2}{}}{{\subfloat[#1]{#3}}}{{\subfloat[#1\label{#2}]{#3}}}}
\newcommand{\bothwayedges}[1]{\overleftrightarrow{#1}}
\newcommand{\set}[1]{\{#1\}} 
\newcommand{\weight}{{\mathrm w}}
\newcommand{\bigo}{O}
\newcommand{\none}{\text{---}}
\newcommand{\ba}[1][]{\ifthenelse{\equal{#1}{}}{{\mathit{BA}}}{{\mathit{BA}(#1)}}}
\newcommand{\sla}[1][]{\ifthenelse{\equal{#1}{}}{{\mathit{SL}}}{{\mathit{SL}(#1)}}}
\newcommand{\teq}[1][]{\ifthenelse{\equal{#1}{}}{\mathit{TEQ}}{\mathit{TEQ}(#1)}}
\newcommand{\rp}[1][]{\ifthenelse{\equal{#1}{}}{\mathit{RP}}{\mathit{RP}(#1)}}
\let\@@pmod\pmod
\DeclareRobustCommand{\pmod}{\@ifstar\@pmods\@@pmod}
\def\@pmods#1{\mkern4mu({\operator@font mod}\mkern 6mu#1)}
\newcommand{\defineDecisionProblem}[4][]{%
	\medskip
	\noindent\textsc{#2}~#1\\
	\noindent%
	\begin{tabularx}{\textwidth}{@{}lX@{}}
		\noindent\emph{Input:} & #3 \\
		\noindent\emph{Question:} & #4
	\end{tabularx}
	}
\newcommand{\incompart}[1]{\tilde#1} 
\newcommand{\props}{P} 
\newcommand{\naivecheckthree}{{\textsc{$2$-Partition-Check-$3$-Majority}}\xspace}
\newcommand{\satcheckk}[1][k]{\textsc{\sat-Check-$#1$-Majority}\xspace}
\newcommand{\checkkmajorityproblem}{\textsc{Check-$k$-Majority}\xspace}
\newcommand{\checktwomajorityproblem}{\textsc{Check-$2$-Majority}\xspace}
\newcommand{\checkthreemajorityproblem}{\textsc{Check-$3$-Majority}\xspace}
\newcommand{\majdimname}{majority dimension\xspace}
\newcommand{\majdim}{\ensuremath{\dim}}
\newcommand{\kmaj}{\ensuremath{k_\text{maj}}}
\newcommand{\majexprname}{majoritarian expressiveness\xspace}
\newcommand{\majexpr}{\ensuremath{n^\mathcal{T}}}
\newcommand{\preflib}{\textsc{PrefLib}\xspace}
\newcommand{\cnf}{CNF\xspace}
\newcommand{\threecnf}{3-\cnf}
\newcommand{\fewcnf}{\textsc{Few}-\cnf}
\newcommand{\redfewcnf}{\textsc{ReducedFew}-\cnf}
\newcommand{\orderedthreecnf}{\textsc{Ordered3}-\cnf}
\newcommand{\sat}{\textsc{Sat}\xspace}
\newcommand{\threesat}{3\sat}
\newcommand{\fewsat}{\textsc{Few}\sat} 
\newcommand{\redfewsat}{\textsc{ReducedFew}\sat}
\newcommand{\orderedthreesat}{\textsc{Ordered3}\sat}
\newcommand{\maxsat}{\textsc{Max}\sat}
\newcommand{\redform}{\ensuremath{red}} 
\newcommand{\NP}{NP} 
\newcommand{\midd}{\colon}
\newcommand{\T}{\mathcal{T}}
\newcommand{\G}{\mathcal{G}}
\newcommand{\Tn}{\mathcal{T}_n}
\newcommand{\Gn}{\mathcal{G}_n}
\def\nd{1.5em} 	
\tikzset{%
node distance = \nd,
}
\tikzset{ 
verysmallvertex/.style={
  draw,
  circle,
  inner sep=0,
  minimum size=.2cm}
}
\tikzset{ 
smallvertex/.style={
  draw,
  circle,
  inner sep=0,
  minimum size=.4cm}
}
\tikzset{
vertex/.style={
	draw,
	circle,
	inner sep=0pt,
	minimum size=1.8em}
}
\tikzset{
edge/.style={
	-latex
	}	
}
\tikzset{
invedge/.style={
	latex-
}
}
\begin{document}

\markboth{Georg Bachmeier et al.}{$k$-Majority Digraphs and the Hardness of Voting with a Constant Number of Voters}

\title{$k$-Majority Digraphs and the Hardness of Voting with a Constant Number of Voters}

\author{Georg Bachmeier\textsuperscript{1}, 
Felix Brandt\textsuperscript{2}, 
Christian Geist\textsuperscript{2}, \\
Paul Harrenstein\textsuperscript{3}, 
Keyvan Kardel\textsuperscript{2}, \\
Dominik Peters\textsuperscript{3}, and
Hans Georg Seedig\textsuperscript{2}\\
\\
\textsuperscript{1}ETH Z\"urich\quad
\textsuperscript{2}TU M\"unchen\quad
\textsuperscript{3}University of Oxford}

\date{}

\maketitle

\begin{abstract}
Many hardness results in computational social choice make use of the fact that every directed graph may be induced as the pairwise majority relation of some preference profile. However, this fact requires a number of voters that is almost linear in the number of alternatives. It is therefore unclear whether these results remain intact when the number of voters is bounded, as is, for example, typically the case in search engine aggregation settings.
In this paper, we provide a systematic study of majority digraphs for a constant number of voters resulting in analytical, experimental, and complexity-theoretic insights.
First, we characterize the set of digraphs that can be induced by two and three voters, respectively, and give sufficient conditions for larger numbers of voters.
Second, we present a surprisingly efficient implementation via SAT solving for computing the minimal number of voters that is required to induce a given digraph and experimentally evaluate how many voters are required to induce the majority digraphs of real-world and generated preference profiles.
Finally, we leverage our sufficient conditions to show that the winner determination problem of various well-known voting rules 
remains hard even when there is only a small constant number of voters. In particular, we show that Kemeny's rule is hard to evaluate for 7 voters, while previous methods could only establish such a result for constant even numbers of voters.
\end{abstract}

\newpage

\section{Introduction} 
\label{sec:introduction}

	A significant part of \emph{computational social choice}, a relatively new interdisciplinary area at the intersection of social choice theory and computer science, is concerned with the computational complexity of voting problems. For most of the voting rules proposed in the social choice literature, it has  been studied how hard it is to determine winners, to identify beneficial strategic manipulations, or to influence the outcome by bribing, partitioning, adding, or deleting voters \citep[see, \eg][]{BCE+14a,Roth15a,FHHR09a}.
	In many cases, the corresponding problems turned out to be NP-hard. Depending on the nature of the problem, this can be interpreted as bad news---as in the case of winner determination---or good news---as in the case of manipulation, bribery, and control. 

	Many voting rules are based on the pairwise majority relation, which establishes a fruitful connection between voting theory and graph theory. Perhaps the most fundamental result in this context is \emph{McGarvey's theorem}, which states that \emph{every} asymmetric directed graph may be induced as the pairwise majority relation of some preference profile \citep{McGa53a}. Unsurprisingly, McGarvey's theorem is the basis of most hardness results concerning majoritarian voting rules.
	McGarvey's original construction requires $n(n-1)$ voters, where $n$ is the number of alternatives. This number has subsequently been improved by \citet{Stea59a} and \citet{ErMo64a}, who have eventually shown that the number of required voters is of order $\Theta(n / \log n)$. 
	Since the result by \citet{ErMo64a} gives a lower bound as well, it implies that, for any constant number of voters, there are majority digraphs that cannot be induced by any preference profile.\footnote{The same holds for tournaments \citep[][Ex.1 (d)]{Moon68a}.} As a consequence, the mentioned hardness results only hold if the number of voters is roughly of the same order as the number of alternatives. 
	
	In some applications, however, the number of voters is much smaller than the number of alternatives and it is unclear whether hardness still holds. A typical example is search engine aggregation, where the voters correspond to Internet search engines and the alternatives correspond to the webpages ranked by the search engines \citep[see, \eg][]{DKNS01a}.
	\citet{Hudr08a} writes that ``to my knowledge, when not trivial, the complexity for lower values of $m$ [the number of voters] remains unknown. In particular, it would be interesting to know whether some of the problems [\ldots] remain NP-hard if $m$ is a given constant''.
	For the particularly interesting case of Kemeny's \citeyear{Keme59a} rank aggregation rule, there is a notable exception: a proof by \citet{DKNS01a} shows NP-hardness of this rule without needing to appeal to McGarvey's result, and shows that this hardness holds even if there are only $4$ voters. The result also holds for every larger \emph{even} constant number of voters. In the \emph{Handbook of Computational Social Choice}, \citet{FHN15a} note that ``quite intriguingly, the case for any odd $n \ge 3$ remains open''. In a similar vein, \citet{Hudr08a} writes that
``it would be interesting to decide whether it is still the case for fixed values of $m$ with $m$ odd'', and \citet{BBD09a} find that \citeauthor{DKNS01a}'s method ``does not work for odd numbers of [voters]'' and write that particularly the case for $n = 3$ remains ``wide open''.

In this paper, we provide a systematic study of majority digraphs for a constant number of voters resulting in analytical, experimental, and complexity-theoretic insights.
Starting from a discussion of bounds on the size of the smallest tournaments that require a certain number of voters (\secref{sec:bounds}), we analyze the structure of majority digraphs inducible by a constant number of voters. Obviously, the less voters there are, the more restricted is the corresponding class of inducible majority digraphs. 
	For instance, digraphs induced by two voters have to be acyclic (and are subject to some additional restrictions). 
	
	\emph{Analytically}, we
	characterize digraphs inducible by two and three voters, respectively, and provide sufficient conditions for digraphs to be induced by $k$ voters (\secref{sec:majority_relations_of_few_voters}). 
	We propose a surprisingly efficient implementation via SAT solving for computing the minimal number of voters that is required to induce a given digraph (\secref{sec:determining_the_majority_dimension_of_a_digraph}) and \emph{experimentally} evaluate how many voters are required to induce the majority digraphs of real-world and generated preference profiles (\secref{sec:analyzing_majority_dimensions}). 
	
	In \secref{sec:hardness_of_voting_with_few_voters}, we then finally leverage the conditions from \secref{sec:majority_relations_of_few_voters} to investigate whether common, computationally intractable voting rules (the Banks set, the tournament equilibrium set, Kemeny's rule, Slater's rule, and ranked pairs) remain intractable when there is only a small constant number of voters. 
	This is achieved by analyzing existing hardness proofs and checking whether the class of majority digraphs used in these constructions can be induced by small constant numbers of voters. 
	Perhaps surprisingly, it turns out that all hardness proofs we studied can be constructed using at most 11 voters, and for many proofs, including one for Kemeny's rule, 7 voters suffice.

	The paper concludes with an overview of the achieved hardness results, summarized in \tabref{tbl:results}, and a brief outlook on future research in \secref{sec:conclusions_and_discussion}. 

	Our work can be viewed as complementary to the work by \citet{CSL07a} who considered manipulation problems with a constant number of \emph{candidates}. \citeauthor{CSL07a} determined how many candidates are required such that the problem of coalitional manipulation with weighted voters becomes NP-hard for a number of tractable voting rules including Borda's rule, Copeland's rule, and maximin.

\section{Preliminaries} \label{sec:preliminaries} 

	This section introduces the notation and terminology required to state our results.

	A \emph{directed graph} or \emph{digraph} is a pair $(V,E)$, where $V$ is finite a set of vertices and $E\subseteq V\times V$ is a set of arcs (directed edges). The \emph{size} of a digraph is its number of vertices $|V|$.
	By $\G$ we denote the class of all digraphs and by $\Gn$ the class all digraphs of size $n$.
	The \emph{converse of $E$} is $\overline E=\set{(w,v)\midd (v,w)\in E}$, where the direction of all arcs is reversed. Often it will be useful to effectively disregard orientations by considering $\bothwayedges E=E\cup \overline E$.
	We say that $E_1$ and $E_2$ are \emph{orientation compatible} if $E_1\cap(\bothwayedges{E_1}\cap\bothwayedges{E_2})=E_2\cap(\bothwayedges{E_1}\cap\bothwayedges{E_2})$, \ie if
	for all $e\in\bothwayedges{E_1}\cap\bothwayedges{E_2}$, $e\in E_1$ if and only if $e\in E_2$.

	The \emph{incomparability graph} $\incompart G=(V,\incompart E)$ associated with a digraph $(V,E)$ is defined such that for all $v,w\in V$, 
	\[
		(v,w)\in \incompart E \text{ if and only if neither } (v,w)\in E \text{ nor } (w,v)\in E.
	\]
	Obviously, $\bothwayedges{\incompart E}=\incompart E$.

	A digraph $G=(V,E)$ is said to be \textit{transitive} if for all $x,y,z \in V$, $(x,y) \in E$ and $(y,z) \in E$ imply $(x,z) \in E$. Moreover,~$G$ is \textit{acyclic} if for all $x_1,\dots,x_k \in V$, $(x_1,x_2),(x_2,x_3),...,(x_{k-1},x_k) \in E$ implies  $(x_k,x_1) \not\in E$. Also $G$ is \emph{asymmetric} if $(v,w)\in E$ implies $(w,v)\notin E$. 
	A \emph{tournament} is an asymmetric digraph $(V,E)$ where 
	$E$ is complete, \ie if for all distinct $v,w\in V$, either $(v,w)\in E$ or $(w,v)\in E$. We denote the sets of all tournaments by $\T$ and the set of those with $n$~vertices by~$\Tn$.
	Moreover, a digraph $(V,E)$ is \textit{transitively (re)orientable} if there exists
	a transitive and asymmetric digraph $(V,E')$ with $\bothwayedges{E'}=\bothwayedges{E}$. $E'$ is also referred to as a \emph{reorientation of~$E$}.

	The digraphs in this paper are assumed to be induced by the preferences of a set of voters. Let
	$N=\set{1,\dots, k}$ be a set of~$k$ voters (or an \emph{electorate} of size~$k$) and $V$ a set of alternatives.
	The preferences of each voter~$i$ are given as \emph{linear orders}, \ie transitive, complete, and antisymmetric relations $R_i$ over a set of alternatives~$V$.
	A \emph{preference profile} $R=(R_1,\dots,R_k)$ associates a preference relation with each voter. Each preference profile gives rise to a \emph{majority relation}, which holds between two alternatives~$v$ and~$w$ if the number of voters preferring~$v$ to~$w$ exceeds the number of voters preferring~$w$ to~$v$. We say that $(V,E)$ is \emph{the majority digraph} of preference profile~$R$ if
	\[
			\text{$(v,w)\in E$ if and only if $|\set{i\in N\colon v\mathrel{R_i}w}|>|\set{i\in N\colon w\mathrel{R_i}v}|$.}
	\]
	Majority digraphs are asymmetric. Moreover, if the number of voters is odd, the majority digraph is complete and thus a tournament. 

	We say that $(V,E)$ is \emph{$k$-inducible} if $(V,E)$ is the majority digraph for some preference profile involving~$k$ voters. Equivalently, we say that $(V,E)$ is a \emph{$k$-majority digraph}.\footnote{\citet{ABKK+06a} used the term \emph{$k$-majority tournament} for tournaments that are induced by a $(2k-1)$-voter profile because every majority consists of at least $k$~voters. We chose to follow the terminology of \citet{KTC+09a} instead.} 
As an example, \figref{fig:digraph} shows a tournament which is induced by a $3$-voter profile, and thus this tournament is a $3$-inducible majority digraph.

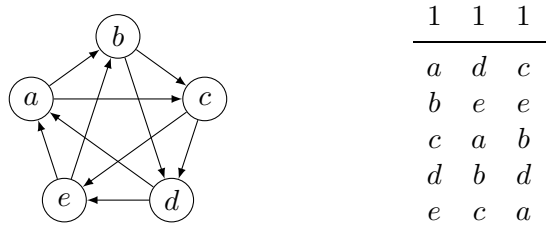
\begin{figure}[htb]
  \centering

\begin{minipage}{10em}
\begin{tikzpicture}[auto]
	\draw[use as bounding box,opacity=0] (-1.7,-1.8) rectangle (1.7,1.5);
  \node[fill=white,circle,draw,minimum size=1.5em,inner sep=0pt] (1) at (162:1.2) {$a$};
  \node[fill=white,circle,draw,minimum size=1.5em,inner sep=0pt] (2) at (90: 1.2) {$b$};
  \node[fill=white,circle,draw,minimum size=1.5em,inner sep=0pt] (3) at (18: 1.2) {$c$};
  \node[fill=white,circle,draw,minimum size=1.5em,inner sep=0pt] (4) at (306:1.2) {$d$};
  \node[fill=white,circle,draw,minimum size=1.5em,inner sep=0pt] (5) at (234:1.2) {$e$};
  \draw[-latex] (4) to (1);
  \draw[latex-] (4) to (2);
  \draw[latex-] (4) to (3);
  \draw[latex-] (1) to (5);
  \draw[latex-] (2) to (5);
  \draw[-latex] (3) to (5);
  \draw[-latex] (4) to (5);
  \draw[-latex] (1) to (2);
  \draw[-latex] (2) to (3);
  \draw[latex-] (3) to (1);
\end{tikzpicture}
		\end{minipage}
\quad\quad
\begin{minipage}{12em}
\begin{tikzpicture}[auto]
		\draw[use as bounding box,opacity=0] (-1.7,-1.8) rectangle (1.7,1.5);
	\tikzstyle{every node}=[fill=white,circle,draw,minimum size=1.5em,inner sep=0pt]
	\node[draw=white,rectangle] (1) at (0,.15) {
$
\begin{array}{c@{\quad}c@{\quad}c}
	1	&	1	&	1	\\\midrule
	a	&	d	&	c	\\
	b	&	e	&	e	\\
	c	&	a	&	b	\\
	d	&	b	&	d	\\
	e	&	c	&	a	
\end{array}			
$};
\end{tikzpicture}
\end{minipage}


  \caption{A majority digraph and a $3$-voter preference profile that induces the digraph.}
  \label{fig:digraph}
\end{figure}

	We also consider \emph{weighted digraphs~$(V,\weight)$}, where~$V$ is a set of vertices and~$\weight\colon V\times V\to \mathbb Z$ a weight function assigning weight $\weight(v,w)$ to the arc $(v,w)$.
	With a slight abuse of notation we also refer to weighted digraphs as a pair $(V,E)$, where the weight function is subsumed and it is understood that $E=\set{(v,w)\midd \weight(v,w)>0}$. We say that a weighted digraph $(V,\weight)$ is induced by~$R$ if  for all $v,w\in V$, $\weight(v,w)=|\set{i\in N\colon v\mathrel{R_i}w}|-|\set{i\in N\colon w\mathrel{R_i}v}|$. In this case, $(V,\weight)$ is a \emph{weighted $k$-majority digraph}. 

	Given a (weighted) digraph we are interested in the minimal number of voters needed such that the digraph represents the (weighted) majority relation of the voters' preferences. This is captured in the majority dimension of the digraph.\footnote{This complexity measure of digraphs can also be interpreted as a complexity measure of preference profiles. The majority dimension of a given preference profile is then simply defined as the majority dimension of the induced majority digraph.	} 
	Formally, the \emph{majority dimension} of a digraph $G=(V,E)$ or a weighted digraph $G=(V,\weight)$ is the smallest number of voters in a profile that induces~$G$, \ie
	\[
	\majdim(G) = \min \{k \midd G \text{ is a (weighted) $k$-majority digraph}\}.
	\]
	Also, let $\kmaj(n)$ denote the minimum electorate size required to induce all digraphs of size~$n$, \ie
	\[
		\kmaj(n) = \min \{k \midd \majdim(G) \leq k \text{ for all $G \in \Gn$}\}.
	\]
	If we restrict our attention to tournaments, we will write $\kmaj^\T(n)$ instead. Note that $\kmaj^\T(n)\leq\kmaj(n)$ since $\T\subset \G$.

	Conversely, define the \emph{\majexprname} of (electorates of size)~$k$ to be the maximum integer $\majexpr(k)$ such that every complete majority relation on up to~$\majexpr(k)$ alternatives is $k$-inducible. 
	Since the work by \citet{ErMo64a}, which we discuss in more detail in \secref{sec:bounds}, it is known that $\majexpr(k)$ is finite for every~$k$.
	Note that this implies, that the smallest tournament that cannot be induced by~$k$ voters is of size~$\majexpr(k)+1$.

	By a \emph{voting rule} we understand a function that maps each preference profile to a non-empty subset of the alternatives. Over the years, a large number of voting rules have been proposed. The ones we will be concerned with in this paper---because they are based on majority digraphs and computationally intractable---are the \emph{Banks set ($\ba$)}, the \emph{tournament equilibrium set ($\teq$)}, \emph{Slater's rule ($\sla$)}, and \emph{ranked pairs ($\rp$)}. The definitions of these rules are given in \secref{sec:hardness_of_voting_with_few_voters}.

\section{Bounds on the Majority Dimension and Majoritarian Expressiveness} 
\label{sec:bounds}

	This section is concerned with bounds on the \majdimname of digraphs and on the \majexprname of electorates of a fixed size.

	The first such result is due to \citet{McGa53a}, who showed that every digraph can be induced by some (finite) preference profile. He gave a construction that requires exactly two voters per arc in the digraph. In our notation, this implies that $\kmaj(n)\leq n(n-1) < \infty$ for all~$n$.
	
	The work by \citeauthor{McGa53a} has been followed up by \citet{Stea59a} who showed that $\kmaj(n)\leq n+2$ which was later improved by \citet{Fiol92a} to $\kmaj(n)\leq n-\lfloor \log n \rfloor+1$. For larger $n$, \citet{ErMo64a} gave the asymptotically better bound $\kmaj(n)\leq c\cdot \frac{n}{\log n}$ for some constant $c$. Their work nicely complemented a second result by \citet{Stea59a} who proved that $\kmaj(n)>0.55\cdot\frac{n}{\log n}$ for large~$n$. These results together asymptotically capture the growth of $\kmaj(n)$.
	
	\begin{theorem}
		[\citealp{Stea59a}, \citealp{ErMo64a}] 
		$\kmaj(n) \in \Theta(\frac{n}{\log n})$.
	\end{theorem}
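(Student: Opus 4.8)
The claim is an asymptotic two-sided bound, so the plan is to establish matching upper and lower bounds on $\kmaj(n)$. In fact both halves are already available from the results quoted immediately above, and the theorem is obtained by combining them: the upper bound $\kmaj(n)\le c\cdot n/\log n$ is the construction of \citet{ErMo64a}, and the lower bound $\kmaj(n)>0.55\cdot n/\log n$ (for large $n$) is due to \citet{Stea59a}. Together they pin the growth rate down as $\Theta(n/\log n)$. Below I indicate how each side is proved and where the real work lies.

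For the lower bound I would argue by counting. A profile of $k$ voters on the $n$ alternatives is a $k$-tuple of linear orders, so there are at most $(n!)^k$ such profiles, and each induces exactly one majority digraph; hence the number of distinct $k$-inducible digraphs is at most $(n!)^k$. On the other hand there are $2^{\binom{n}{2}}$ tournaments on $n$ labelled vertices. If $k=\kmaj(n)$, then every digraph of size $n$---and in particular every tournament---is $k$-inducible, so $(n!)^{\kmaj(n)}\ge 2^{\binom{n}{2}}$. Taking logarithms together with Stirling's estimate $\log_2(n!)=n\log_2 n-O(n)$ yields
\[
	\kmaj(n)\ \ge\ \frac{\binom{n}{2}}{\log_2(n!)}\ =\ (1-o(1))\,\frac{n}{2\log_2 n}\ =\ \Omega\!\left(\frac{n}{\log n}\right).
\]
This already gives the $\Omega$-direction (using, if one prefers, $\kmaj^\T(n)\le\kmaj(n)$ to phrase it via tournaments); \citeauthor{Stea59a} sharpens the leading constant from $\tfrac12$ to $0.55$ by a more careful count of the distinct tournaments realizable by $k$ voters, but the precise constant is immaterial for the asymptotic statement.

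The upper bound is the genuinely hard direction and is where I expect the main obstacle to lie. Cardinality alone does not suffice here: the count above shows only that $(n!)^k$ is large enough to \emph{cover} all tournaments once $k\approx n/(2\log_2 n)$, but it gives no guarantee that the map from profiles to majority digraphs is actually surjective for such small $k$. One therefore needs an explicit realizing profile of size $O(n/\log n)$ for \emph{every} digraph, which is exactly the content of \citet{ErMo64a}. The naive constructions fall well short of the target---\citet{McGa53a} spends two voters per arc for a bound of order $n^2$, and the improvements of \citet{Stea59a} and \citet{Fiol92a} only reach order $n$---so the crux is to let a single pair (or small group) of carefully chosen linear orders fix the orientation of many arcs simultaneously, driving the voter count down by a logarithmic factor. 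I would treat this construction as a black box, since it is stated above as an assumed result, and simply combine it with the counting lower bound to conclude $\kmaj(n)\in\Theta(n/\log n)$.
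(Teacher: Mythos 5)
Your proposal is correct and matches the paper's treatment: the paper offers no proof of its own here, but rather combines exactly the two cited halves---the Erd\H{o}s--Moser construction for the upper bound $\kmaj(n)\le c\cdot n/\log n$ and Stearns's counting lower bound $\kmaj(n)>0.55\cdot n/\log n$---just as you do. Your sketched counting argument for the $\Omega$-direction is sound (modulo the minor, easily repaired point that tournaments need only be inducible by \emph{at most} $\kmaj(n)$ voters, handled by padding with pairs of opposed voters or by summing over electorate sizes) and closely parallels the paper's own Lemma~\ref{lem:majdim_lower_bound_necessary_T}.
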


	In the following, we are particularly interested in the majority dimension of tournaments. The following simple observation about the parity of $\majdim(G)$ will be useful.

	\begin{lemma}
	\label{lem:parity-of-majdim}
	The \majdimname $\majdim(G)$ is odd if $G$ is a tournament and even otherwise.
	\end{lemma}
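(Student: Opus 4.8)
The plan is to build everything on one parity observation. In any profile with $k$ voters, each voter strictly ranks every pair of alternatives, so $|\set{i\colon v\mathrel{R_i}w}|+|\set{i\colon w\mathrel{R_i}v}|=k$ and therefore $\weight(v,w)=2\,|\set{i\colon v\mathrel{R_i}w}|-k\equiv k\pmod 2$. In words: every pairwise margin has the same parity as the number of voters. I would use this as the engine for both directions, together with the fact (guaranteed by the construction of \citet{McGa53a}, which gives $\majdim(G)<\infty$) that the minimum defining $\majdim(G)$ is attained.

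For the ``even otherwise'' direction, suppose $G=(V,E)$ is not a tournament. Since majority digraphs are asymmetric, $G$ failing to be a tournament means $E$ is incomplete, so there are distinct $v,w$ with $(v,w)\notin E$ and $(w,v)\notin E$; for this pair $\weight(v,w)=0$. By the parity observation, any electorate inducing $G$ must have even size, and in particular the minimal one does, so $\majdim(G)$ is even.

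For the ``tournament'' direction the parity observation alone is not enough, and this is the step I expect to be the real obstacle: parity only forces all margins to share the parity of $k$, yet an even $k$ can still induce a tournament (all margins even and nonzero). The missing ingredient is a reduction argument. Suppose, for contradiction, that a tournament $G$ is induced by a minimal electorate of even size $k$. Then every margin $\weight(v,w)$ is even and, since $G$ is complete, nonzero, so $|\weight(v,w)|\geq 2$ for all pairs. Now delete a single arbitrary voter: each margin changes by exactly $\pm 1$, and the bound $|\weight(v,w)|\geq 2$ guarantees that no margin changes sign or becomes zero. Hence the resulting $(k-1)$-voter profile induces exactly the same arcs, i.e.\ the same tournament $G$, contradicting the minimality of $k$. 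It follows that $\majdim(G)$ cannot be even, hence is odd. (The only caveat is the trivial case of a digraph on at most one vertex, where there are no pairs to compare; I would dispatch this separately or exclude it by convention.)
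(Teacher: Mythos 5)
Your proof is correct and follows essentially the same route as the paper: the incomplete case is handled by the observation that an odd electorate can never produce a majority tie (your parity computation makes this explicit), and the tournament case by the same voter-deletion argument, where evenness forces all margins to be nonzero and at least $2$ in absolute value, so removing one voter preserves the induced tournament and contradicts minimality. Your version merely spells out the parity identity $\weight(v,w)\equiv k \pmod 2$ and the sign-preservation step that the paper leaves implicit.
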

	\begin{proof}
		Let~$G$ be a tournament and assume that $\majdim(G)=k$ was even. 
		Then there exists a preference profile~$R$ with $k$ voters that induces~$T$. 
		Since $k$ is even, the majority margin must be even for every pair of alternatives and can furthermore never be zero as $T$ is a tournament. 
		Therefore, removing any single voter from~$R$ gives a profile $R'$ with just $k-1$ voters that still induces~$T$, a contradiction.

		For incomplete digraphs, the statement follows directly from the fact that for all preference profiles $R$ with an odd number of voters $k$, the majority relation is complete and anti-symmetric (as no majority ties can occur).
	\end{proof}

	Note that the lower bound on $\kmaj(n)$ due to \citet{Stea59a} 
	shows that for every electorate of size~$k$, there exist \emph{digraphs} that are not $k$-inducible. 
	Still, the \majdimname of \emph{tournaments}, $\kmaj^\T(n)$, could be bounded by a constant. In that case, all tournaments could be inducible by some constant electorate size. The following lemma shows that this is not the case. The argument is similar to the one by \citeauthor{Stea59a}.

	\begin{lemma}
	\label{lem:majdim_lower_bound_necessary_T}
	If $\kmaj^\T(n) = k \geq 3$, then
	\begin{equation}\label{eq:majdim-necessary}
	\binom{n}{2} \cdot \ln(2) \leq 
	k\cdot \left(\ln(2) + \sum\limits_{i=2}^{n}\ln(i)\right) - \ln(k!)\text{.}
	\end{equation}
	\end{lemma}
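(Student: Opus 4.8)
The plan is to prove \eqref{eq:majdim-necessary} by a counting argument, comparing the number of labelled tournaments on $n$ vertices with the number of essentially distinct $k$-voter profiles that can induce them. This is the same idea that underlies Stearns's lower bound, now specialised to complete majority relations.

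First I would count the objects on the two sides. There are exactly $2^{\binom{n}{2}}$ tournaments on the fixed vertex set, since each of the $\binom{n}{2}$ unordered pairs can be oriented in precisely two ways. By the definition of $\kmaj^\T$, the hypothesis $\kmaj^\T(n)=k$ says that $\majdim(T)\le k$ for every $T\in\Tn$, so every such tournament is the majority digraph of some profile with at most $k$ voters. The crucial observation on the profile side is that the majority relation depends only on the \emph{multiset} of the $k$ linear orders, not on the labelling of the voters; since there are $n!$ linear orders, the number of distinct $k$-voter profiles (viewed as multisets) is $\binom{n!+k-1}{k}$.

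To pass from ``$\majdim(T)\le k$'' to ``inducible by a profile of \emph{exactly} $k$ voters'' I would use a padding argument: by Lemma~\ref{lem:parity-of-majdim} both $\majdim(T)$ and $k$ are odd, so they differ by an even number, and adjoining $(k-\majdim(T))/2$ pairs each consisting of a linear order and its reverse leaves every majority margin unchanged while raising the voter count to $k$. Hence the map sending a $k$-voter multiset profile to its induced tournament is onto $\Tn$, which gives $2^{\binom{n}{2}}\le\binom{n!+k-1}{k}$. Finally I would simplify the binomial coefficient by writing $\binom{n!+k-1}{k}=\frac{1}{k!}\prod_{j=0}^{k-1}(n!+j)$ and using $n!+j\le 2\,n!$ for every $j\le k-1$; this is valid because $k-1\le n!$, which follows from McGarvey's bound $k=\kmaj^\T(n)\le\kmaj(n)\le n(n-1)\le n!$ in the relevant range $n\ge 3$ forced by $k\ge 3$. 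This yields $2^{\binom{n}{2}}\le (2\,n!)^k/k!$, and taking natural logarithms, together with $\ln(n!)=\sum_{i=2}^{n}\ln(i)$, produces exactly \eqref{eq:majdim-necessary}.

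The two counting steps and the final logarithm are routine. The one place that needs care—and the source of the lemma's strength—is the multiset (rather than ordered) count: dividing by $k!$ is precisely what produces the $-\ln(k!)$ term and makes the bound sharper than a naive $(n!)^k$ estimate, so I would make sure both that the majority relation is genuinely invariant under permuting voters and that the estimate $n!+j\le 2\,n!$ holds throughout the product.
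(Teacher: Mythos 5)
Your proof is correct and takes essentially the same route as the paper's: both compare the $2^{\binom{n}{2}}$ labelled tournaments with the $\binom{n!+k-1}{k}$ multisets of $k$ linear orders, bound the latter by $(2\,n!)^k/k!$, and take logarithms. The only (harmless) differences are that you spell out the parity/padding step needed to pass from $\majdim(T)\le k$ to a profile of exactly $k$ voters, and you justify $n!+j\le 2\,n!$ via McGarvey's bound $\kmaj(n)\le n(n-1)$ where the paper invokes Fiol's bound.
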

	\begin{proof}
		If every tournament on $n$ vertices can be induced by $k$ voters, then for every $T \in \Tn$, there needs to be at least one  $k$-voter profile that induces $T$. 
		There are $n!$ possible preference orders over $n$ alternatives, and---when ignoring the identities of voters---the number of $k$-voter profiles is $\binom{n! +k -1}{k}$. Also, the number of labeled tournaments on~$n$ vertices is $2^{\binom{n}{2}}$ implying that
		\[
			2^{\binom{n}{2}} \leq \binom{n! +k -1}{k} \leq \frac{(2(n!))^k}{k!}
		\]
		where the last inequality follows from \citeauthor{Fiol92a}'s bound stated above. The result follows immediately.
	\end{proof}

	Using the lemma, we can search for an upper bound on the majoritarian expressiveness $\majexpr(k)$ for a given~$k$ by finding the minimal~$m$ such that \eqref{eq:majdim-necessary} is violated. \tabref{tab:bounds} shows some upper bounds for small $k$. For example, there exists a tournament of size~$42$ that is not $5$-inducible.\footnote{A slightly tighter analysis even gives the existence of such a tournament of size~$41$.}

	\begin{table}[th]
	\centering
	\begin{tabular}{ ccccccccccc }
	\toprule
		$k$ & $3$ & $5$ & $7$ & $9$ & $11$ & $13$ & $15$ & $17$ & $19$ & $21$\\
	\midrule 
	$\majexpr(k)$ & $18$ & $41$ & $66$ & $93$ & $122$ & $152$ & $183$ & $216$ & $249$ & $282$\\
	\bottomrule
	\end{tabular}
	\caption{Upper bounds on the size $\majexpr(k)$ of the smallest tournament that is not $k$-inducible for small odd $k$.}
	\label{tab:bounds}
	\end{table}

	It is clear, however, that these bounds are not tight. For example, the results in the table imply there has to exist a tournament of size~$19$ that is not $3$-inducible. 
	In fact, \citet{ShTo09a} proved that every tournament that contains a certain digraph of size $8$ as a subgraph is not $3$-inducible. In \secref{sub:exhaustive_analysis}, we will argue that there are no smaller tournaments with this property. An example of such a tournament is shown in \figref{fig:not-3inducible}.

	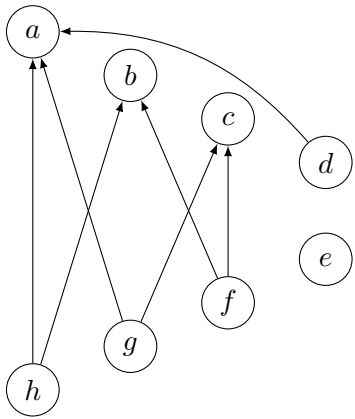
\begin{figure}
	\centering
		\begin{tikzpicture}[]
			\node[vertex] (a) {$a$};
			\node[vertex] (b) [right=of a,yshift=-\nd] {$b$};
			\node[vertex] (c) [right=of b,yshift=-\nd] {$c$};
			\node[vertex] (d) [right=of c,yshift=-\nd] {$d$};
			\node[vertex] (e) [below=of d            ] {$e$};
			\node[vertex] (f) [left=of  e,yshift=-\nd] {$f$};
			\node[vertex] (g) [left=of  f,yshift=-\nd] {$g$};
			\node[vertex] (h) [left=of  g,yshift=-\nd] {$h$};
			\draw[edge] (d) to[bend right=25] (a);
			\foreach \x / \y in {h/a, g/a, h/b, f/b, g/c, f/c}
			{
				\draw[edge] (\x) to (\y);
			}
		\end{tikzpicture}
		\caption{A tournament on~$8$ vertices with majority dimension~$5$. This is a smallest tournament that cannot be induced by three voters. Omitted arcs point downwards.}
		\label{fig:not-3inducible}
	\end{figure}
 
	Using an argument by \citet{Alon06a}, one can construct, for any odd $k$, concrete tournaments that are not $k$-inducible. 
	Unfortunately, these tournaments turn out to be very large. 
	
	\begin{example}
		Consider the \emph{quadratic residue tournaments} $Q_p$ of size $p$.\footnote{$Q_p=(V,E)$ with $V=(v_1,\ldots,v_p)$ and $(v_i, v_j)\in E$ if and only if $(i-j)^{\frac{p-1}{2}} \equiv 1 \mod p$.}
		If $p$ is a prime congruent to $3 \pmod 4$ and $p > 603,979,776$ then $Q_p$ is not $5$-inducible. 
	\end{example}
	
	The argument by \citet{Alon06a} runs along the following lines. 
	A \emph{dominating set} of a digraph $G = (V,E)$ is a set $U \subseteq V$ such
	that for all $v \in V \setminus U$, there exists a $u \in U$ with $(u,v) \in E$. 
	\citet{Alon06a} showed that the size of the smallest dominating set of any $k$-majority
	digraph for odd $k$ is bounded from above by a function $\mathcal{F}(k)$ with
	$\mathcal{F}(k) \in \bigo(k \log k)$ and $\mathcal{F}(k) \in \Omega
	(\frac{k}{\log k})$ with rather large constants hidden in the Landau
	notation ($80$ for the upper bound). This means that if a given tournament $T$
	does not have a dominating set of size $\mathcal{F}(k)$, then $T$ is not
	$k$-inducible.

	This can be leveraged to construct a tournament that is not $k$-inducible due
	to the following constructive result by \citet{GrSp71a}. Let $f(x) = p
	> x^2 2^{2x-2}$ where $x$ is a positive integer and $p$ is the smallest prime
	congruent to $3 \pmod 4$ satisfying the inequality (the construction works for
	any such $p$). Then, the quadratic residue tournament $Q_p$ of size $p$
	does not exhibit a dominating set of size~$x$.
	
	Together, this yields, for any odd $k$, a construction for a tournament on $(f \circ
	\mathcal{F})(k)$ vertices that is not $k$-inducible. 
	Unfortunately, $f(x)$ is exponential in $x$, and the value of $\mathcal{F}(k)$
	is known precisely only for $k = 3$ where we have $\mathcal{F}(3) = 3$. To the best of our knowledge, the best currently available bound for $k=5$ is $\mathcal{F}(5) \leq 12$ \citep{Fidl11a}. Together, we get that for the smallest (or any
	other) prime $p$ congruent to $3 \pmod 4$ satisfying the inequality
	\[
	p > 12^2 \cdot 2^{2\cdot 12 -2} = 603,979,776
	\]
	$Q_p$ is not $5$-inducible. 
		
	Bounds on $\mathcal{F}(k)$ for larger odd $k$ give significantly worse values: for $7$ voters, we only know that $\mathcal{F}(7)
	\leq 44$ \citep{Fidl11a}.


\section{Majority Digraphs of Few Voters} 
\label{sec:majority_relations_of_few_voters}

	In this section, we analyze the structure of $k$-inducible digraphs for constant~$k$. Building on earlier work by \citet{DuMi41a} which entails a characterization of $2$-majority digraphs, we give a characterization for the case of three voters. In addition, we present sufficient conditions for larger majority dimensions that will be leveraged in \secref{sec:hardness_of_voting_with_few_voters}.

	\subsection{Two and Three Voters}

	Given a preference profile~$R$, the \emph{Pareto relation} holds between two alternatives~$v$ and~$w$ if all voters prefer~$v$ over~$w$. \citet{DuMi41a} specified sufficient and necessary conditions for relations to be induced as the Pareto relation of a $2$-voter profile.  As for two voters the majority relation and the Pareto relation coincide, we can rephrase their result for majority digraphs as follows.

	\begin{lemma}[\citealp{DuMi41a}] \label{lem:2-voter-char}
		A majority digraph $(V,E)$ is $2$-inducible if and only if it is transitive and its incomparability graph $(V,\incompart E)$ is transitively orientable. Moreover, the weight of every arc is~$2$.
	\end{lemma}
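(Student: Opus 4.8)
The plan is to exploit the fact, noted in the text preceding the lemma, that for two voters the majority relation coincides with the Pareto relation. Concretely, $(V,E)$ is $2$-inducible if and only if $E=R_1\cap R_2$ for two linear orders $R_1,R_2$ on $V$; and whenever $(v,w)\in E$ both voters rank $v$ above $w$, so the margin is $2-0=2$, which immediately yields the ``Moreover'' clause. It therefore suffices to characterise when $E$ is the intersection of two linear orders---the order-dimension-at-most-$2$ condition underlying the cited Dushnik--Miller theory---and I would prove both directions directly.

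For necessity, suppose $E=R_1\cap R_2$. Since each $R_i$ is transitive and an intersection of transitive relations is transitive, $E$ is transitive. To orient $\incompart E$, observe that $(v,w)\in\incompart E$ holds exactly when the two voters disagree on $\set{v,w}$: if they agreed, the commonly preferred alternative would dominate in $E$. I would orient each such pair according to the first voter, setting $E'=\set{(v,w)\in\incompart E \midd v\mathrel{R_1}w}$; this is asymmetric and complete on $\incompart E$ because $R_1$ is a linear order. Transitivity of $E'$ then follows from transitivity of $R_1$ (to place the arc) together with transitivity of $R_2$ and the disagreement property (to certify that the resulting pair is again incomparable), so $\incompart E$ is transitively orientable.

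For sufficiency, let $E'$ be a transitive asymmetric orientation of $\incompart E$ (so $\bothwayedges{E'}=\incompart E$) and set $R_1=E\cup E'$ and $R_2=E\cup\overline{E'}$. Because every pair of distinct vertices is either $E$-comparable or lies in $\incompart E$, and the two summands live on disjoint pairs, each $R_i$ is complete and asymmetric, and $R_1\cap R_2=E$. The remaining task---and the main obstacle---is to show that $R_1$ (and symmetrically $R_2$) is transitive. The pure cases, where two consecutive arcs both come from $E$ or both from $E'$, are immediate from transitivity of $E$ and of $E'$ respectively. The crux is a mixed chain $(u,v)\in E$, $(v,w)\in E'$ (and its mirror image): here I would argue by cases on the pair $\set{u,w}$, showing that $(w,u)\in E$ is impossible, since transitivity of $E$ would then force a comparability on the incomparable pair $\set{v,w}$, and that orienting an incomparable pair $\set{u,w}$ as $(w,u)\in E'$ is impossible, since transitivity of $E'$ would then force an $E'$-arc on the $E$-comparable pair $\set{u,v}$. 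The only surviving possibility places $(u,w)$ in $R_1$. This establishes transitivity, so $R_1,R_2$ are linear orders whose intersection is $E$, completing the proof.
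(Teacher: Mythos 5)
Your proposal is correct, and it actually supplies more than the paper does: the paper never proves this lemma, but attributes it to Dushnik and Miller, recording only the observation that for two voters the majority relation coincides with the Pareto relation $R_1\cap R_2$ (so $2$-inducibility is exactly representability of $E$ as the intersection of two linear orders), and then, in the discussion following the lemma, states the shape of the decomposition $R_1=E\cup E'$, $R_2=E\cup\overline{E'}$ without verifying anything. Your construction is exactly that decomposition, so the route is the same; what you add is the verification that the paper outsources to the citation. Concretely: (i) in the necessity direction, that orienting the incomparable pairs according to the first voter gives a \emph{transitive} orientation of $(V,\incompart E)$, where transitivity of $R_2$ is needed to certify that the composed pair is again incomparable; and (ii) in the sufficiency direction, the mixed-chain case analysis showing that a chain $(u,v)\in E$, $(v,w)\in E'$ can be closed neither by $(w,u)\in E$ (transitivity of $E$ would make $\set{v,w}$ comparable) nor by $(w,u)\in E'$ (transitivity of $E'$ would put an $E'$-arc on the comparable pair $\set{u,v}$). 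These two exclusions are the real content of the Dushnik--Miller characterization, and both are argued soundly; the weight claim is immediate since an arc under two voters requires unanimity. In short, your write-up makes the lemma self-contained where the paper relies on the reference.
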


	See \figref{fig:non2voters} for an example of a digraph that is not $2$-inducible even though it is transitive. If it was $2$-inducible, there would have to exist a transitive reorientation $E'$ of $\incompart{E}$. We can assume without loss of generality that $(b,d)\in E'$. But then $(a,d)$ and $(e,b)$ also have to be in $E'$ leaving no possibility to orient $\{a,e\}$ without obtaining a contradiction to the assumed transitivity of~$E'$.

	\begin{figure}[htb]
	\centering
	\subfigureCMD{This digraph is not $2$-inducible. Dotted arcs denote the incomparability graph.}{fig:non2voters}{
		\begin{tikzpicture}[auto,circle]
			\draw[use as bounding box,opacity=0] (-1,-2.5) rectangle (5,1.5);
			\node[draw,vertex] (1) at (0,0) {a};
			\node[draw,vertex] (2) at (1,-1) {b};
			\node[draw,vertex] (3) at (2,-2) {c};
			\node[draw,vertex] (4) at (4,0) {e};
			\node[draw,vertex] (5) at (3,-1) {d};
			\node[draw,vertex] (6) at (2,1) {f};
			\foreach \x / \y in {1/2, 2/3, 4/5, 5/3}
				\draw[edge] (\x) -- (\y);
			\draw[edge] (1) to [bend left=10] (6);
			\draw[edge] (4) to [bend right=10] (6);
			\draw[edge] (1) to [bend right=25] (3);
			\draw[edge] (4) to [bend left=25] (3);
			\foreach \x / \y in {1/4,1/5,2/4,2/5,3/6,2/6,5/6}
				\draw[dotted] (\x) -- (\y);
		\end{tikzpicture}
	}
	\hspace{2em}
	\subfigureCMD{Every forest of unidirected stars is $2$-inducible.}{fig:forest-of-stars}{
		\begin{tikzpicture}[auto,circle,draw]			]
			\node[draw,smallvertex] (1) at (1.2,2.0) {};
			\node[draw,smallvertex] (2) at (0,0) {};
			\node[draw,smallvertex] (3) at (2.3,0.5) {};
			\foreach \centerNode/\startAngle/\stepAngle/\stopAngle/\arrowDirection in {1/0/40/320/-latex, 2/10/60/310/invedge, 3/20/72/308/edge}
			{
				\pgfmathparse{int(\startAngle+\stepAngle)} 
				\foreach \angle in {\startAngle, \pgfmathresult, ..., \stopAngle}
				{
					\node[verysmallvertex] (\centerNode\angle) at ($(\centerNode) + (\angle:.7cm)$) {};
					\draw[\arrowDirection] (\centerNode) to (\centerNode\angle);
				}
			}
		\end{tikzpicture}
	}
	\caption{Examples of transitive digraphs.}
	\label{fig:2voter-examples}
	\end{figure}

	If, on the other hand, a digraph~$(V,E)$ is in fact induced by a $2$-voter profile $(R_1,R_2)$, then~$R_1$ and~$R_2$ coincide on~$E$ and are opposed on $\incompart{E}$, \ie $R_1\cap R_2=E$. As $R_1$ and $R_2$ are both transitive, so is $E$. If $E'$ is the respective reorientation of~$\incompart{E}$, then $R_1=E\cup E'$ and $R_2=E\cup\overline {E'}$, or \emph{vice versa}. 

	A digraph $(V,E)$ is an \emph{unidirected star} if there is some $v^*\in V$ such that either $E$ or $\overline E$ equals $\set{v^*}\times \left(V\setminus\set{v^*}\right)$.
	Clearly, $(V,E)$ is transitive as there are no $v,w,u\in V$ such that both $(v,w),(w,u)\in E$. Moreover, every transitive relation over the leaves of $(V,E)$ serves as a transitive orientation of~$\incompart E$. With \lemref{lem:2-voter-char} this gives us the following result, which is a special case of Lemma 1 by \citet{ErMo64a}.

	\begin{lemma}
	\label{lem:stars-2-inducible}
	Every unidirected star is $2$-inducible.
	\end{lemma}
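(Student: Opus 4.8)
The plan is to apply the two-voter characterization of \lemref{lem:2-voter-char}, which says that a digraph is $2$-inducible exactly when it is transitive and its incomparability graph $(V,\incompart E)$ is transitively orientable. Both conditions turn out to be easy to check for a star, so the whole argument reduces to verifying them. Before doing so, I would record a harmless symmetry: a profile $(R_1,R_2)$ induces $(V,E)$ precisely when the reversed profile $(\overline{R_1},\overline{R_2})$ induces $(V,\overline E)$, and each $\overline{R_i}$ is again a linear order. Since the defining property of a unidirected star is phrased disjunctively in terms of $E$ and $\overline E$, this lets me assume without loss of generality that all arcs point outward from the centre, i.e.\ $E=\set{v^*}\times(V\setminus\set{v^*})$.

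For transitivity, the key observation is that in this outward-pointing star every arc has $v^*$ as its tail, so there are no vertices $v,w,u$ with $(v,w),(w,u)\in E$: such a $w$ would have to be simultaneously a leaf and the centre. The transitivity condition $(v,w),(w,u)\in E\Rightarrow(v,u)\in E$ is therefore satisfied vacuously. For the incomparability graph, note that $v^*$ is joined by an arc to every other vertex and hence is comparable to all of them; the only incomparable pairs are pairs of distinct leaves. Thus $(V,\incompart E)$ is the complete graph on $V\setminus\set{v^*}$ with $v^*$ as an isolated vertex, and fixing any linear order on the leaves and orienting each edge accordingly produces a transitive, asymmetric reorientation of $\incompart E$. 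Combining these two facts with \lemref{lem:2-voter-char} yields the claim.

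I do not expect a genuine obstacle here, since the statement is essentially immediate once the characterization is available. The only two points that require a little care are the reduction to the outward-pointing case via converses and the remark that the centre contributes no incomparabilities, so that an \emph{arbitrary} total order on the leaves supplies the transitive orientation demanded by \lemref{lem:2-voter-char}.
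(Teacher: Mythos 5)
Your proposal is correct and follows essentially the same route as the paper: verify the two conditions of \lemref{lem:2-voter-char}, namely that the star is (vacuously) transitive and that any linear order on the leaves transitively orients the incomparability graph. The only difference is your initial reduction to the outward-pointing case via reversing both voters' orders, which is valid but unnecessary, since the vacuous-transitivity argument and the description of $\incompart E$ hold regardless of whether the arcs point toward or away from the centre.
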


	Another insight that follows from \lemref{lem:2-voter-char}, is that the union of pairwise disjoint digraphs that are induced by $2$-voter profiles is also induced by a $2$-voter profile.

	\begin{lemma}\label{lem:trans-disjoint-rels}
		Let $V_1,\dots,V_k$ be pairwise disjoint and $(V_1,E_1),\dots,(V_k,E_k)$ $2$-inducible majority digraphs. Then, $(V_1\cup\dots\cup V_k,E_1\cup\dots\cup E_k)$ is also $2$-inducible.
	\end{lemma}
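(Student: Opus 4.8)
The plan is to exhibit an explicit $2$-voter profile inducing the disjoint union directly, rather than routing through the characterisation in \lemref{lem:2-voter-char}. For each $i$, since $(V_i,E_i)$ is $2$-inducible, fix a profile $(R_1^i,R_2^i)$ inducing it; as observed in the discussion following \lemref{lem:2-voter-char}, the two voters then agree exactly on $E_i$ and are opposed on $\incompart{E_i}$, i.e. $R_1^i\cap R_2^i=E_i$.

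First I would build two global orders by stacking the blocks $V_1,\dots,V_k$, but with the crucial twist that the two voters stack them in \emph{opposite} orders. Concretely, let $R_1$ rank the blocks as $V_1\succ V_2\succ\dots\succ V_k$ (all of $V_1$ above all of $V_2$, and so on), ordering the alternatives inside block $V_i$ by $R_1^i$; let $R_2$ rank the blocks in reverse, $V_k\succ V_{k-1}\succ\dots\succ V_1$, ordering inside $V_i$ by $R_2^i$. Since each $R_1^i$ and $R_2^i$ is a linear order and the $V_i$ are pairwise disjoint, both $R_1$ and $R_2$ are genuine linear orders on $V=V_1\cup\dots\cup V_k$.

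It then remains to verify that this profile induces $(V,E)$, equivalently (for two voters) that $R_1\cap R_2=E_1\cup\dots\cup E_k$. For a pair $v,w$ lying in the same block $V_i$, the restrictions of $R_1$ and $R_2$ to $V_i$ are $R_1^i$ and $R_2^i$, so the voters agree on $\{v,w\}$ exactly when $R_1^i$ and $R_2^i$ do; these pairs contribute precisely $E_i$. For a cross pair $v\in V_i$, $w\in V_j$ with $i<j$, voter $1$ ranks $v$ above $w$ while voter $2$ ranks $w$ above $v$, so the voters \emph{disagree} and the pair is a majority-tie—exactly what is needed, since $E$ contains no arcs between distinct blocks. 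Hence $R_1\cap R_2=\bigcup_i E_i=E$, and the profile induces $(V,E)$, which is therefore $2$-inducible.

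The one point needing care—and the only place the argument could go wrong—is the orientation of the cross pairs: naively stacking the blocks in the \emph{same} order for both voters would make every cross pair Pareto-dominated and thus create spurious arcs between blocks. Reversing the block order for the second voter is precisely what forces all cross pairs to become majority-ties; everything else is bookkeeping, using only that a concatenation of linear orders is a linear order and that disjointness prevents any interaction between the within-block relations. (Alternatively, one could argue via \lemref{lem:2-voter-char}: transitivity of $E$ is immediate from disjointness, since any path $(x,y),(y,z)\in E$ forces $x,y,z$ into a common block; and a transitive orientation of the incomparability graph is obtained by orienting each within-block incomparability edge as in that block's own transitive orientation and every cross edge from the lower- to the higher-indexed block, the same block-ordering trick making the combined orientation transitive.)
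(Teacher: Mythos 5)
Your proof is correct, and its main argument takes a genuinely different (more elementary) route than the paper's. The paper stays entirely inside the Dushnik--Miller characterization (\lemref{lem:2-voter-char}): it verifies that $E_1\cup\dots\cup E_k$ is transitive, exhibits a transitive reorientation of the incomparability graph, namely $E'_1\cup\dots\cup E'_k\cup E^*$ with $E^*=\bigcup_{1\le i<j\le k}(V_i\times V_j)$, and then invokes the characterization a second time to conclude $2$-inducibility. You instead construct the witnessing $2$-voter profile explicitly---concatenate the block profiles, with the second voter stacking the blocks in reverse order---and verify the majority relation by hand, using only the elementary fact that for two voters the majority relation is $R_1\cap R_2$. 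Your identification of the one delicate point is exactly right: reversing the block order is what turns every cross pair into a majority tie, whereas stacking identically would create spurious cross-block arcs. In substance the two arguments produce the same object: your first voter is $E\cup E'_1\cup\dots\cup E'_k\cup E^*$ and your second is $E\cup\overline{E'_1}\cup\dots\cup\overline{E'_k}\cup\overline{E^*}$, which is precisely the profile the paper's reorientation encodes via the discussion following \lemref{lem:2-voter-char}. What your route buys is self-containedness and an explicit profile (which is what downstream applications, \eg the proof of \thmref{thm:RP-NP-even}, actually need); what the paper's route buys is brevity once \lemref{lem:2-voter-char} is available, keeping all bookkeeping at the level of arc sets rather than profiles. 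Your closing parenthetical alternative is, essentially verbatim, the paper's own proof.
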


\begin{proof}
	Let $V=V_1\cup\dots\cup V_k$ and $E=E_1\cup\dots\cup E_k$ and consider the digraph $(V,E)$.
	Since each of $(V_1,E_1),\dots,(V_k,E_k)$ is $2$-inducible, by \lemref{lem:2-voter-char}, each of $E_1,\dots,E_k$ is transitive and each of $\incompart{E}_1,\dots,\incompart E_k$ is transitively orientable. Let $E'_1,\dots,E'_k$ be the respective transitive reorientations of $\incompart{E}_1,\dots,\incompart E_k$. Since $V_1,\dots,V_k$ are pairwise disjoint, $E_1\cup\dots\cup E_2$ can readily be seen to be transitive as well. Let furthermore $E^*=\bigcup_{1\le i<j\le k}(V_i\times V_j)$. 
	Observe that  $\incompart E=\incompart E_1\cup\dots\cup\incompart{E}_k\cup \incompart{E^*}$ and that $E'_1\cup\dots\cup E'_k\cup E^*$ is a transitive reorientation of $\incompart E$. 
	The claim then follows by another application of \lemref{lem:2-voter-char}. 
	\end{proof}

	Consequently, every forest of (unidirected) stars such as the one shown in \figref{fig:forest-of-stars} is $2$-inducible.\footnote{\label{foot:bilevel}\citet{ErMo64a} gave a class of digraphs that are $2$-inducible which they call \emph{bilevel graphs}. A bilevel graph is the union of a finite number of vertex-disjoint digraphs $(V_1,E_1),(V_2,E_2), \cdots$ such that each $(V_i,E_i)$ is complete bipartite and unidirected, \ie there is a partition into vertex sets $V_{i,1}, V_{i,2}$ such that $E_i=V_{i,1}\times V_{i,2}$.}

	Apart from a family of tournaments of order eight that are \emph{not} $k$-inducible \citep{ShTo09a}, little was known about $3$-majority digraphs. In a much similar vein as \lemref{lem:2-voter-char}, we now provide a characterization of these digraphs.

	\begin{lemma} \label{3-voter-char}
		A tournament $(V,E)$ is $3$-inducible if and only if 
		there are {disjoint} sets $E_1,E_2$ with $E=E_1\cup E_2$ such that $E_1$ is transitive and $E_2$ is both acyclic and transitively reorientable.
		Then, the weight of every arc in~$E_1$ is either~$1$ or~$3$ and that of each arc in~$E_2$ is~$1$.
	\end{lemma}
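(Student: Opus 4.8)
The plan is to reduce both directions to the two-voter characterization \lemref{lem:2-voter-char}, exploiting one key identity. Since $(V,E)$ is a tournament and $E=E_1\cup E_2$ is a \emph{disjoint} union, the arc sets $E_1$ and $E_2$ partition the unordered pairs, so the incomparability graph of $E_1$ is exactly $\incompart{E_1}=\bothwayedges{E_2}$. Consequently the phrase ``$E_1$ is transitive and $\incompart{E_1}$ is transitively orientable'' coincides with ``$E_1$ is transitive and $E_2$ is transitively reorientable,'' which is precisely the form of \lemref{lem:2-voter-char}. The governing intuition is that voters $1$ and $2$ will realize $E_1$ as their Pareto (equivalently, two-voter majority) relation, while voter $3$ breaks exactly the ties they leave open, and those open pairs are precisely those underlying $E_2$.

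For the ``if'' direction I would start from such a decomposition. As $\incompart{E_1}=\bothwayedges{E_2}$ is transitively orientable and $E_1$ is transitive, \lemref{lem:2-voter-char} supplies a two-voter profile $(R_1,R_2)$ whose majority relation is $E_1$; equivalently $R_1\cap R_2=E_1$, so $R_1$ and $R_2$ agree exactly on the pairs of $E_1$ and disagree on every pair of $\bothwayedges{E_2}$. Because $E_2$ is acyclic it admits a linear extension, and I take $R_3$ to be any linear order with $E_2\subseteq R_3$. It then remains to verify that $(R_1,R_2,R_3)$ induces $E$: on a pair in $E_1$ both $R_1$ and $R_2$ point the $E_1$-way, yielding a majority (weight $3$ if $R_3$ agrees, weight $1$ otherwise); on a pair in $E_2$ voters $1$ and $2$ split $1$-$1$ and $R_3$ decides in the $E_2$-direction, giving weight $1$. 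Since $3$ is odd there are no ties, so the induced majority relation is exactly $E$, and the weight claims fall out of this same case split.

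For the ``only if'' direction, suppose $(R_1,R_2,R_3)$ induces $E$ and set $E_1:=R_1\cap R_2$ and $E_2:=E\setminus E_1$. Any pair on which voters $1$ and $2$ agree is already a majority arc, so $E_1\subseteq E$, and $(V,E_1)$ is nothing but the majority digraph of the two-voter profile $(R_1,R_2)$; \lemref{lem:2-voter-char} then immediately gives that $E_1$ is transitive and that $\incompart{E_1}=\bothwayedges{E_2}$ is transitively orientable, i.e.\ $E_2$ is transitively reorientable. Acyclicity of $E_2$ I would argue separately: for $(v,w)\in E_2$ the two votes required for $(v,w)\in E$ cannot come from voters $1,2$ alone, so they must consist of voter $3$ together with exactly one of $\{1,2\}$; in particular $(v,w)\in R_3$, whence $E_2\subseteq R_3$ is a subrelation of a linear order and therefore acyclic.

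The one genuinely delicate point—and the step I would settle first—is the choice of decomposition: the statement does not prescribe which weight-$1$ arcs lie in $E_1$ versus $E_2$, and an arbitrary split of $E$ into a transitive part and an acyclic-plus-reorientable part need not arise from any profile. Anchoring $E_1$ to the agreement relation $R_1\cap R_2$ of two designated voters, so that $\incompart{E_1}=\bothwayedges{E_2}$ matches the incomparability graph in \lemref{lem:2-voter-char} \emph{exactly}, is what makes both directions close. The small supporting checks on which everything hinges are the identity $\incompart{E_1}=\bothwayedges{E_2}$ and the fact that $R_1\cap R_2=E_1$ forces voters $1$ and $2$ to genuinely disagree on every pair underlying $E_2$.
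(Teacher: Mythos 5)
Your proof is correct and follows essentially the same route as the paper's: both directions reduce to \lemref{lem:2-voter-char} via the identity $\incompart{E_1}=\bothwayedges{E_2}$, with $R_3$ a linear extension of $E_2$ in the if-direction and $E_1$ taken as the two-voter majority (agreement) relation of $(R_1,R_2)$ in the only-if direction. The only cosmetic difference is that you define $E_2:=E\setminus E_1$ and then prove $E_2\subseteq R_3$, whereas the paper defines $E_2:=R_3\cap((V\times V)\setminus\bothwayedges{E_1})$ directly; these are the same set, so the arguments coincide.
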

	\begin{proof}
		For the if-direction, assume that there are {disjoint} sets $E_1,E_2$ with $E=E_1\cup E_2$ such that $E_1$ is transitive and $E_2$ is both acyclic and transitively reorientable. Consider the digraph $(V,E_1)$ and observe that for the corresponding incomparability graph $(V,\incompart E_1)$, $\incompart E_1=\bothwayedges{E_2}$. It follows that $\incompart E_1$ is transitively orientable and, by \lemref{lem:2-voter-char}, that $(V,E_1)$ is induced by a $2$-voter profile~$(R_1,R_2)$ and that all arcs in $E_1$ have weight~$2$. As~$E_2$ is acyclic, there is a (strict) preference relation~$R_3$ with $E_2\subseteq R_3$. Now consider the majority digraph induced by the preference profile $(R_1,R_2,R_3)$, which apparently coincides with $(V,E)$. $E_1$ is determined by~$R_1$ and~$R_2$ and each of its arcs  obtains weight~$1$ or~$3$ depending on whether~$R_3$ agrees with both~$R_1$ and~$R_2$ or not. Moreover, $E_2$ is determined by~$R_3$, as~$R_1$ and~$R_2$ can be assumed to specify contrary preferences on this part. 
		
		For the only-if-direction, assume that $(V,E)$ is the majority digraph induced by the $3$-voter profile  $(R_1,R_2,R_3)$. Let furthermore~$(V,E_1)$ be the majority digraph induced  by $(R_1,R_2)$ and $E_2=R_3\cap((V\times V)\setminus\bothwayedges{E_1})$. By \lemref{lem:2-voter-char}, $(V,E_1)$ is transitive and $\incompart{E_1}$ is transitively (re)orientable, where $(V,\incompart{E_1})$ is the incomparability graph of $(V,E_1)$.
		Since~$R_3$ is transitive (and strict) $E_2$ is obviously acyclic. Observe furthermore that
		$\bothwayedges{R_3\cap((V\times V)\setminus\bothwayedges{E_1})}=\bothwayedges{\incompart{E}_1}$. It follows that~$E_2$ is transitively reorientable. 
	\end{proof}

In order to illustrate \thmref{3-voter-char}, again consider the introductory example given in \figref{fig:digraph}. This digraph is $3$-inducible because its arc set can be partitioned into a transitive part and an acyclic and transitively reorientable part (see \figref{fig:3voter-example})

\begin{figure}[htb]
  \centering
\begin{minipage}{9.5em}
\begin{tikzpicture}[auto,scale=1]
	\draw[use as bounding box,opacity=0] (-1.3,-2) rectangle (1.3,1.5);
  \node[fill=white,circle,draw,minimum size=1.5em,inner sep=0pt] (1) at (162:1.2) {$a$};
  \node[fill=white,circle,draw,minimum size=1.5em,inner sep=0pt] (2) at (90: 1.2) {$b$};
  \node[fill=white,circle,draw,minimum size=1.5em,inner sep=0pt] (3) at (18: 1.2) {$c$};
  \node[fill=white,circle,draw,minimum size=1.5em,inner sep=0pt] (4) at (306:1.2) {$d$};
  \node[fill=white,circle,draw,minimum size=1.5em,inner sep=0pt] (5) at (234:1.2) {$e$};
  \draw[-latex] (4) to (1);
  \draw[latex-] (4) to (2);
  \draw[latex-] (4) to (3);
  \draw[latex-] (1) to (5);
  \draw[latex-] (2) to (5);
  \draw[-latex] (3) to (5);
  \draw[-latex] (4) to (5);
  \draw[-latex] (1) to (2);
  \draw[-latex] (2) to (3);
  \draw[latex-] (3) to (1);
\end{tikzpicture}
\end{minipage}
\begin{minipage}{9.5em}
\begin{tikzpicture}[auto,scale=1]
	\draw[use as bounding box,opacity=0] (-1.3,-2) rectangle (1.3,1.5);
	\draw(0,-1.77) node[scale=.85](){$E_1$ is transitive.};
  \node[fill=white,circle,draw,minimum size=1.5em,inner sep=0pt] (1) at (162:1.2) {$a$};
  \node[fill=white,circle,draw,minimum size=1.5em,inner sep=0pt] (2) at (90: 1.2) {$b$};
  \node[fill=white,circle,draw,minimum size=1.5em,inner sep=0pt] (3) at (18: 1.2) {$c$};
  \node[fill=white,circle,draw,minimum size=1.5em,inner sep=0pt] (4) at (306:1.2) {$d$};
  \node[fill=white,circle,draw,minimum size=1.5em,inner sep=0pt] (5) at (234:1.2) {$e$};
  \draw[-latex] (4) to (5);
  \draw[-latex] (1) to (2);
  \draw[-latex] (2) to (3);
  \draw[latex-] (3) to (1);
\end{tikzpicture}
\end{minipage}
\begin{minipage}{9.5em}
\begin{tikzpicture}[auto,scale=1]
	\draw[use as bounding box,opacity=0] (-1.3,-2) rectangle (1.3,1.5);
	\draw(0,-1.77) node[scale=.85](){
	\parbox{12em}{\centering $E_2$ is acyclic and}};
  \node[fill=white,circle,draw,minimum size=1.5em,inner sep=0pt] (1) at (162:1.2) {$a$};
  \node[fill=white,circle,draw,minimum size=1.5em,inner sep=0pt] (2) at (90: 1.2) {$b$};
  \node[fill=white,circle,draw,minimum size=1.5em,inner sep=0pt] (3) at (18: 1.2) {$c$};
  \node[fill=white,circle,draw,minimum size=1.5em,inner sep=0pt] (4) at (306:1.2) {$d$};
  \node[fill=white,circle,draw,minimum size=1.5em,inner sep=0pt] (5) at (234:1.2) {$e$};
  \draw[-latex] (4) to (1);
  \draw[latex-] (4) to (2);
  \draw[latex-] (4) to (3);
  \draw[latex-] (1) to (5);
  \draw[latex-] (2) to (5);
  \draw[-latex] (3) to (5);
\end{tikzpicture}
\end{minipage}
\begin{minipage}{9.5em}
\begin{tikzpicture}[auto,scale=1]
	\draw[use as bounding box,opacity=0] (-1.3,-2) rectangle (1.3,1.5);
	\draw(0,-1.77) node[scale=.85](){
	\parbox{12em}{\centering transitively reorientable.}};
  \node[fill=white,circle,draw,minimum size=1.5em,inner sep=0pt] (1) at (162:1.2) {$a$};
  \node[fill=white,circle,draw,minimum size=1.5em,inner sep=0pt] (2) at (90: 1.2) {$b$};
  \node[fill=white,circle,draw,minimum size=1.5em,inner sep=0pt] (3) at (18: 1.2) {$c$};
  \node[fill=white,circle,draw,minimum size=1.5em,inner sep=0pt] (4) at (306:1.2) {$d$};
  \node[fill=white,circle,draw,minimum size=1.5em,inner sep=0pt] (5) at (234:1.2) {$e$};
  \draw[latex-] (4) to (1);
  \draw[latex-] (4) to (2);
  \draw[latex-] (4) to (3);
  \draw[-latex] (1) to (5);
  \draw[-latex] (2) to (5);
  \draw[-latex] (3) to (5);
\end{tikzpicture}
\end{minipage}
  \caption{A $3$-inducible majority digraph and its arc set partitioning into $E_1$ and $E_2$.}
  \label{fig:3voter-example}
\end{figure}
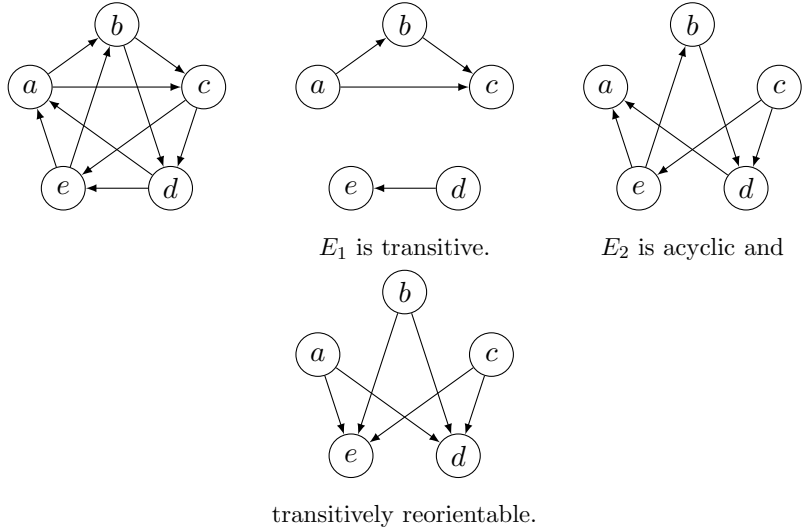

\subsection{More than Three Voters} 
\label{sub:more_than_three_voters}

	Extensions of these results provide useful sufficient conditions for a digraph to be induced by a constant larger number of voters. If the arc set of a digraph can be decomposed into pairwise orientation compatible sets that satisfy the conditions of \lemref{lem:2-voter-char}, the digraph is induced by a profile with two voters per set. 

	\begin{lemma} \label{lem:2n-voter-suff}
		Let  $(V,E_1),\dots,(V,E_k)$ be majority digraphs induced by $2$-voter profiles such that $E_1,\dots,E_k$ are pairwise orientation compatible. Then, $(V,E_1\cup\dots\cup E_k)$ is induced by a $2k$-voter profile.
	\end{lemma}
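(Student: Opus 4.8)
The plan is to realize each $(V,E_i)$ by its own pair of voters and then concatenate all these pairs into a single electorate of size $2k$, verifying that the resulting majority margins reproduce exactly $E_1\cup\dots\cup E_k$.

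First I would invoke the definition of $2$-inducibility together with \lemref{lem:2-voter-char}: for each $i$ there is a $2$-voter profile $(R_i^1,R_i^2)$ inducing $(V,E_i)$, and within this subprofile every arc of $E_i$ has weight exactly~$2$. Concretely, for any ordered pair $(v,w)$ the subprofile $(R_i^1,R_i^2)$ contributes a majority margin of $+2$ if $(v,w)\in E_i$, of $-2$ if $(w,v)\in E_i$, and of $0$ if $v$ and $w$ are incomparable in $E_i$ (the two voters being opposed on this pair). I then form the combined profile $R=(R_1^1,R_1^2,\dots,R_k^1,R_k^2)$ on $2k$ voters; by additivity of majority margins, the margin of $(v,w)$ in $R$ is the sum over $i$ of these per-subprofile contributions.

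The key step---and the one place where the hypotheses really bite---is to show that for a fixed pair $(v,w)$ the contributions never cancel, i.e.\ they are all nonnegative or all nonpositive. Suppose toward a contradiction that $(v,w)\in E_i$ while $(w,v)\in E_j$ for some $i,j$. Then $(v,w)$ lies in both $\bothwayedges{E_i}$ and $\bothwayedges{E_j}$, so orientation compatibility forces $(v,w)\in E_i$ if and only if $(v,w)\in E_j$; since $(v,w)\in E_i$ this yields $(v,w)\in E_j$, contradicting the asymmetry of the majority digraph $E_j$, which already contains $(w,v)$. Hence no pair receives contributions of both signs.

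It then remains only to read off the induced digraph. For each pair the total margin equals $2\cdot|\set{i\midd (v,w)\in E_i}|-2\cdot|\set{i\midd (w,v)\in E_i}|$, and by the previous paragraph at most one of these two counts is nonzero. Thus the margin of $(v,w)$ is strictly positive precisely when $(v,w)\in E_i$ for some $i$, i.e.\ precisely when $(v,w)\in E_1\cup\dots\cup E_k$. Therefore the majority digraph of the $2k$-voter profile $R$ is exactly $(V,E_1\cup\dots\cup E_k)$. I expect the whole argument to be routine once the non-cancellation observation is in place; that observation is the crux, and it is exactly what orientation compatibility was defined to guarantee.
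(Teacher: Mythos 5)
Your proof is correct and takes essentially the same approach as the paper's: decompose via \lemref{lem:2-voter-char} into $2$-voter profiles, concatenate them into a $2k$-voter profile, and use orientation compatibility to show that the per-pair contributions never cancel. Your margin-counting phrasing and the explicit contradiction with asymmetry are just a more quantitative rendering of what the paper argues directly in terms of the voters' preferences.
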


	\begin{proof}
		Let for each~$i$ with $1\le i\le k$, $(R^i_1,R^i_2)$ be a $2$-voter profile that induces $(V,E_i)$. By \lemref{lem:2-voter-char}, for every $(v,w)\in E_i$ we know that both $v\mathrel{R_1^i}w$ and $v\mathrel{R_2^i}w$ and for every $(v,w)\notin E_i$, $v\mathrel{R_1^i}w$ if and only if $w\mathrel{R_2^i}v$.  Now consider the preference profile
		$
			(R^1_1,R^1_2,\dots,R^k_1,R^k_2)
		$ and the majority digraph $(V,E)$ it induces. We argue that $E=E_1\cup\dots\cup E_k$. First assume that $(v,w)\in E_i$ for some~$i$ with $1\le i\le k$. Then, both $v\mathrel{R_1^i}w$ and $v\mathrel{R_2^i}w$. Since $E_1,\dots,E_k$ are pairwise orientation compatible, $(w,v)\in E_j$ for no~$j$ with $1\le j\le k$, \ie for all ~$j$ with $1\le j\le k$  either $v\mathrel{R_1^j}w$ and $v\mathrel{R_2^j}w$, or  $v\mathrel{R_1^j}w$ if and only if $w\mathrel{R_2^j}v$. It follows that a majority prefers $v$ over $w$ and thus $(v,w)\in E$. 
		Now assume that $(v,w)\in E_i$ for no~$i$ with $1\le i\le k$. Then for all~$i$ with $1\le i\le k$ either both $w\mathrel{R_1^i}v$ and $w\mathrel{R_2^i}v$ or $w\mathrel{R_1^j}v$ if and only if $v\mathrel{R_2^j}w$. It is easy to see that $v$ is not majority preferred to $w$, \ie $(v,w)\notin E$.
	\end{proof}

	Next, we show that a similar condition suffices for a digraph to be inducible by a given odd number of voters.%
\footnote{The if-direction of \lemref{3-voter-char} can also be obtained as a special case of this lemma.}

	\begin{lemma} \label{lem:2n+1-voter-suff}
		Let $(V,E)$ be a tournament and $(V,E_1),\dots,(V,E_k)$ be majority digraphs induced by $2$-voter profiles such that $E,E_1,\dots,E_k$ are orientation compatible. Let, moreover, $E_{k+1}\supseteq E\setminus(E_1\cup\dots\cup E_k)$ be acyclic. Then, $(V,E)$ is induced by a $2k+1$-voter profile.
	\end{lemma}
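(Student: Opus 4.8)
The plan is to exhibit an explicit $(2k+1)$-voter profile and check that its majority digraph equals $E$. First I would apply \lemref{lem:2-voter-char} to each $(V,E_i)$ with $1\le i\le k$: fix a $2$-voter profile $(R^i_1,R^i_2)$ inducing it, and recall that every arc of $E_i$ has weight~$2$, so on each arc of $E_i$ both voters agree, while on each pair in the incomparability graph $\incompart{E}_i$ they are opposed. Since $E_{k+1}$ is acyclic it extends to a linear order, so I can choose a strict preference $R_{2k+1}$ with $E_{k+1}\subseteq R_{2k+1}$. The candidate profile is $(R^1_1,R^1_2,\dots,R^k_1,R^k_2,R_{2k+1})$, which has exactly $2k+1$ voters.

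The heart of the argument is a margin computation for an arbitrary arc $(v,w)\in E$. Each block $(R^j_1,R^j_2)$ contributes $+2$ to the margin of $v$ over $w$ when $(v,w)\in E_j$, contributes $-2$ when $(w,v)\in E_j$, and contributes $0$ when $\{v,w\}$ lies in $\incompart{E}_j$. Orientation compatibility is exactly what rules out the harmful $-2$ case: as $(v,w)\in E$ and $E$ is orientation compatible with $E_j$, having $(w,v)\in E_j$ would put $(w,v)\in\bothwayedges E\cap\bothwayedges{E_j}$ and hence force $(w,v)\in E$, contradicting asymmetry of the tournament. Thus the first $2k$ voters together contribute the nonnegative even number $2\cdot|\{\,j\le k:(v,w)\in E_j\,\}|$.

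It then remains to distinguish two cases. If $(v,w)\in E_j$ for some $j\le k$, the first $2k$ voters contribute at least $+2$ and the last voter subtracts at most $1$, so the total margin is at least $1>0$. Otherwise $(v,w)\in E\setminus(E_1\cup\dots\cup E_k)\subseteq E_{k+1}\subseteq R_{2k+1}$, so the first $2k$ voters contribute $0$ while voter $2k+1$ contributes $+1$, again yielding margin $1>0$. Hence every arc of $E$ appears in the induced majority digraph. Because the profile has an odd number of voters, its majority digraph is itself a tournament on $V$; containing the tournament $E$, it must equal $E$, completing the proof.

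The step I expect to be most delicate is the bookkeeping around orientation compatibility in the second paragraph---namely verifying that no $2$-voter block can ever vote against an arc of $E$, so that the contributions of the first $2k$ voters are genuinely nonnegative, and confirming that an arc left uncovered by all of $E_1,\dots,E_k$ is decided in the correct direction by the single remaining voter via $E_{k+1}$. Once these two points are pinned down, the remaining parity and completeness reasoning is routine.
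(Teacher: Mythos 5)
Your proof is correct and follows essentially the same route as the paper: both construct the profile $(R^1_1,R^1_2,\dots,R^k_1,R^k_2,R^{k+1})$ with $R^{k+1}$ a linear extension of $E_{k+1}$, and verify it via the same weight bookkeeping (arcs covered by some $E_j$ retain margin at least $2-1=1$, uncovered arcs of $E$ are decided by the last voter with margin $1$). The only difference is presentational: the paper delegates the first $2k$ voters to \lemref{lem:2n-voter-suff} and asserts the conclusion directly, whereas you inline that margin computation from \lemref{lem:2-voter-char} and close with the tidy observation that an odd-voter majority digraph is a tournament, so containing the tournament $E$ forces equality.
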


	\begin{proof}
		In virtue of \lemref{lem:2n-voter-suff} we know that $(V,E_1\cup\dots\cup E_k)$ is induced by a $2k$-voter profile $(R^1_1,R^1_2,\dots,R^k_1,R^k_2)$. Inspection of the proof also reveals that every arc $(v,w)\in E_1\cup\dots\cup E_k$ has a positive even weight of at least two. As $E_{k+1}$ is acyclic and asymmetric, there is some (strict) preference relation $R^{k+1}$ with $E_{k+1}\subseteq R^{k+1}$. Moreover, since $E_{k+1}$ corresponds to only one voter and every arc in $E_1\cup\dots\cup E_k$ has a majority of at least two, $E_{k+1}$ does not have to be orientation compatible with any of $E_1,\ldots,E_k$. It can then easily be seen that the majority digraph induced by $(R^1_1,R^1_2,\dots,R^k_1,R^k_2,R^{k+1})$ equals $(V,E)$, $E_1\cup\dots\cup E_k$ being determined by majorities of at least one in $(R^1_1,R^1_2,\dots,R^k_1,R^k_2,R^{k+1})$ and $E\setminus(E_1\cup\dots\cup E_k)$ by~$R^{k+1}$, each arc in which has then weight one.
	\end{proof}

\section{Determining the Majority Dimension of a Digraph} 
\label{sec:determining_the_majority_dimension_of_a_digraph}
	This section addresses the computational problem of computing the \majdimname. To this end, we define the problem of checking whether a given digraph $G$ is $k$-inducible, \ie whether $G$ is a $k$-majority digraph.

	\defineDecisionProblem{\checkkmajorityproblem}{A digraph $G$ and a positive integer $k$.}{Is $G$ a $k$-majority digraph?}

	Recall that for a digraph~$G$, whether $\majdim(G)$ is odd or even depends on whether~$G$ is complete (\ie a tournament) or not, according to \lemref{lem:parity-of-majdim}. While \checktwomajorityproblem can be solved in polynomial time \citep{Yann82a}, the complexity of \checkkmajorityproblem remains open for every fixed $k\ge 3$.

	In the following, we provide an implementation for solving \checkkmajorityproblem. This implementation relies on an encoding of the problem as a Boolean satisfiability (SAT) problem which is then solved by a SAT solver. This technique turns out to be surprisingly efficient and easily outperforms an implementation for \checkthreemajorityproblem based on the graph-theoretic characterization in \secref{sec:majority_relations_of_few_voters}. 

	\subsection{Computing the Majority Dimension via SAT} 
	\label{sub:computing_the_majority_dimension_via_sat}

	The number of objects potentially involved in the \textsc{Check-$k$-Majority} problem are given in \tabref{tab:magnitudes}. It is immediately clear that a naive algorithm will not solve the problem in a satisfactory manner.

	\begin{table*}[tb]
	\centering
	\begin{tabular}{>{$}l<{$}*{4}{>{$}c<{$}}}
	\toprule
				& \multicolumn{3}{c}{Preference profiles} 										& \text{Tournaments}	\\
				& k=1 							& k=3 						& k=5 								& \text{(unlabeled)}	\\
	\midrule
	n=5 		& 120 							& \sim 1.7 \cdot 10^6\pss\pss 	& \sim 2.5 \cdot 10^{10}\pss 	& 12 					\\
	n=10 		& \sim 3.6 \cdot 10^{6\pss\pss} & \sim 4.8 \cdot 10^{19}\pss 	& \sim 6.3 \cdot 10^{32}\pss 	& \sim 9.7 \cdot 10^{6}\pss\pss \\
	n=25 		& \sim 1.6\cdot 10^{25}\pss 	& \sim 3.7 \cdot 10^{75}\pss 		& \sim 9.0\cdot 10^{125} 	& \sim 1.3 \cdot 10^{65}\pss\\
	n=50 		& \sim 3.0\cdot 10^{64}\pss 	& \sim 2.8 \cdot 10^{193} 			& \sim 2.6 \cdot 10^{322} 	& \sim 1.9\cdot 10^{305}\\
	n=100 		& \sim 9.3\cdot 10^{157}		& \sim 8.1 \cdot 10^{473} 			& \sim 7.1\cdot 10^{789}	& >10^{1332}	\\
	\bottomrule
	\end{tabular}
	\caption{Number of objects involved in the \textsc{Check-$k$-Majority} problem for one, three, and five voters.}
	\label{tab:magnitudes}
	\end{table*}

	Thus, in order to answer 
	\checkkmajorityproblem,
	we translate the problem to propositional logic (on a computer) and use state-of-the-art SAT solvers to find a solution. 
	At a glance, the overall solving steps are shown in~\algref{alg:SATcheck}.

	\begin{algorithm}[t]
		\textbf{Input:} digraph $(V,E)$, positive integer $k$\\
		\textbf{Output:} whether $(V,E)$ is a $k$-majority digraph\\
			\tcc{Encoding of problem in CNF}
			File cnfFile\;
			\ForEach{voter $i$}{
				cnfFile += Encoder.reflexivePreferences($i$)\;
				cnfFile += Encoder.completePreferences($i$)\;
				cnfFile += Encoder.transitivePreferences($i$)\;
				cnfFile += Encoder.antisymmetricPreferences($i$)\;
			}
			cnfFile += Encoder.majorityImplications($(V,E)$)\;
			\If{$E$ is not complete}{
				cnfFile += Encoder.indifferenceImplications($(V,E)$)\;
			}
			\tcc{SAT solving}
			Boolean satOutcome = SATsolver.solve(cnfFile)\;
			\Return satOutcome;
			\caption{\sat-\textsc{Check-$k$-Majority}}
			\label{alg:SATcheck}
	\end{algorithm}

	A satisfying instance of the propositional formula to be designed should represent a preference profile that induces the given digraph. We encode the preference profile in question using Boolean variables $r_{i,a,b}$, which encode whether $a\mathrel{R_i} b$, \ie whether voter $i$ ranks alternative $a$ at least as high as alternative $b$. We then need to impose the following constraints.
	\begin{enumerate}
		\item All $k$ voters have linear orders over the $n$ alternatives as their preferences.
		\item For each majority arc $(x,y)\in E$ in the digraph, a majority of voters needs to prefer $x$ over $y$.
		\item For each missing arc ($x\nsucc y$ and $y\nsucc x$) in the digraph, \emph{exactly} half the voters need to prefer $x$ over $y$.\footnote{This axiom is only required for incomplete digraphs.}
	\end{enumerate}

	For the first constraint, we encode reflexivity, completeness, transitivity, and anti-symmetry of the relation $R_i$ for all voters $i$. 
	The complete translation to CNF (conjunctive normal form, the established standard input format for SAT solvers) is given exemplarily for the case of transitivity; reflexivity, completeness, and anti-symmetry are converted analogously: 

	\begin{eqnarray*}
		&& (\forall i)(\forall x,y,z)\left(x\mathrel{R_i}y \wedge y\mathrel{R_i}z \rightarrow x\mathrel{R_i}z \right) \\
		&\equiv & (\forall i)(\forall x,y,z)\left(r_{i,x,y} \wedge r_{i,y,z} \rightarrow r_{i,x,z} \right) \\
		&\equiv & \bigwedge_i\bigwedge_{x,y,z}\left(\neg \left(r_{i,x,y} \wedge r_{i,y,z}\right) \vee r_{i,x,z} \right) \\
		&\equiv & \bigwedge_i\bigwedge_{x,y,z}\left(\neg r_{i,x,y} \vee \neg r_{i,y,z} \vee r_{i,x,z} \right)\text{.}
	\end{eqnarray*}
	The key in the translation of the inherently higher order axioms to propositional logic is (as pointed out by \citet{GeEn11a} already) that due to finite domains, all quantifiers can be replaced by finite conjunctions or disjunctions, respectively.

	Majority and indifference implications can be formalized in a similar fashion. 
	We describe the translation for the majority implications here; the procedure for the indifference implications (needed for incomplete digraphs) is analogous.  
	In the following, we denote the smallest number of voters required for a positive majority margin by $m(k):=\lfloor k\cdot \frac{1}{2} \rfloor + 1$. 
	Note that, due to the anti-symmetry of individual preferences, for $(x,y)\in E$ it suffices that there exist $m(k)$ voters who prefer $x$ to $y$. 
	In formal terms:
	\begin{eqnarray*}
		&& (\forall x,y)\left((x,y)\in E \rightarrow |\{i\midd x\mathrel{R_i} y\}| > |\{i\midd y\mathrel{R_i} x\}| \right) \\
		&\equiv & (\forall x,y)\left((x,y)\in E \rightarrow |\{i\midd x\mathrel{R_i} y\}| \geq m(k) \right) \\
		&\equiv & (\forall x,y)\left((x,y)\in E \rightarrow  
		 (\exists M\subseteq K) |M|=m(k) \wedge (\forall i\in M) x\mathrel{R_i} y \right) \\
		&\equiv & \bigwedge_{(x,y)\in E}\bigvee_{|M|=m(k)} \bigwedge_{i\in M} r_{i,x,y} \text{.}
	\end{eqnarray*}

	In order to avoid an exponential blow-up when converting this formula to CNF, the standard technique of variable replacement (also known as Tseitin transformation \citep{Tsei83a}) is applied.
	Note that the conditions like $|M|=m(k)$ can easily be checked during generation of the corresponding CNF formula on a computer. 

	Overall, this encoding leads to a total of $k\cdot n^{2} + \binom{k}{m(k)} \cdot n^{2} = n^{2} \cdot \left(k + \binom{k}{m(k)}\right)$ variables for the case of tournaments and $n^{2} \cdot \left(k + \binom{k}{m(k)} + \binom{k}{\nicefrac{k}{2}} \right)$ variables for incomplete digraphs.
	The number of clauses is equal to $k\cdot (n^3+n^2) + \frac{n^2-n}{2} \cdot \left(1+\binom{k}{m(k)}\cdot m(k)\right)$ for tournaments, and at most $k\cdot (n^3+n^2) + (n^2-n) \cdot \left(1+\binom{k}{k/2}\cdot \frac{k}{2}\right)$ for incomplete digraphs, respectively.

	With all axioms formalized in propositional logic, we are now ready to analyze arbitrary digraphs $G$ for their \majdimname $\majdim(G)$.
	Before we do so, however, we describe an optimization technique for tournament graphs, which, for certain instances, significantly speeds up the computation. 

	\subsubsection{Optimization through Decomposition of Tournaments}

	An important structural property in the context of tournaments is whether a tournament admits a non-trivial decomposition \citep[see, \eg][]{Lasl97a,BBH15a}. 
	\citet{BBS11a} show how these decompositions can be exploited to recursively determine the winners of certain voting rules.
	In this section, we prove that a similar optimization can be carried out for the computation of the \majdimname $\majdim(G)$ of a given tournament $G$. 
	In particular, we show that the \majdimname of a tournament is equal to the maximum of the \majdimname of its components and the corresponding summary.

	In formal terms, a non-empty subset $C$ of $V$ is a \emph{component} of a tournament $G=(V,E)$ if, for all $v\in V\setminus C$, either  $(v,w)\in E$ for all $w\in C$ or $(w,v)\in E$ for all $w\in C$.
	A \emph{decomposition} of $G$ is a set of pairwise disjoint components $\{C_1,\dots,C_p\}$ of $T$ such that $V=\bigcup_{j=1}^p C_j$. 
	Every tournament admits a decomposition that is minimal in a well-defined sense \citep{Lasl97a} and that can be computed in linear time \citep{McMo05a,BBS11a}.
	Given a particular decomposition $\tilde{C}=\{C_1,\dots,C_p\}$, the \emph{summary} of $(V,E)$ with respect to $\tilde{C}$ is defined as the tournament $(\tilde{C},\tilde{E})$ on the individual components rather than the alternatives, \ie
	\[
		(C_q,C_r)\in \tilde{E} \quad\text{ if and only if }\quad (v,w)\in E \text{ for all $v\in C_q$, $w\in C_r$.}
	\]

	The following lemma enables the recursive computation of $\majdim(G)$ along the component structure of $G$.
	\begin{lemma}
		\label{lem:decomposition}
		Let $G=(V,E)$ be a tournament, $\tilde{C}=\{C_1,\dots,C_p\}$ a decomposition of $G$, $G_j=(C_j,E|_{C_j})$ for all $j\in\{1,\dots,p\}$, and $\tilde{G}=(\tilde{C},\tilde{E})$. Then
		\[
		\majdim(G) = \max_j \{\majdim(G_j,\majdim(\tilde{G}))\}\text{.}
		\]
	\end{lemma}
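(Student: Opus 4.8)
The plan is to read the right-hand side as $k := \max\{\majdim(\tilde G),\, \max_{j}\majdim(G_j)\}$, the largest \majdimname occurring among the components and the summary, and to establish the two inequalities $\majdim(G)\ge k$ and $\majdim(G)\le k$ separately. Before doing so, I would record two preliminary facts. First, $G$, every $G_j$, and $\tilde G$ are tournaments: each $G_j=(C_j,E|_{C_j})$ is an induced subtournament, and since the arcs between two distinct components point uniformly one way, $\tilde E$ is complete and asymmetric. Hence, by \lemref{lem:parity-of-majdim}, all of $\majdim(G)$, the $\majdim(G_j)$, and $\majdim(\tilde G)$ are odd; in particular $k$ is odd and differs from each of these dimensions by an even number. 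Second, I would isolate the uniformity fact just used: for $v\in C_q$ and $w\in C_r$ with $q\ne r$ we have $(v,w)\in E$ iff $(C_q,C_r)\in\tilde E$, which follows directly from the definition of a component.

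For $\majdim(G)\ge k$, I would start from a profile $R=(R_1,\dots,R_m)$ with $m=\majdim(G)$ that induces $G$. Restricting every $R_i$ to $C_j$ yields a profile of $m$ linear orders whose majority relation is $E|_{C_j}$, so $\majdim(G_j)\le m$. For the summary I would fix one representative $v_j\in C_j$ per component and let voter $i$ rank $C_q$ above $C_r$ exactly when $v_q\mathrel{R_i}v_r$; this is a linear order on $\tilde C$, and by the uniformity fact a majority ranking $v_q$ above $v_r$ is equivalent to $(C_q,C_r)\in\tilde E$, so this derived profile induces $\tilde G$ and $\majdim(\tilde G)\le m$. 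Together these give $k\le m=\majdim(G)$.

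For $\majdim(G)\le k$, I would first pad: since each of $\majdim(G_j)$ and $\majdim(\tilde G)$ is odd and at most the odd number $k$, adjoining the requisite number of mutually reversed pairs of linear orders (each pair contributing zero to every majority margin) turns any inducing profile into one with exactly $k$ voters that induces the same tournament. Fixing such $k$-voter profiles $R^j=(R^j_1,\dots,R^j_k)$ for each $G_j$ and $\tilde R=(\tilde R_1,\dots,\tilde R_k)$ for $\tilde G$, I would define a composite profile $R=(R_1,\dots,R_k)$ on $V$ by letting each $R_i$ order the components as blocks according to $\tilde R_i$ and order the alternatives inside each block $C_j$ according to $R^j_i$. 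Verifying that $R$ induces $G$ splits into two cases: for $v,w$ in the same component, the comparison under every $R_i$ agrees with $R^j_i$, so the majority relation reproduces $E|_{C_j}$; for $v\in C_q$, $w\in C_r$ with $q\ne r$, the comparison under $R_i$ agrees with the block order, i.e.\ with $C_q\mathrel{\tilde R_i}C_r$, so the majority relation reproduces $\tilde E$, which by the uniformity fact is exactly $E$ on cross-component pairs. Hence $R$ induces $G$ with $k$ voters and $\majdim(G)\le k$, which together with the previous paragraph yields equality.

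I expect the main obstacle to lie in the bookkeeping of the upper-bound direction rather than in any deep idea: one must check that the lexicographic composition is a genuine linear order and that both the within-component and the cross-component majority counts come out right, and one must invoke \lemref{lem:parity-of-majdim} at exactly the right moment so that the even gap $k-\majdim(\cdot)$ can be filled by reversed voter pairs. The lower-bound direction is comparatively routine, the only delicate point being that the summary relation is captured faithfully by single representatives, which is precisely where the uniformity of inter-component arcs is needed.
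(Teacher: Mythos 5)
Your proposal is correct and follows essentially the same route as the paper's proof: the lower bound by restricting a minimal profile to each component and to one representative per component (the paper phrases this as deleting all but one vertex per component), and the upper bound by padding the component and summary profiles with opposed voter pairs (justified via \lemref{lem:parity-of-majdim}) and then composing them lexicographically, inserting each component profile as a segment into the summary orders. The only differences are presentational, e.g.\ your explicit statement of the uniformity of inter-component arcs, which the paper leaves implicit.
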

	\begin{proof}
		Let $R$ be a minimal profile inducing $G$. Then, $R|_{C_j}$ induces $G_j$ for every $C_j$ establishing $\majdim(G)\geq \majdim(G_j)$. 
		That $\majdim(G)\geq \majdim(\tilde{G})$ holds is also easy to see by considering a variant of $R$ in which from each component all but one vertex are chosen arbitrarily and removed. 
		The remaining profile then induces $\tilde{G}$.
		For the other direction, let $z=\max_j \{\majdim(G_j),\majdim(\tilde{G})\}$. 
		We know by \lemref{lem:parity-of-majdim} that $\majdim(\tilde{G})$ and every $\majdim(G_j)$ is odd as these are all tournaments. 
		Each $G_j$ (and $\tilde{G}$) has a minimal profile $R^j$  (and $\tilde{R}$, respectively). 
		We can add pairs of voters with opposing preferences to each profile without changing its majority relation. This way, we get profiles $R'^j$ (and $\tilde{R}'$) that still induce $G_j$ (and $\tilde{G}$, respectively) but now all have the same number of voters $z$.
		Now, create a new profile $R$ from $R'$ in which $R_i^j$ replaces alternative $j$ as a segment in $R'_i$ for each voter $i$ and every alternative $j$. 
		It is easy to check that $R$ has $z$ voters and still induces $G$, \ie $\majdim(G)\geq z$.
	\end{proof}

	We have implemented this optimization and found that many real-world majority digraphs exhibit non-trivial decompositions, speeding up the computation of \satcheckk.

\subsection{Computational Efficiency} 
\label{sub:computational_efficiency}

	The characterization of $3$-majority digraphs in \secref{sec:majority_relations_of_few_voters} allows for a straightforward algorithm, which is expected to have a much better running time than any naive implementation enumerating all preference profiles (also compare \tabref{tab:magnitudes}). 
	The corresponding algorithm \naivecheckthree  is given in \algref{alg:naive-check3}. 
	Besides enumerating all $2$-partitions of the majority arcs, the only non-trivial part is to check whether a relation has a transitive reorientation. 
	This can be done efficiently using an algorithm by \citet{PLE71a}.

	We compared the running times of \naivecheckthree with the ones of our implementation via \sat as described in \secref{sub:computing_the_majority_dimension_via_sat} (see also \algref{alg:SATcheck}).\footnote{As a programming language, Java was used in both cases.}

	It turns out that---even though it is much more universal---\satcheckk[3] offers significantly better running times (see \tabref{tab:runtime-comparison}).
	Moreover, \satcheckk[k] directly returns a resulting preference profile with $k$ voters.

	\begin{table*}[t]
	\centering
	\begin{tabular}{lcrrrccccc}
		\toprule
		Algorithm	&	$5$	&	\wordbox[c]{$6$}{$<0.1$s}	&	\wordbox[c]{$7$}{$<0.1$s}	&	\wordbox[c]{$8$}{$<0.1$s}	&	$9$ & $10$	& $20$	& $50$ & $100$ \\
		\midrule		
		\sat				& $<0.1$s	& $<0.1$s	& $<0.1$s	& $<0.1$s	& $<0.1$s	& $<0.1$s & $0.1$s	& $1.5$s & $12.5$s		 \\ 
		\textsc{$2$-Partition}	& $<0.1$s	& $<0.1$s	& $0.1$s		& $\mathwordbox[r]{530}{<0.1}$s 	& --- 		& ---	& ---	& --- & --- \\
		\bottomrule
	\end{tabular}	
	\caption{Comparison of average runtimes of \satcheckk[3] and \naivecheckthree for randomly sampled tournaments of size $n$ with a cutoff time of one hour.}
	\label{tab:runtime-comparison}
	\end{table*}
		
	Further runtimes, which exhibit the practical power of our SAT approach (and its limits), can be obtained from \tabref{tab:sat-runtime}. All experiments were run on an Intel Core i5, 2.66GHz (quad-core) machine with 12 GB RAM using the SAT solver \textsc{plingeling} \citep{Bier13a}. 
	Interestingly, an integer programming (IP) formulation of the problem by \citet{EHW13a} appears to perform worse than our SAT-based formalization: \citeauthor{EHW13a} report that for $n>20$ runtimes are prohibitively large. 

	\begin{algorithm}[htb]
		\caption{\naivecheckthree}
		\label{alg:naive-check3}
		\textbf{Input:} tournament $(V,E)$\\
		\textbf{Output:} whether $(V,E)$ is a $3$-majority digraph\\
				\ForEach{$2$-partition $\{E_1,E_2\}$ of $E$}{
					\If{$E_1$ is transitive and $E_2$ is acyclic and $E_2$ has a transitive reorientation}{
						\Return true\;
					} 
				}
				\Return false\;
	\end{algorithm}

	\begin{table*}[tb]
		\centering
		\scalebox{0.85}{
		$
		\begin{array}{cr*{9}{@{\quad}r}}
		\toprule
		n \backslash k &	\mathwordbox[c]{3}{00.00} &	\mathwordbox[c]{4}{00.00} &	\mathwordbox[c]{5}{00.00} &	\mathwordbox[c]{6}{00.00} &	\mathwordbox[c]{7}{00.00} &	\mathwordbox[c]{8}{00.00} &	\mathwordbox[c]{9}{00.00} &	\mathwordbox[c]{10}{00.00} &	\mathwordbox[c]{11}{00.00} &	\mathwordbox[c]{12}{00.00}                    \\
		\midrule
		3  &	.04 &	.04 &	.03 &	.04 &	.04 &	.04 &	.04 &	.05   &	.08 &	.10                   \\
		4  &	.03 &	.04 &	.03 &	.04 &	.04 &	.04 &	.05 &	.07   &	.10 &	.18                   \\
		5  &	.03 &	.04 &	.03 &	.04 &	.06 &	.05 &	.06 &	.09   &	.16 &	.35                   \\
		6  &	.03 &	.04 &	.04 &	.04 &	.05 &	.06 &	.08 &	.12   &	.27 &	.63                   \\
		7  &	.04 &	.04 &	.04 &	.05 &	.05 &	.07 &	.10 &	.17   &	.45 &	1.10                  \\
		8  &	.04 &	.05 &	.05 &	.05 &	.07 &	.08 &	.13 &	.23   &	.69 &	1.80                  \\
		9  &	.04 &	.05 &	.05 &	.64 &	.07 &	.10 &	.17 &	.33   &	1.06 &	2.83                  \\
		10 &	.05 &	.05 &	.06 &	.67 &	.09 &	.12 &	.23 &	.46   &	1.56 &	4.25                  \\
		11 &	.06 &	.06 &	.06 &	1.92 &	.10 &	.14 &	.30 &	.63   &	2.22 &	6.37                  \\
		12 &	.06 &	.07 &	.07 &	3.35 &	.12 &	.19 &	.40 &	.85   &	3.18 &	8.48                  \\
		13 &	.07 &	.07 &	.09 &	3.93 &	.15 &	.27 &	.52 &	1.16  &	4.44 &	12.30                 \\
		14 &	.07 &	.09 &	.10 &	4.15 &	.18 &	.36 &	.64 &	1.51  &	5.99 &	16.84                 \\
		15 &	.08 &	.10 &	.13 &	3.89 &	.21 &	.88 &	.79 &	2.22  &	7.67 &	\none                 \\
		16 &	.09 &	.11 &	.14 &	4.12 &	.25 &	4.55 &	.99 &	2.90  &	9.80 &	\none                 \\
		17 &	.10 &	.12 &	.19 &	4.41 &	.29 &	7.15 &	1.23 &	4.69  &	12.48 &	\none                 \\
		18 &	.11 &	.14 &	.23 &	4.76 &	.35 &	17.51 &	1.53 &	8.25  &	15.97 &	\none                 \\
		19 &	.12 &	.15 &	.35 &	4.97 &	.43 &	\none &	1.80 &	\none   &	19.99 &	\none                 \\
		20 &	.13 &	.17 &	.54 &	5.04 &	.47 &	\none &	2.21 &	\none   &	\none &	\none                      \\
		21 &	.14 &	.18 &	5.87 &	6.15 &	.63 &	\none &	2.71 &	\none   &	\none &	\none                      \\
		22 &	.16 &	.20 &	11.07 &	5.43 &	.96 &	\none &	3.24 &	\none   &	\none &	\none                      \\
		23 &	.17 &	.23 &	18.95 &	5.76 &	1.57 &	\none &	4.12 &	\none   &	\none &	\none                      \\
		24 &	.20 &	.26 &	\none &	5.87 &	2.56 &	\none &	4.60 &	\none   &	\none &	\none              \\
		25 &	.22 &	.29 &	\none &	6.12 &	4.21 &	\none &	5.85 &	\none   & 	\none &	\none		      \\
		\bottomrule
	\end{array}
	$
	} 
	\caption{Runtime in seconds of \satcheckk[k] for different number of alternatives and different number of voters $k$ when average runtimes did not exceed $20$ seconds. For this table, averages were taken over $5$ samples from the uniform random tournament model.}
	\label{tab:sat-runtime}
	\end{table*}

\section{Analyzing Majority Dimensions} 
\label{sec:analyzing_majority_dimensions}	
	
	Using the algorithm described in the previous section, we are now in a position to analyze the majority dimension of digraphs. In this section, we report on our findings for different sources of digraphs.

\subsection{Exhaustive Analysis} 
\label{sub:exhaustive_analysis}

	Using the tournament generator from the \textsc{nauty} toolkit \citep{McPi13a}, we generated all tournaments with up to $10$ alternatives and found that all of them are $5$-inducible.
	In fact, all tournaments of size up to seven are even $3$-inducible, confirming a conjecture by \citet{ShTo09a}. \citeauthor{ShTo09a} also showed that there exist tournaments of size $8$ that are not $3$-inducible. We also confirmed that the exact number of such tournaments is $96$ (out of $6880$) as found by \citet{EHW13a}. One of these tournaments is depicted in \figref{fig:not-3inducible}.

	Like \citet{EHW13a}, we have not encountered a single tournament for which we could show that it is not $5$-inducible.
	Since quadratic residue tournaments of enormous size are the only concrete tournament of which we know that they have higher \majdimname (see \secref{sec:bounds}), we examined small tournaments of this kind as well and found that
	\[
		\majdim(Q_{11})=3 \quad \text{and} \quad \majdim(Q_{19})=5\text{.}
	\]
	Unfortunately, we were not able to check whether the \majdimname of $Q_{23}$ is equal to $5$ or larger as the SAT solver did not terminate within a total of six weeks.\footnote{Another specific tournament that we considered is a tournament on $24$ alternatives used by \citet{BrSe13a} to disprove a conjecture in social choice theory \citep{BCK+11a}. We found that this tournament is $5$-inducible, which implies that the negative consequences of the counterexample already hold for settings with only $5$ voters (and at least $24$ alternatives).}

\subsection{Empirical Analysis} 
\label{sub:empirical_analysis}

	In the preference library \preflib \citep{MaWa13a}, scholars have contributed data sets from real world scenarios ranging from preferences over movies or sushi via Formula~1 championship results to real election data. Accordingly, the number of voters whose preferences originally induced these data sets vary heavily between $4$ and $44,000$.
	At the time of writing, \preflib contained $354$ tournaments induced from pairwise majority comparisons as well as $185$ incomplete majority digraphs.

	Among the tournaments in \preflib, $58$ are $3$-inducible. The two largest tournaments in the data set have $240$ and $242$ vertices, respectively. The first one is a $5$-majority tournament and the \sat solver did not terminate on the second one within one day. The remaining tournaments are transitive and thus $1$-inducible. Therefore, all checkable tournaments in \preflib are $5$-inducible.
 
	For the non-complete majority digraphs in \preflib, we found that the indifference constraints which are imposed on all missing arcs change the picture. Not only does it negatively affect the running time of \satcheckk in comparison to tournaments which made us restrict our attention to instances with at most $40$ alternatives, but it also results in higher majority dimensions of up to $8$ among the $85$ feasible instances. 

\subsection{Stochastic Analysis} 
\label{sub:stochastic_analysis}

	Additionally, we consider \emph{stochastic} models to generate tournaments of a given size~$n$. Many different models for linear preferences (or orderings) have been considered in the literature. We refer the interested reader to \citet{CFV91a}, \citet{Mard95a}, \citet{MFG12a}, and \citet{BrSe14a}. In this work, we decided to examine tournaments generated with five different stochastic models.

	In the \emph{uniform random tournament model}, the same probability is assigned to each \emph{labeled} tournament of size $n$, \ie
	\[
		\Pr(T)=2^{-\binom{n}{2}} \text{ for each } T \text{ with } |T|=n\text{.}
	\]
	In all of the remaining models, we sample preference profiles and work with the tournament induced by the majority relation. In accordance with \citet{McSl06a,BrSe14a}, we generated profiles with $51$ voters.

	The \emph{impartial culture model} (IC) is the most widely-studied model for individual preferences in social choice. It assumes that every possible preference ordering has the same probability of $\frac{1}{n!}$. 
	If we add anonymity by having indistinguishable voters, the set of profiles is partitioned into equivalence classes. In the \emph{impartial anonymous culture} (IAC), each of these equivalence classes is chosen with equal probability.

	In \emph{Mallows-$\phi$ model} \citep{Mall57a}, the distance to a reference ranking is measured by means of the Kendall-tau distance which counts the number of pairwise disagreements. Let $R_0$ be the reference ranking. Then, the Kendall-tau distance of a preference ranking $R$ to $R_0$ is
	\[
		\tau(R,R_0) = \binom{n}{2} - \left( \left| R \cap R_0 \right| -n\right) \text{.}
	\]
	According to the model, this induces the probability of a voter having $R$ as his preferences to be
	\[
		\Pr(R) = \frac{\phi^{\tau(R,R_0)}}{C} 
	\]
	where $C$ is a normalization constant and $\phi\in(0,1]$ is a dispersion parameter. Small values for $\phi$ put most of the probability on rankings very close to $R_0$ whereas for $\phi=1$ the model coincides with IC.%
	\footnote{An interpretation of distance-based models such as Mallows-$\phi$ model is that there exists a pre-existing truth in the form of a reference ordering and the agents report noisy estimates of said truth as their preferences. For these models, \citeauthor{Lasl10a} has introduced the term \emph{Rousseauist cultures} \citep{Lasl10a}.}

	A very different kind of model is the \emph{spatial model}. Here, alternatives and voters are placed uniformly at random in a multi-dimensional space and the voters' preferences are determined by the (Euclidian) distanced to the alternatives. The spatial model plays an important role in political and social choice theory where the dimensions are interpreted as different aspects or properties of the alternatives \citep[see, \eg][]{Orde93a,AuBa00a}.

	For up to $21$ alternatives, we sampled preference profiles (each consisting of $51$ voters\footnote{In another study \citep{BrSe14a}, this size turned out to be sufficiently large to discriminate the different underlying stochastic models.}) from the aforementioned stochastic models and examined the corresponding majority digraphs for their \majdimname using \satcheckk. The average complexities over $30$ instances of each size are shown in Table~\ref{tab:stochastic}. We see that the unbiased models (IC, IAC, uniform) tend to induce digraphs with higher \majdimname. 

	Again, we encountered no tournament that was not a $5$-majority tournament. We also checked more than $8$ million uniform random tournaments with $12$ alternatives.
	These results could be used to argue that the majority dimension itself may be employed as a parameter to govern the generation of realistic preference profiles.

	\begin{table}[htb]
		\centering
		$
	\begin{array}{ccc@{\qquad}c@{\qquad}cc}
		\toprule
		n	& \text{uniform} & \text{IC}	& \text{IAC}  &  \begin{tabular}{@{}c@{}}Mallows-$\phi$ \\ $(\phi=0.95)$\end{tabular}  & \begin{tabular}{@{}c@{}}spatial \\ ($\text{dim}=2$)\end{tabular} %
		\\
		\midrule    
		3	& 1.40 & 1.13  & 1.13 &	1.13   & 1.00               \\
	   	5	& 3.00 & 1.67  & 2.13 &	1.33   & 1.13               \\
	   	7	& 3.00 & 2.67  & 2.67 &	2.47   & 1.33               \\
	   	9	& 3.13 & 3.00  & 3.00 &	2.67   & 1.60               \\
	   	11	& 3.93 & 3.07  & 3.00 &	2.87   & 2.33               \\
	   	13	& 4.80 & 3.07  & 3.20 &	2.93   & 2.53               \\
	   	15	& 5.00 & 3.27  & 3.40 &	3.00   & 2.67               \\
	   	17	& 5.00 & 3.40  & 3.80 &	2.93   & 2.80               \\
	   	19	& 5.00 & 4.27  & 4.20 &	3.00   & 2.80               \\
	   	21	& 5.00 & 4.47  & 4.33 &	3.00   & 2.87               \\
		\bottomrule	
	\end{array}
	$
	\caption{Average \majdimname in tournaments generated by stochastic (preference) models. The given values are averaged over $30$ samples each.}
	\label{tab:stochastic}
	\end{table}

\section{Hardness of Voting with Few Voters} 
\label{sec:hardness_of_voting_with_few_voters}

	In this section, we show that the winner determination problem of four well-studied voting rules remains \NP-hard even if the number of voters is a small constant. Our general method is to analyze existing hardness constructions for these rules with respect to their susceptibility to the sufficient conditions in \lemref{lem:2n-voter-suff} or \lemref{lem:2n+1-voter-suff}. In all cases, we slightly modify the hardness constructions to get better bounds on the number of voters. Before we proceed, we introduce two new constrained classes of propositional formula (\orderedthreecnf, to be used for the results in Sections \ref{sub:the_banks_set} and \ref{sub:the_tournament_equilibrium_set}, and \redfewcnf, to be used for the result in Section \ref{sub:the_slater_set}) and show for both that the problem of deciding whether a given formula is satisfiable is \NP-complete.

	A formula of propositional logic in conjunctive normal form (CNF) is in \threecnf if each clause has at most three literals. We say that a formula~$\varphi$ from \threecnf is in \orderedthreecnf if its clauses all contain exactly three distinct literals and are ordered within~$\varphi$ in such a way that for each propositional variable $p$, all clauses containing the literal~$p$ precede all clauses containing~$\neg p$.
	It is known that \threesat, the problem of deciding whether a given formula in \threecnf is satisfiable, is \NP-complete \citep{Karp72a}. For formulae in \orderedthreecnf, we call the corresponding decision problem \orderedthreesat.

	\begin{lemma}
		\label{lem:ordered-threesat-hardness}
		\orderedthreesat is \NP-complete.
	\end{lemma}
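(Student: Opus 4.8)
The plan is to prove \orderedthreesat{} \NP-complete by a reduction from \threesat{}, which is known to be \NP-complete by \citet{Karp72a}. Membership in \NP{} is immediate, since a satisfying assignment is a polynomial-size certificate that can be checked in polynomial time; so the work is entirely in establishing hardness. The challenge is that an \orderedthreecnf{} formula imposes two simultaneous structural requirements: (i) every clause contains \emph{exactly} three distinct literals, and (ii) the clauses can be linearly ordered so that, for each variable~$p$, every clause in which~$p$ occurs positively comes before every clause in which $\neg p$ occurs. A reduction must massage an arbitrary \threecnf{} instance to meet both constraints while preserving satisfiability.

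First I would normalize the clause widths. Clauses with fewer than three literals are padded up to exactly three distinct literals by introducing fresh dummy variables and forcing their truth values: a standard trick is to add clauses (or gadget variables) that pin a dummy to a fixed value so that padding a short clause with that dummy's literal never changes satisfiability. I would introduce one or two globally-fixed dummy variables (say~$t$ forced true and~$f$ forced false via small gadget clauses) and use their literals as fillers, taking care that the filler literals chosen do not themselves create further violations later. After this step every clause has exactly three distinct literals, so requirement~(i) holds.

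The main obstacle, and the heart of the reduction, is requirement~(ii): the ordering condition demands that for each variable the positive occurrences all precede the negative occurrences, which no reordering of a generic formula's clauses can achieve simultaneously for all variables. The standard remedy is \emph{variable splitting}: for a variable~$p$ occurring in several clauses, I would replace its occurrences by fresh copies $p_1,p_2,\dots$ (one per occurrence, or one positive-block copy and one negative-block copy) and then add \emph{equivalence-enforcing} clauses $p_i \leftrightarrow p_{i+1}$, written in CNF as the implications $(\neg p_i \vee p_{i+1})$ and $(\neg p_{i+1} \vee p_i)$, arranged cyclically so that all copies are forced equal in any satisfying assignment. Each fresh copy can then be used so that within the reordered clause list its positive literal appears only before its negative literal, and the linking clauses themselves must also be made to obey the ordering and width constraints (again padding to width three and choosing the copy indices so the precedence pattern is respected).

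I would then argue correctness in the two usual directions: any satisfying assignment of the original \threecnf{} formula extends to one of the constructed \orderedthreecnf{} formula by assigning every copy of~$p$ the common value of~$p$ and the dummies their forced values; conversely, any satisfying assignment of the constructed formula must equate all copies (by the linking clauses) and fix the dummies, hence projects back to a satisfying assignment of the original instance. Finally I would check that the number of fresh variables and clauses is polynomial in the input size and that producing the ordered clause list is a polynomial-time computation, completing the reduction. The delicate bookkeeping—ensuring the linking and padding clauses \emph{themselves} admit the required global ordering without reintroducing a positive-after-negative violation for any variable—is where I expect the proof to require the most care.
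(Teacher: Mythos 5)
Your overall frame (reduction from \threesat, padding two-literal clauses, NP membership as a one-liner) matches the paper's, but the central mechanism you propose---variable splitting with equivalence-enforcing linking clauses---cannot work, and the failure is not a matter of ``delicate bookkeeping'': it is structural. First, the bidirectional equivalence gadget is itself unorderable: the pair $(\neg p_i \vee p_{i+1})$ and $(\neg p_{i+1} \vee p_i)$ forces, for orderedness of $p_i$, the second clause to precede the first, and, for orderedness of $p_{i+1}$, the first to precede the second. So \emph{any} formula containing both implications between two variables admits no valid clause ordering at all, independently of everything else. Second, the natural repair---keeping only one-directional implications from positive-occurrence copies to negative-occurrence copies (the direction you must keep, since the reverse direction already breaks correctness on $(p)\wedge(\neg p)$: setting the positive copy true and the negative copy false satisfies the split formula)---does not help either: transitively through the linking clauses, every clause holding a positive copy of $p$ is still forced to precede every clause holding a negative copy of $p$, which is exactly the ordering constraint of the \emph{original} formula. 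Concretely, take $\varphi=(p \vee \neg q \vee r)\wedge(\neg p \vee q \vee r)$, which is satisfiable but not orderable as it stands. Splitting gives $C_1'=(a \vee \neg b' \vee r_1)$ and $C_2'=(\neg a' \vee b \vee r_2)$ with links $L_p=(\neg a \vee a')$ and $L_q=(\neg b \vee b')$, and the orderedness requirements read $C_1' < L_p < C_2' < L_q < C_1'$---a cycle. Splitting merely relocates the original positive-after-negative conflicts into the linking chains; it never breaks them.

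The missing idea, which is what the paper's proof supplies, is to split \emph{clauses} rather than variables. Assuming the literals of each clause are sorted by a fixed global order on the variables, each clause $(\ell_1 \vee \ell_2 \vee \ell_3)$ is replaced by a one-way chain of six clauses on four fresh variables, namely $(\ell_1 \vee x \vee x')$, $(\ell_2 \vee \neg x \vee y)$, $(\ell_2 \vee \neg x' \vee y)$, $(\ell_3 \vee \neg y \vee z)$, $(\neg x \vee \neg y \vee \neg z)$, $(\neg x' \vee \neg y \vee \neg z)$, which is satisfiable exactly when the original clause is. Every new clause contains exactly one original literal, so it can be filed under that literal's variable; each fresh variable's positive occurrences then lie in groups belonging to strictly smaller variables than its negative occurrences (this is where sortedness of $\ell_1,\ell_2,\ell_3$ is used), and the purely negative clauses are parked at the very end. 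Listing the groups variable by variable, positives before negatives, then yields a valid ordering---no backward constraints ever arise. Your construction has no counterpart for this step, and without it the reduction collapses.
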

		\begin{proof}
			Membership in {\NP} is obvious. 
			For hardness, we reduce from \threesat. Let $\varphi$ be some formula in \threecnf.
			Let $P$ denote the set of variables of the propositional language in which~$\varphi$ is formulated and let $C=(c_1,\ldots,c_{|C|})$ denote the clause set of~$\varphi$.
			We may assume without loss of generality that no clause contains the same variable twice, that all literals in a clause are ordered according to a fixed  ordering $(p_1,p_2,\ldots)$, and that every clause is of size three. The latter is due to the fact that clauses of size one can be easily used to simplify~$\varphi$ and the remaining clauses $(p\vee q)$ of size two can be padded with a new variable~$x$ to $(p \vee q \vee x) \wedge (p \vee q \vee \neg x)$. 
			We call all variables that occur at least once in $\varphi$ \emph{original} variables.
			
			For the reduction, we construct an ordered formula~$\varphi'$ in \threecnf with $6\cdot|C|$ clauses and $4\cdot|C|$ additional variables that is satisfiable if and only if~$\varphi$ is.
			For every clause $c_i=(\ell_1 \vee \ell_2 \vee \ell_3)$, define a set of new clauses $\varphi_i=\bigwedge_{j=1}^6 c_i^j$ with
			\begin{align*}
				c_i^1 &= (\ell_1 \vee x_i \vee x'_i), 	 & c_i^2 &= (\ell_2 \vee \neg x_i \vee y_i),\\
 				c_i^3 &= (\ell_2 \vee \neg x'_i \vee y_i), &	c_i^4 &= (\ell_3 \vee \neg y_i \vee z_i),\\
				c_i^5 &= (\neg x_i \vee \neg y_i \vee \neg z_i), \text{ and } & c_i^6 &= (\neg x'_i \vee \neg y_i \vee \neg z_i)
			\end{align*}
			where $x_i, x'_i, y_i$, and $z_i$ are new propositional variables. 
			It is easy to check that $c_i$ is satisfiable if and only if $\varphi_i$ is. Since the literals associated with original variables are spread over different $\varphi_i$ just as they were over the different clauses $c_i$ in $\varphi$, this implies that $\bigwedge_i\varphi_i$ is satisfiable if and only if $\varphi$ is.

			What remains to be shown is that all the clauses~$c_i^j$ can be arranged in such a way that the resulting formula is ordered.
			To this end, we define for each original variable~$p$ and $j\in\{1,\ldots,4\}$ the clause sets
			\[
			C^{p,j} = \bigcup_i \{c_i^j \midd p\in c_i^j\} \qquad \text{ and } \qquad
			C^{\neg p,j} = \bigcup_i \{c_i^j \midd p\in c_i^j\}
			\]
			as well as
			\[
				C^5 = \bigcup_i c_i^5 \qquad\text{ and } \qquad
				C^6 = \bigcup_i c_i^6\text{.}
			\]
			We are now in a position to define $\varphi'$ to be
			\[
				\varphi' ={}
				\bigwedge_{i=1}^{|P|} 
					\Bigg(
					\bigg( \bigwedge_{j=1}^4 
						\bigwedge_{c\in C^{p_i,j}} c
					\bigg) \wedge 
						\bigg( \bigwedge_{j=1}^4 
						\bigwedge_{c\in C^{\neg p_i,j}} c
					\bigg)
					\Bigg) \wedge
				\bigwedge_{c \in C^5\cup C^6} c.
			\]
			
			We claim that $\varphi'$ is ordered. We show this for original and new variables separately. 
			For each original variable~$p$, all positive occurrences are in the $C^{p,j}$, preceding the negative occurrences in the~$C^{\neg p, j}$.

			For all new variables, the clauses in $C^5\cup C^6$ only contain negative occurrences and are at the back of~$\varphi$. Therefore, we only have to check that orderedness holds in the first part of $\varphi'$. For each~$z_i$, this is trivially the case as it only occurs once (as a positive literal) outside of $C^5\cup C^6$.
			For the others that we denoted by $x_i,x'_i,$ and $y_i$, the positive occurrences in $C^{p_\ell,j}\cup C^{\neg p_\ell,j}$ for some~$\ell$ and $j\in\{1,2,3\}$ always precede the single negative occurrence in $C^{p_L,J}\cup C^{\neg p_L,J}$ for some $J\in\{2,3,4\}$ and $L\neq \ell$: due to the fixed ordering of the literals within a clause we have that $L>\ell$. 
	\end{proof}

	We say that a formula from \threecnf is in \fewcnf if each literal appears at most twice, and each variable appears at most thrice. We call the problem of checking whether a formula given in \fewcnf is satisfiable \fewsat. 
	\citeauthor{Tove84a} has shown that \fewsat is \NP-complete \citep[][Thm. 2.1]{Tove84a}. We follow his proof to show that this still holds for formulae in \redfewcnf where we additionally require that every variable occurs in at most one three-literal clause and every literal in at most one two-literal clause. Denote the corresponding decision problem by \redfewsat.

	\begin{lemma}
		\label{lem:redfewsat-hardness}
		\redfewsat is \NP-complete.
	\end{lemma}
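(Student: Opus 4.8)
Membership in~\NP{} is immediate, since a satisfying assignment can be verified in polynomial time. For hardness I would reduce from \fewsat, which is \NP-complete by \citet[Thm.~2.1]{Tove84a}. The plan is to take an arbitrary \fewcnf{} formula~$\varphi$ and transform it into a \redfewcnf{} formula~$\varphi'$ (each literal appearing at most twice, each variable at most thrice, \emph{and additionally} each variable in at most one three-literal clause and each literal in at most one two-literal clause) that is satisfiable if and only if~$\varphi$ is. The two extra structural constraints are the only thing separating \redfewcnf{} from \fewcnf, so the entire work lies in massaging the clause structure of~$\varphi$ to enforce them without changing satisfiability and without violating the occurrence bounds already guaranteed by \fewcnf.

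The standard device for redistributing occurrences is to introduce \emph{copy variables} linked by implication chains. First I would handle the violations of the ``at most one three-literal clause per variable'' requirement: if a variable~$p$ occurs in two or three three-literal clauses, I would replace its occurrences by fresh variables $p^{(1)}, p^{(2)}, \dots$ and add short clauses encoding the equivalences $p^{(1)} \leftrightarrow p^{(2)} \leftrightarrow \cdots$ (each equivalence $a \leftrightarrow b$ being the two clauses $(\neg a \vee b)$ and $(\neg b \vee a)$, or a cyclic chain of implications when several copies are needed). These linking clauses are two-literal clauses, so they shift the ``load'' from three-literal clauses onto two-literal ones; one must then check that the resulting copy variables each land in at most one three-literal clause and that the occurrence counts (at most twice per literal, at most thrice per variable) are preserved. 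I would similarly use copy variables to ensure no literal appears in two distinct two-literal clauses, again introducing a controlled number of fresh variables and equivalence clauses and checking the bookkeeping. Because \fewcnf{} already caps the number of occurrences of each variable at three, only a bounded (indeed constant) number of copies is ever required per original variable, so the construction is polynomial and the local occurrence bounds can be met by choosing the linking pattern (a short cycle of implications) carefully.

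The main obstacle is the bookkeeping: one must verify simultaneously that (i) every variable of~$\varphi'$ occurs at most thrice, (ii) every literal at most twice, (iii) every variable in at most one three-literal clause, and (iv) every literal in at most one two-literal clause, while the equivalence/linking clauses themselves consume occurrences of the copy variables and are themselves two-literal clauses subject to~(iv). The delicate point is that introducing a chain of equivalences to spread out three-literal occurrences creates \emph{new} two-literal clauses, which could in turn violate~(iv); the resolution is to give each copy variable its own private linking clauses so that no literal is reused across two such clauses, i.e.\ to arrange the copies in a simple directed cycle $p^{(1)} \to p^{(2)} \to \cdots \to p^{(1)}$ where each implication is a distinct clause and each copy literal appears in exactly one linking clause of each polarity. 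I would close the argument by confirming that this cyclic linking forces all copies to take equal truth values in any satisfying assignment, so that $\varphi'$ and~$\varphi$ are equisatisfiable, and that all four \redfewcnf{} constraints hold by the occurrence count of each (copy or original) variable across the rewritten clauses.
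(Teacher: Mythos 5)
Your core device (one fresh copy variable per occurrence, copies forced equal by a cyclic chain of two-literal implication clauses) is exactly the Tovey-style construction the paper follows. But your choice of source problem, \fewsat, opens a genuine gap that your bookkeeping does not close. A \fewcnf formula may itself contain two-literal clauses, and these are precisely where the device breaks: every copy variable in a cyclic chain necessarily acquires one \emph{positive} occurrence in one linking two-literal clause and one \emph{negative} occurrence in another. Hence if a copy's payload occurrence sits in an original two-literal clause (say positively), that positive literal now occurs in two distinct two-literal clauses---the payload clause and a linking clause---violating the \redfewcnf requirement that every literal occur in at most one two-literal clause. Your proposed resolution (private linking clauses arranged in a simple directed cycle) only prevents conflicts \emph{among} linking clauses; it cannot prevent this payload-versus-linking conflict, because the polarity pattern of a cyclic chain is forced. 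Worse, this failure hits exactly the case your second fix-up step is supposed to repair (a literal occurring in two two-literal clauses): after replacement, each copy still has a literal in two two-literal clauses, so the violation is merely transferred to the copies.

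The repair is to ensure every payload occurrence lies in a \emph{three}-literal clause. That is what the paper does: it reduces from \threesat, assuming every clause consists of exactly three distinct literals (shorter clauses can be padded with fresh variables beforehand, as in the paper's \orderedthreesat reduction), and then replaces \emph{every} occurrence of \emph{every} variable by a fresh copy, closing each variable's copies into a cyclic implication chain. Each copy then occurs exactly once in a three-literal clause and exactly twice in two-literal clauses, once per polarity, so all four \redfewcnf constraints hold simultaneously, while the cycle forces all copies of a variable to take the same truth value, preserving satisfiability. Starting from \fewsat buys nothing here: once every occurrence receives its own copy, the occurrence bounds of the input formula are discarded anyway, and the only thing that matters is that no payload lands in a two-literal clause.
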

	\begin{proof}
		Membership in {\NP} is obvious.
		For hardness, we reduce from \threesat. 
		Let $\varphi := \bigwedge_{i=1}^n (x_i \vee y_i \vee z_i)$ be some formula in \threecnf where no clause contains the same variable twice. 
		For every variable $v$ occurring in $\varphi$, replace each of its $L$ occurrences with a new variable $v_j$ where $1 \leq j \leq L$. 
		Now for every $v$ occurring in $\varphi$, add the clauses 
		\[
			\varphi_v = (\neg v_L \vee v_1) \wedge \bigwedge_{j=1}^{L-1} (\neg v_j \vee v_{j+1})
		\]
		which are equivalent to $(v_L \Rightarrow v_1) \wedge \bigwedge_{j=1}^{L-1} (v_j \Rightarrow v_{j+1})$. Call the formula resulting from these transformations $\redform(\varphi)$. Note that $\redform(\varphi)$
		only contains clauses with three literals (original clauses with replaced variables) or two literals (the new clauses); denote these clause sets by $C^3$ and $C^2$, respectively. Also observe that every variable occurs exactly once in~$C_3$ and every literal exactly once in~$C^2$, \ie $\redform(\varphi)$ is in \redfewcnf.

		For every old variable $v$, we can only satisfy $\varphi_v$ by setting all $v_j$ to the same value. Since setting all $v_j$ to the same value $t$ satisfies $\varphi_v$ and has the same effect on the original part of $\redform(\varphi)$ that setting $v$ to $t$ has on $\varphi$, it follows that $\varphi$ is satisfiable if and only if $\redform(\varphi)$ is satisfiable. 
	\end{proof}

	\subsection{The Banks Set} 
	\label{sub:the_banks_set}
	The \emph{Banks set} associates with each majority tournament the maximal elements of its maximal (with respect to set-inclusion) transitive subtournaments \citep[see, \eg][]{Lasl97a,BBH15a}.

	Although \emph{some} alternative in the Banks set can be found in polynomial time using a greedy algorithm \citep{Hudr04a}, deciding whether a \emph{specific} alternative belongs to the Banks set is \NP-complete as shown by \citet{Woeg03a} by a reduction from $3$-colorability. \citet{BFHM09a} gave an arguably simpler proof of this result by a reduction from \threesat: every formula~$\varphi$ in \threecnf can be transformed in polynomial time into a tournament~$T^{\ba}_\varphi$ with a decision vertex~$c_0$ such that $c_0$ is in the Banks set of~$T^{\ba}_\varphi$ if and only if~$\varphi$ is satisfiable. 
	Due to \lemref{lem:ordered-threesat-hardness}, this reduction works just as well if~$\varphi$ is assumed to be ordered.
	Again, we have~$\props$ denote the set of variables of the propositional language in which~$\varphi$ is formulated.

	A tournament $(V,E)$ is in \emph{the class $\mathcal G^{\ba}$} if it satisfies the following properties. There is an odd integer~$m$ such that,
	\[V=C\cup U_1\cup\cdots\cup U_m \text,\]
	where $C,U_1,\ldots, U_{m}$ are pairwise disjoint and $C=\set{c_0,\ldots,c_{m}}$. We have $C_i$ denote the singleton~$\set{c_i}$ and $U=\bigcup_{i=1}^m U_i$.
	If~$i$ is odd, $U_i=\set{u_i^1,u_i^2,u_i^3}$ whereas if~$i$ is even $U_i$ is a singleton~$\set{u_i}$. Let $X=\bigcup\set{U_i:\text{$i$ is odd}}$ and $Y=\bigcup\set{U_i:\text{$i$ is even}}$. 
	Intuitively, $(V,E)$ is $T^{\ba}_\varphi$ for some~$\varphi$ in ordered \threecnf with $\frac{1}{2}(m+1)$ clauses. If~$i$ is odd,~$U_i$ corresponds to a clause of~$\varphi$ and the vertices it contains represent (tokens of) literals. We assume each of these vertices~$u_i^j$ to be labeled by the literal~$\lambda(u_i^j)$ it represents.
	 For \emph{odd} $i \in \set{1,\ldots,m}$ and $ j \in \set{1,2,3}$ we define,
	\begin{align*}
		U_i^j        &=  	\set{u_i^j}\text,\\
		U_i^{p}    &=\set{u\in U_i: \lambda(u)=p}\text{, and} \\
		U_i^{\neg p}    &=\set{u\in U_i: \lambda(u)=\neg p}\text.
	\end{align*}
	Moreover, for \emph{even} $i \in \set{1,\ldots,m}$ and $ j \in \set{1,2,3}$, we let
	\[
		U_i^j=U^{p}_i=U^{\neg p}_i=\emptyset.
	\]
	Observe that $\bigcup_{\substack{p\in\props\\1\le i\le m}}(U_i^p\cup U_i^{\neg p})=X$.

	We are now in a position to define the arc set~$E$, almost as in \citet{BFHM09a}.\footnote{There is only a slight change compared to the original construction by \citet{BFHM09a}. Specifically, we now have arcs $U^1_i\times U^3_i$ instead of the other way around. It is not difficult to check that the argument of the reduction is not affected---it is irrelevant whether the crucial transitive subtournament with~$c_0$ as its maximal element may contain one, two, or three vertices from each $U_i$.} Let
		\begin{align*}
			E	=	
			&	\mathwordbox{\displaystyle\bigcup_{i<j}}{\displaystyle\bigcup_{1\le i\le m}}(C_j\times C_i) \cup  \bigcup_{i<j}\big((U_i\times U_j)\setminus \overline {E^\varphi}\big)\cup\text{}\\
			&	\mathwordbox{\displaystyle\bigcup_{1\le i\le m}}{\displaystyle\bigcup_{1\le i\le m}}\big((U^1_i\times U^2_i)\cup(U^2_i\times U^3_i)\cup(U^1_i\times U^3_i)\big)\cup\text{}\\
			& \mathwordbox{\displaystyle\bigcup_{i\neq j}}{\displaystyle\bigcup_{1\le i\le m}}(C_i\times U_j)\cup \mathwordbox{\displaystyle\bigcup_{i}}{\displaystyle\bigcup_{i<j}}(U_i\times C_i)	\cup E^\varphi
							\text,
		\end{align*}
	where
	\[
		E^\varphi=\bigcup_{\substack{p\in P\\i < j}} (U^p_j\times U^{\neg p}_i)\cup \bigcup_{\substack{p\in P\\i < j}} (U^{\neg p}_j\times U^{p}_i)\text.
	\]
	\figref{fig:banks} illustrates this type of tournament.
	The set~$E^\varphi$ is the part of the tournament $T^{\ba}_\varphi$ that depends on the input formula. The arc set 
	\[
	(E\setminus E^\varphi)\cup\overline {E^\varphi}
	\] 
	we refer to as its \emph{skeleton}.
	
	We will show that the skeleton of each tournament $T^{\ba}_\varphi$ is induced by a $3$-voter profile such that the arcs in 
	$\overline {E^\varphi}$
	all get a weight of one. At the same time, $E^\varphi$ is 
	$2$-inducible such that the weight on all arcs is two. A little reasoning and an application of \lemref{lem:2n+1-voter-suff} then gives us the desired result.

	\begin{figure}[tb]
		\centering
		\scalebox{1}{
		  \begin{tikzpicture}[scale=.7]

		  \tikzstyle{every ellipse node}=[draw,inner xsep=3.5em,inner ysep=1.2em,fill=black!15!white,draw=black!15!white]
		  \tikzstyle{every circle node}=[fill=white,draw,minimum size=1.6em,inner sep=0pt]

		  \draw[use as bounding box,draw=white] (-6,1.2) rectangle (6,14.5);

		  \draw (-1,9.75)	node[circle](c0){$c_0$}; 
		  \draw (0,11)		node[circle](c1){$c_1$} ++(0,1.5) node[circle](c3){$c_3$} ++(0,1.5) node[circle](c5){$c_5$};
		  \draw (-2,11.75)	node[circle](c2){$c_2$} ++(0,1.5) node[circle](c4){$c_4$};

		  \draw (-1,2) node[ellipse] (ellipse3){}		++(-1.5,0)	node[circle](x31){$u_5^1$} ++(1.5,0) 	node[circle](x32){$u_5^2$} ++(1.5,0) node[circle](x33){$u_5^3$} ;
		  \draw (-1,5) node[ellipse] (ellipse2){}		++(-1.5,0)	node[circle](x21){$u_3^1$} ++(1.5,0) 	node[circle](x22){$u_3^2$} ++(1.5,0) node[circle](x23){$u_3^3$} ;
		  \draw (-1,8) node[ellipse] (ellipse1){}		++(-1.5,0)	node[circle](x11){$u_1^1$} ++(1.5,0) 	node[circle](x12){$u_1^2$} ++(1.5,0) node[circle](x13){$u_1^3$} ;

		  \draw (-1,3.5)		node[circle](y2){$u_4$};	
		  \draw (-1,6.5)		node[circle](y1){$u_2$};

			\draw (7,9.75)  node{$\resizebox{1.25ex}{2.6ex}{$\}$}\;C_0$} ;
			\draw (7,11)    node{$\resizebox{1.25ex}{2.6ex}{$\}$}\;C_1$}  
			++(0,0.75) node     {$\resizebox{1.25ex}{2.6ex}{$\}$}\;C_2$} 
			++(0,0.75) node     {$\resizebox{1.25ex}{2.6ex}{$\}$}\;C_3$} 
			++(0,0.75) node     {$\resizebox{1.25ex}{2.6ex}{$\}$}\;C_4$} 
			++(0,0.75) node     {$\resizebox{1.25ex}{2.6ex}{$\}$}\;C_5$} ;

			\draw (7,8) node{\wordbox[l]{$\resizebox{1.5ex}{4ex}{$\}$}\;U_1$}{$\big\}\;U_4$}}
			++(0,-1.5) node {\wordbox[l]{$\resizebox{1.5ex}{4ex}{$\}$}\;U_2$}{$\big\}\;U_2$}} 
			++(0,-1.5) node {\wordbox[l]{$\resizebox{1.5ex}{4ex}{$\}$}\;U_3$}{$\big\}\;U_3$}} 
			++(0,-1.5) node {\wordbox[l]{$\resizebox{1.5ex}{4ex}{$\}$}\;U_4$}{$\big\}\;U_4$}} 
			++(0,-1.5) node {\wordbox[l]{$\resizebox{1.5ex}{4ex}{$\}$}\;U_5$}{$\big\}\;U_5$}} 
			;

		\draw[-latex] (x11) -- (x12);
		\draw[-latex] (x12) -- (x13);
		\draw[-latex] (x11.60) .. controls ++(40:.8) and ++(140:.8) .. (x13.120);
		\draw[-latex] (x21) -- (x22);
		\draw[-latex] (x22) -- (x23);
		\draw[-latex] (x21.60) .. controls ++(40:.8) and ++(140:.8) .. (x23.120);
		\draw[-latex] (x31) -- (x32);
		\draw[-latex] (x32) -- (x33);
		\draw[-latex] (x31.60) .. controls ++(40:.8) and ++(140:.8) .. (x33.120);

		\draw[-latex,dotted] (ellipse1.5) to [bend right=70] (c1);
		\draw[-latex,dotted] (ellipse2.5) to [bend right=70] (c3);
		\draw[-latex,dotted] (ellipse3.5) to [bend right=70] (c5);
		\draw[-latex, dotted] (y1.180) .. controls  +(-4.5,.3) and ++(-2.5,-.3) .. (c2.180);
		\draw[-latex, dotted] (y2.180) .. controls  +(-6,.3)   and ++(-6,-.3)   .. (c4.180);
		\draw[-latex,dashed] (x33) to [bend right=20] (x13);
		\draw[-latex, dashed,] (x23) -- (x13);

	\end{tikzpicture}
	}
	  \caption{A tournament $T^{\ba}_\varphi=(V,E)$ in the class $\mathcal G^{\ba}$. Omitted arcs point downwards. Moreover, $\lambda(u^3_5)=\lambda(u^3_3)=\overline\lambda(u^3_1)$. The dotted and dashed upward arcs correspond to the arc sets~$E_1$ and~$E_2$ in \thmref{thm:banks-np}, respectively. The remaining arcs, \ie all downward arcs and the arcs within the $U_i$ form an acyclic arc set and correspond to~$E_3$.}
	  \label{fig:banks}
	\end{figure}
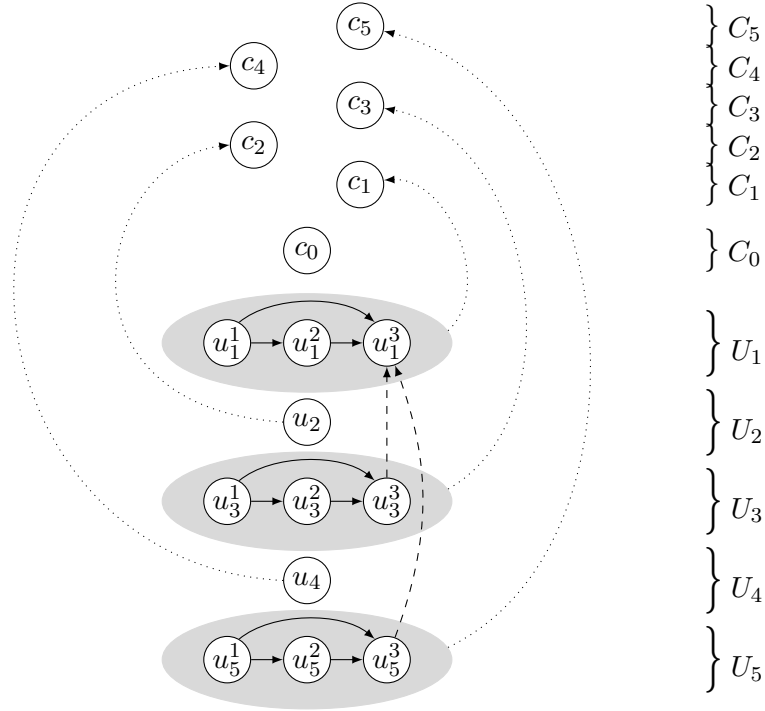

	\begin{theorem} \label{thm:banks-np}
	Computing the Banks set is \NP-hard if the number of voters is at least $5$. 
	\end{theorem}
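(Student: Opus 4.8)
The reduction itself is already in hand: by \lemref{lem:ordered-threesat-hardness} we may reduce from \orderedthreesat, and the construction preceding the theorem turns an ordered formula $\varphi$ into the tournament $T^{\ba}_\varphi$ so that $c_0$ lies in the Banks set of $T^{\ba}_\varphi$ exactly when $\varphi$ is satisfiable (the small modification to the within-clause arcs noted in the footnote does not disturb this equivalence). Hence the whole content of the theorem is to show that $T^{\ba}_\varphi$ is $5$-inducible; \NP-hardness for $5$ voters then follows, and since adding a pair of voters with antipodal preferences never changes a majority relation, the bound extends to every larger odd number of voters (recall that a tournament has odd majority dimension by \lemref{lem:parity-of-majdim}).

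To obtain $5$-inducibility I would split the arc set as $E = E_1 \cup E_2 \cup E_3$ and invoke \lemref{lem:2n+1-voter-suff} with $k = 2$, taking
\[
E_1 = \bigcup_{1\le i\le m}(U_i\times C_i), \qquad E_2 = E^\varphi, \qquad E_3 = E\setminus(E_1\cup E_2)\text.
\]
These are exactly the dotted, dashed, and remaining arcs of \figref{fig:banks}. It then suffices to check three things: that $E_1$ and $E_2$ are each induced by a $2$-voter profile, that $E_3$ is acyclic, and that $E,E_1,E_2$ are pairwise orientation compatible. The last point is immediate, since $E_1,E_2\subseteq E$ and $E_1,E_2$ live on disjoint sets of vertex pairs ($E_1$ touches $C$, while $E_2=E^\varphi$ lies inside $X$), so no orientation conflict can arise.

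For $E_1$, observe that it is the vertex-disjoint union over $i$ of the unidirected stars $U_i\times\set{c_i}$; by \lemref{lem:stars-2-inducible} each star is $2$-inducible and by \lemref{lem:trans-disjoint-rels} so is their union. The crux is $E_2 = E^\varphi$, and this is exactly where orderedness of $\varphi$ enters. For a single variable $p$, orderedness forces every clause containing $p$ to precede every clause containing $\neg p$; substituting this into the definition of $E^\varphi$ collapses the union $\bigcup_{p,i<j}(U^p_j\times U^{\neg p}_i)$ to the empty set and leaves $\bigcup_{p,i<j}(U^{\neg p}_j\times U^p_i)$ as the complete set of arcs from all $\neg p$-labelled vertices to all $p$-labelled vertices. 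Thus $E^\varphi$ is a vertex-disjoint union over the variables of unidirected complete bipartite digraphs (a bilevel graph), hence $2$-inducible with every arc of weight~$2$ by \lemref{lem:2-voter-char} together with \lemref{lem:trans-disjoint-rels}.

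Finally, $E_3$ consists of the arcs $C_j\times C_i$ ($i<j$), the arcs $C_i\times U_j$ ($i\ne j$), the arcs $(U_i\times U_j)\setminus\overline{E^\varphi}$ ($i<j$), and the within-clause arcs $u_i^1\to u_i^2$, $u_i^2\to u_i^3$, $u_i^1\to u_i^3$. I would prove acyclicity by exhibiting a topological order: list $c_m,\dots,c_0$ first, then the blocks $U_1,\dots,U_m$, and inside each block $u_i^1,u_i^2,u_i^3$; a quick inspection shows every arc of $E_3$ runs forward in this order. Here the modified within-clause arcs matter: the transitive triangle on $\set{u_i^1,u_i^2,u_i^3}$ is acyclic, whereas the original orientation would close a $3$-cycle inside $U_i$ and destroy acyclicity. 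With all hypotheses of \lemref{lem:2n+1-voter-suff} verified, $T^{\ba}_\varphi$ is induced by a $5$-voter profile, which completes the argument. The one genuinely delicate step is the treatment of $E^\varphi$: recognising that passing to \emph{ordered} formulae is precisely what turns the formula-dependent part of the tournament into a $2$-inducible bilevel graph.
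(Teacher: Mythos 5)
Your proof is correct and follows essentially the same route as the paper's: the identical decomposition $E_1=\bigcup_i (U_i\times C_i)$, $E_2=E^\varphi$, $E_3=E\setminus(E_1\cup E_2)$, the same appeal to \lemref{lem:2n+1-voter-suff}, the same use of orderedness to turn $E^\varphi$ into a vertex-disjoint union of unidirected complete bipartite digraphs, and the same acyclicity argument for $E_3$. Your additions (the explicit topological order, the explicit orientation-compatibility check, and the padding argument extending hardness to larger odd electorates) only make explicit what the paper leaves implicit.
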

	\begin{proof}
		Let $(V,E)$ be a tournament in~$\mathcal G^{\ba}$. It suffices to show that $(V,E)$ is induced by a $5$-voter profile. To this end define
	\begin{align*}
		E_1 &= \mathwordbox{\displaystyle\bigcup_{i}}{\displaystyle\bigcup_{\substack{p\in P\\i < j}}} (U_i \times C_i)\text,\\
		E_2 &= E^\varphi\text{, and}\\
		E_3 &= E\setminus(E_1\cup E_2)\text.
	\end{align*}
	Observe that $E=E_1\cup E_2\cup E_3$ and that $E_1$, $E_2$, and~$E_3$ are pairwise disjoint. In virtue of \lemref{lem:2n+1-voter-suff}, it therefore suffices to show that $(V,E_1)$ and $(V,E_2)$ are induced by $2$-voter profiles and that $(V,E_3)$ is acyclic.

	For $(V,E_1)$ it is easy to see that it is a union of unidirected stars and therefore $2$-inducible.
	For $(V,E_2)$, let 
	\[
	E_2^p={\displaystyle\bigcup_{i < j}} \left(U^p_j\times U^{\neg p}_i\right) \cup \left(U^{\neg p}_j \times U^p_i\right)
	\]
	be the arcs in $E_2$ associated with a variable~$p$. Note that $E_2 = \bigcup_{p\in P} E_2^p$ and that all $E_2^p$ are vertex-disjoint from each other.
	Recall that $(V,E)$ was in induced through a construction that was based on an \emph{ordered} formula. This implies that whenever $U^p_j\neq \emptyset \neq U^{\neg P}_i$ we have that $i$ is greater than~$j$. Therefore, $E_2^p$ can also be written as $\bigcup_{i,j}(U^{\neg p}_i\times U^{p}_j)$. In this representation, it is clear that $E_2^p$ is a complete, unidirected bipartite digraph. But then, $E_2$ as a vertex-disjoint union of such digraphs is a \emph{bilevel graph} and $2$-inducible according to Lemma~2 by \citet{ErMo64a}, \cf Footnote~\ref{foot:bilevel} on page \pageref{foot:bilevel}.

	To see that $E_3$ is acyclic, note that it forms a subset of 
	\[
		C \times U \cup \bigcup_{i < j} (U_i\times U_j) \cup \bigcup_{i>j} (C_i \times C_j) \cup \bigcup_{1\le i\le m}\big((U^1_i\times U^2_i)\cup(U^2_i\times U^3_i)\cup(U^1_i\times U^3_i)\big)
	\]
	and corresponds to all (shown) horizontal and (missing) downward arcs in \figref{fig:banks}.
	\end{proof}


	\subsection{The Tournament Equilibrium Set} 
	\label{sub:the_tournament_equilibrium_set}

	The \emph{tournament equilibrium set} ($\teq$) is another voting rule that, like the Banks set, selects a subset of alternatives from each tournament \citep[see, \eg][]{Lasl97a,BBH15a}. Its recursive definition is based on the notion of retentiveness. Given a tournament $(V,E)$, a subset $X\subseteq V$ is said to be \emph{$\teq$-retentive} if for all $v\in X$ all alternatives chosen by~$\teq$ from the subtournament of $(V,E)$ induced by $\set{w\in V\colon (w,v)\in E}$ are contained in~$X$. $\teq$ is then defined so as to select the union of the inclusion-minimal $\teq$-retentive sets from each tournament.

	\citet{BFHM09a} have shown that computing~$\teq$ is \NP-hard by a reduction from \threesat. By \lemref{lem:ordered-threesat-hardness}, the very same construction is also a valid reduction from \orderedthreesat.
	For every formula~$\varphi$ in ordered \threecnf a tournament~$T^{\teq}_\varphi$ 
	can be constructed such that $\teq$ selects a decision vertex~$c_0$ from $T^{\teq}_\varphi$ if and only if $\varphi$ is satisfiable. 
	The class of these tournaments~$T^{\teq}_\varphi$ is denoted by $\mathcal G^{\teq}$ and the tournaments in this class bear a strong structural similarity to those in $\mathcal G^{\ba}$, which can be exploited to show that every tournament in~$\mathcal G^{\teq}$ is induced by a $7$-voter profile.

	A tournament $(V,E)$ is in \emph{the class $\mathcal G^{\teq}$} if it satisfies the following properties. There is an odd integer~$m$ with $m\equiv 1\pmod{4}$ such that,
	\[V=C\cup U_1\cup\cdots\cup U_m \text,\]
	where $C,U_1,\ldots, U_{m}$ are defined the same as in $\mathcal G^{\ba}$. We have $C_i$ denote the singleton~$\set{c_i}$.
	Moreover, let $X=\bigcup\set{U_i:i\equiv 1\pmod 4}$, $Y=\bigcup\set{U_i:\text{$i$ is even}}$, and $Z=\bigcup\set{U_i:i\equiv 3\pmod 4}$. 

	Intuitively, $(V,E)$ is $T^{\teq}_\varphi$ for some~$\varphi$ in ordered \threecnf with $\frac{1}{4}(m+3)$ clauses. Every~$U_i\in X$ corresponds to a clause of~$\varphi$ and the vertices it contains represent (tokens of) literals. Again, we assume each of these vertices~$u_i^j$ to be labeled by the literal~$\lambda(u_i^j)$ it represents. 
	 For $i \in \set{1,5,\ldots,m}$ and $ j \in \set{1,2,3}$ we define,
	\begin{align*}
		U_i^j        &=  	\set{u_i^j}\text{,}\\
		U_i^{p}    &=\set{u\in U_i: \lambda(u)=\mathwordbox[r]{p}{\neg p}}\text{, and} \\
		U_i^{\neg p}    &=\set{u\in U_i: \lambda(u)=\neg p}\text{.}
	\end{align*}
	Moreover, for the other values of $i$, and $ j \in \set{1,2,3}$, we stipulate,
	\[
		U_i^j=U^{p}_i=U^{\neg p}_i=\emptyset\text.
	\]
	Observe that $\bigcup_{\substack{p\in\props\\1\le i\le m}}(U_i^p\cup U_i^{\neg p})=X$.

	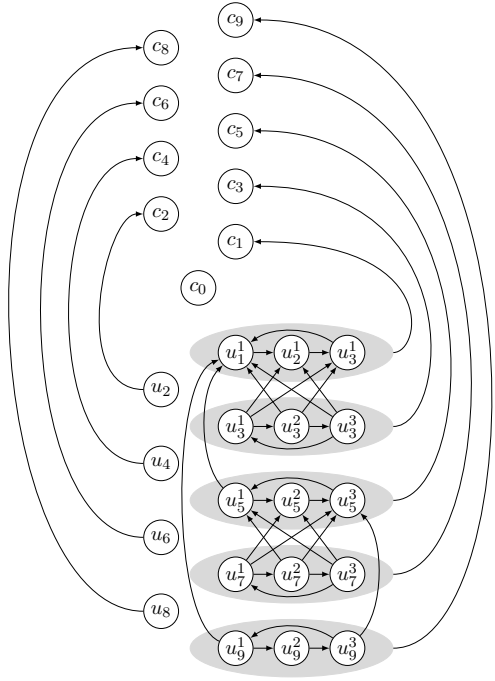
\begin{figure}[htb]
	  \centering
	\scalebox{.7}{
	  \begin{tikzpicture}[auto,scale=.7]
	  \draw[use as bounding box,draw=white] (-6.5,-1) rectangle (7.5,17.5);
	    \tikzstyle{every ellipse node}=[draw,inner xsep=3.5em,inner ysep=1em,fill=black!15!white,draw=black!15!white]
	 \tikzstyle{every circle node}=[fill=white,draw,minimum size=1.7em,inner sep=0pt]

	  \foreach \x / \y in {0/3,4/2,8/1,2/2a,6/1a}
	    \draw  (1.5,\x) node[ellipse] (ellipse\y){};
	  \draw (0,0)  		node[circle](x31){$u_9^1$} ++(1.5,0) node[circle](x32){$u_9^2$} ++(1.5,0) node[circle](x33){$u_9^3$};
	    \draw (0,2)		node[circle](z21){$u_7^1$} ++(1.5,0) node[circle](z22){$u_7^2$} ++(1.5,0) node[circle](z23){$u_7^3$};
	  \draw (0,4)		node[circle](x21){$u_5^1$} ++(1.5,0) node[circle](x22){$u_5^2$} ++(1.5,0) node[circle](x23){$u_5^3$};
	    \draw (0,6)		node[circle](z11){$u_3^1$} ++(1.5,0) node[circle](z12){$u_3^2$} ++(1.5,0) node[circle](z13){$u_3^3$};
	  \draw (0,8)		node[circle](x11){$u_1^1$} ++(1.5,0) node[circle](x12){$u_2^1$} ++(1.5,0) node[circle](x13){$u_3^1$};
	    \draw (-2,1)  	node[circle](y4){$u_8$};
	    \draw (-2,3)	node[circle](y3){$u_6$};
	  \draw (-2,5)	node[circle](y2){$u_4$};	
	    \draw (-2,7)	node[circle](y1){$u_2$};
	    \draw (-1,9.75)		node[circle](d){$c_0$};
	    \draw (0,11)		node[circle](c1){$c_1$} ++(0,1.5) node[circle](c1a){$c_3$} ++(0,1.5) node[circle](c2){$c_5$} ++(0,1.5) node[circle](c2a){$c_7$} ++(0,1.5) node[circle](c3){$c_9$};
	    \draw (-2,11.75)		node[circle](c1b){$c_2$} ++(0,1.5) node[circle](c2b){$c_4$} ++(0,1.5) node[circle](c3b){$c_6$} ++(0,1.5) node[circle](c4b){$c_8$};
	    \foreach \x in {1,2,3}
	    \draw[-latex] (x\x3) .. controls ++(-1,.7) and ++(1,.7) .. (x\x1);
	    \foreach \x in {1,2,3}
	    \draw[-latex] (x\x1) -- (x\x2);
	    \foreach \x in {1,2,3}
	    \draw[-latex] (x\x2) -- (x\x3);
	  \foreach \x in {1,2} \foreach \y / \z in {1/2,2/3}
	    \draw[-latex] (z\x\y) -- (z\x\z);
	  \foreach \x in {1,2}
	    \draw[-latex] (z\x3) .. controls ++(-1,-.7) and ++(1,-.7) .. (z\x1);
	    \draw[-latex] (y1) .. controls  +(-2.5,0) and +(-1.5,0) .. (c1b);
	    \draw[-latex] (y2) .. controls +(-3.25,0) and +(-3,0) .. (c2b);
	    \draw[-latex] (y3) .. controls +(-4,0) and +(-4.25,0) .. (c3b);
	    \draw[-latex] (y4) .. controls +(-4.75,0) and +(-5.75,0) .. (c4b);
	    \draw[-latex] (ellipse1) .. controls ++(3.5,0) and +(5.5,0) .. (c1);
	    \draw[-latex] (ellipse2) .. controls ++(5,0) and +(7,0) .. (c2);
	    \draw[-latex] (ellipse3) .. controls ++(6.5,0) and +(8.5,0) .. (c3);
	  \draw[-latex] (ellipse1a) .. controls ++(4.25,0) and +(6.25,0) .. (c1a);
	  \draw[-latex] (ellipse2a) .. controls ++(5.75,0) and +(7.75,0) .. (c2a);
	    \foreach \x in {1,2} \foreach \y / \z in {1/2,1/3,2/1,2/3,3/1,3/2}
	  \draw[-latex] (z\x\y) -- (x\x\z);
	    \draw[-latex] (x21) .. controls ++(-1,1) and ++(-1,-1) .. (x11);
	    \draw[-latex] (x33) .. controls ++(1,1) and ++(1,-1) .. (x23);
	    \draw[-latex] (x31) .. controls ++(-1.75,0.75) and ++(-1.75,-0.75) .. (x11);
	\end{tikzpicture}
	}
	  \caption{A tournament $T^{\teq}_\varphi=(V,E)$ in the class $\mathcal G^{\teq}$. Omitted arcs point downwards.}
	  \label{fig:teq}
	\end{figure}

	We are now in a position to define the arc set~$E$.
		\begin{align*}
			E	=	
			&	\mathwordbox{\displaystyle\bigcup_{i<j}		   }{\displaystyle\bigcup_{1\le i\le m}}(C_j\times C_i) \cup \bigcup_{i\neq j}(C_i\times U_j)\cup \bigcup_{i=j}(U_j\times C_i)\cup\text{}\\
			&	\mathwordbox{\displaystyle\bigcup_{1\le i\le m}}{\displaystyle\bigcup_{1\le i\le m}}\big((U^1_i\times U^2_i)\cup(U^2_i\times U^3_i)\cup(U^3_i\times U^1_i)\big)\cup\text{}\\
			&	\mathwordbox{\displaystyle\bigcup_{i<j}		   }{\displaystyle\bigcup_{1\le i\le m}}\big((U_i\times U_j)\setminus (\overline {E^\varphi}\cup \overline{E^z})\big)\cup E^\varphi \cup E^z
							\text,
		\end{align*}
	where
	\[
		E^\varphi =\bigcup_{\substack{p\in P\\i>j}} (U^p_i\times U^{\neg p}_j)\cup \bigcup_{\substack{p\in P\\i>j}} (U^{\neg p}_i\times U^{p}_j)\qquad\text{and}\qquad
		E^z = \bigcup_{\substack{l\neq l'\\i=j+2}} (U^l_i\times U^{l'}_j)\text.
	\]
	An example of such a tournament is depicted in~\figref{fig:teq}. The notable structural differences to $\mathcal{G}^{\ba}$ are the cycles in~$U_i$ for odd~$i$ and the arcs~$E^z$ between $Z$ and~$X$. Next, we show that every tournament~$T^{\teq}_\varphi$ is induced by a $7$-voter profile, using the same approach as in \thmref{thm:banks-np}.

	\begin{theorem} \label{thm:teq-np}
	Computing $\teq$ is \NP-hard if the number of voters is at least $7$.
	\end{theorem}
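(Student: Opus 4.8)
The plan is to follow the template of \thmref{thm:banks-np} verbatim: I will show that every tournament $(V,E)$ in the class $\mathcal G^{\teq}$ is induced by a $7$-voter profile, and then combine this with the reduction of \citet{BFHM09a} and the \NP-completeness of \orderedthreesat (\lemref{lem:ordered-threesat-hardness}) to conclude that deciding $c_0\in\teq$ is \NP-hard already for seven voters. Since $7=2\cdot 3+1$, the right instrument is \lemref{lem:2n+1-voter-suff} with $k=3$: it suffices to exhibit three arc sets $E_1,E_2,E_3$, each induced by a $2$-voter profile and pairwise orientation compatible (and compatible with $E$), such that $E\setminus(E_1\cup E_2\cup E_3)$ is acyclic. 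Orientation compatibility will be automatic here, since all the $E_i$ are subsets of the asymmetric relation $E$; so the real content is the $2$-inducibility of the three sets and the acyclicity of the remainder.

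Two of the sets can be imported directly from the Banks construction. The clause-to-decision-vertex arcs $\bigcup_i(U_i\times C_i)$ form a vertex-disjoint union of unidirected stars, hence are $2$-inducible by \lemref{lem:stars-2-inducible} and \lemref{lem:trans-disjoint-rels}; and $E^\varphi$, thanks to the \emph{ordered} form of $\varphi$, is a bilevel graph and therefore $2$-inducible by Lemma~2 of \citet{ErMo64a}, exactly as in \thmref{thm:banks-np}. The work thus concentrates on the two features that separate $\mathcal G^{\teq}$ from $\mathcal G^{\ba}$: the genuine $3$-cycles $(U^1_i\times U^2_i)\cup(U^2_i\times U^3_i)\cup(U^3_i\times U^1_i)$ inside each odd $U_i$, and the backward arc set $E^z$ between the layers $X$ and $Z$. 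Both are sources of directed cycles that the Banks tournaments do not contain, so the single odd voter provided by \lemref{lem:2n+1-voter-suff} cannot absorb them on its own: a linear order realises no $3$-cycle at unit weight, and the backward $E^z$-arcs, together with the forward diagonal arcs and the triangle edges, close up.

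My plan is therefore to break each triangle by moving one of its arcs, say $U^3_i\times U^1_i$, into the star set, taking $E_1=\bigcup_i(U_i\times C_i)\cup\bigcup_i(U^3_i\times U^1_i)$. One checks that this enlarged set is still transitive (the only new two-paths, $u^3_i\to u^1_i\to c_i$, close because $u^3_i\to c_i\in E_1$) and that its incomparability graph, a disjoint union of small components, is transitively orientable, so that $E_1$ remains $2$-inducible. After this, the within-$U_i$ remnants are the paths $u^1_i\to u^2_i\to u^3_i$, and the only remaining backward arcs are those of $E^z$. Absorbing $E^z$ into the third set $E_3$ then renders $E\setminus(E_1\cup E_2\cup E_3)$ acyclic: every surviving arc runs downward among the $C_i$, from a $C_i$ into a foreign $U_j$, from a lower- to a higher-indexed $U_i$, or along a triangle path, and all of these respect the order that lists $c_m,\dots,c_0$ first and then the layers $U_1,\dots,U_m$ with each triangle taken in path order.

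The main obstacle is the $2$-inducibility of $E_3$, which is genuinely harder than for $E^\varphi$. The set $E^z$ is \emph{not} a bilevel graph: its bipartite blocks between consecutive odd layers miss their diagonals, and such a block has the triangular prism as its incomparability graph, which admits no transitive orientation; moreover, if the blocks chain through three consecutive layers, $E^z$ is not even transitive. Hence $E_3$ cannot simply be taken to equal $E^z$, and the crux of the proof is to use the hypothesis $m\equiv 1\pmod 4$ to organise the $Z$-to-$X$ arcs so that $E_3$ (possibly after redistributing some $E^z$-arcs between $E_2$ and $E_3$ and pairing them with the appropriate diagonal or triangle arcs) decomposes, via \lemref{lem:trans-disjoint-rels}, into vertex-disjoint pieces each of which is transitive and has a transitively orientable incomparability graph. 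Pinning down this grouping and verifying the two conditions of \lemref{lem:2-voter-char} for every piece is where I expect essentially all of the difficulty to lie; once it is done, \lemref{lem:2n+1-voter-suff} delivers a $7$-voter profile, and padding with mutually reversed pairs of voters (which leave the majority tournament unchanged, as in \lemref{lem:parity-of-majdim}) extends the hardness to every odd number of voters at least $7$.
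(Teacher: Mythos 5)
Your overall framework is the same as the paper's: partition $E$ into three $2$-inducible sets plus an acyclic remainder and invoke \lemref{lem:2n+1-voter-suff} with $k=3$; your treatment of $E^\varphi$ as a bilevel graph and your idea of breaking each $3$-cycle by moving $U_i^3\times U_i^1$ into a transitive set also match the paper, and your diagnosis of the obstruction for $E^z$ is correct (a single block $\bigcup_{l\neq l'}(U_i^l\times U_{i-2}^{l'})$ is transitive but has the triangular prism as its incomparability graph, which admits no transitive orientation; note also that the blocks never ``chain'', since consecutive $Z$-layers are four indices apart, so that worry is moot). But the proposal stops exactly at the decisive step: you never exhibit a $2$-inducible set covering the $E^z$-arcs, deferring the ``grouping'' and the \lemref{lem:2-voter-char} verifications to future work. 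That is not a routine check one may postpone---it is the technical core of the theorem, and it is where the paper's proof does almost all of its work: the paper splits each block, placing the two arcs $(U_i^1\cup U_i^3)\times U_{i-2}^2$ into $E_1$ together with the triangle-breakers $\bigcup_i (U_i^3\times U_i^1)$ \emph{and} the downward arcs $\bigcup_{i>j}\big(C_i\times(C_j\cup U_j)\big)$, keeping $E_3=E^z\setminus E_1$ (four arcs per block), and then proves $2$-inducibility of both sets by exhibiting explicit transitive reorientations of their incomparability graphs (see \figref{fig:teq-E3-reorientation}).

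Worse, your specific choice $E_1=\bigcup_i(U_i\times C_i)\cup\bigcup_i(U_i^3\times U_i^1)$ actively blocks this resolution. Any $2$-inducible set must be transitive (\lemref{lem:2-voter-char}) and, for \lemref{lem:2n+1-voter-suff}, orientation compatible with $E$. If an arc $(u_i^1,u_{i-2}^2)\in E^z$ were added to your $E_1$, the path $u_i^1\to u_{i-2}^2\to c_{i-2}$ would force $(u_i^1,c_{i-2})\in E_1$; but $E$ contains the opposite arc $(c_{i-2},u_i^1)$, a contradiction. Attaching these arcs to $E_2=E^\varphi$ fails for the same reason: whenever $u_{i-2}^2$ has an outgoing $E^\varphi$-arc to a lower $X$-layer, transitivity would force an arc from the $Z$-layer vertex $u_i^1$ down to that layer, which is likewise reversed in $E$. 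This is precisely why the paper keeps the upward arcs $U_j\times C_j$ \emph{out} of $E_1$ (they go to the acyclic remainder $E_4$) and builds $E_1$ from the downward $C$-arcs instead. So your plan cannot be completed with the sets as you chose them; any completion requires a different distribution of the six $E^z$-arcs per block among the three sets, together with the transitivity and reorientability verifications---exactly the content you left open.
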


	\begin{proof}
		Similar to the proof for~\thmref{thm:banks-np}, it suffices to show that every tournament $(V,E)$ in~$\mathcal G^{\teq}$ is induced by a $7$-voter profile. To achieve this, we partition $E$ into four disjoint arc sets $E_1,E_2,E_3,E_4\subseteq E$ and show that the digraphs $(V,E_1)$, $(V,E_2)$, and $(V,E_3)$ are each induced by $2$-voter profiles as well as that  $(V,E_4)$ is acyclic. Then the result follows from \lemref{lem:2n+1-voter-suff}. 

		While the tournaments in~$\mathcal{G}^{\teq}$ are very similar to the ones in~$\mathcal{G}^{\ba}$, the introduction of new vertices and arcs makes finding an appealing partition a bit trickier.
		We define
		\begin{align*} 
			\displaystyle
			E_1 ={}&  \mathwordbox{\displaystyle\bigcup_{i>j}}{\displaystyle\bigcup_{\substack{i\equiv 3\\\text{mod } 4}}}
					\big(C_i \times (C_j\cup U_j)\big) \cup \mathwordbox{\displaystyle\bigcup_{i}}{\displaystyle\bigcup_{i>j}} (U_i^3 \times U_i^1) \cup 
			  {\bigcup_{\substack{i\equiv 3\\\text{mod } 4}}} \big((U_i^1\cup U_i^3)\times U_{i-2}^2\big)\text{,} \\
			E_2 ={}& \mathwordbox[r]{E^\varphi}{\bigcup_{\substack{i\equiv 3\\\text{mod } 4}}}\text{,} \\
			E_3 ={}&  \mathwordbox[r]{E^z}{\bigcup_{\substack{i\equiv 3\\\text{mod } 4}}} \setminus E_1\text{, and} \\
			E_4 ={}& \mathwordbox[r]{\mathwordbox[l]{E\setminus(E_1 \cup E_2 \cup E_3)\text{.}}{E^\varphi}}{\bigcup_{\substack{i\equiv 3\\\text{mod } 4}}}
		\end{align*}
		It can readily be appreciated that $E_1$, $E_2$, and $E_3$ are contained in~$E$ (see~\figref{fig:teq-subgraphs}). Also, they are pairwise disjoint and therefore $\set{E_1,E_2,E_3,E_4}$ is proper partition of~$E$.
	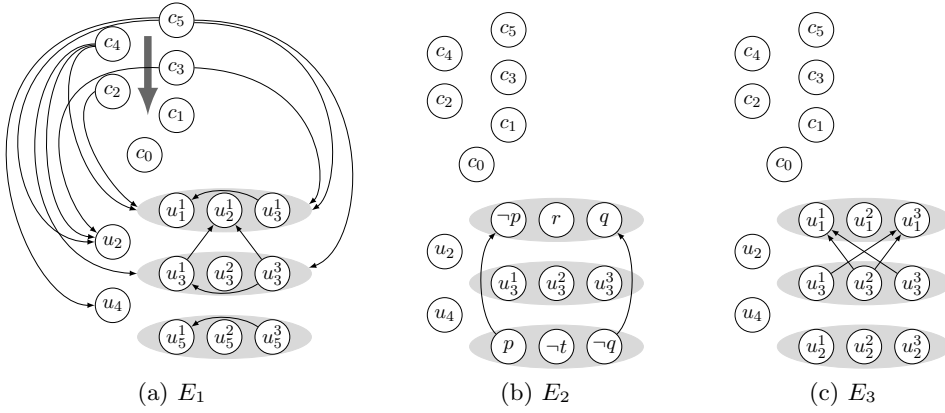
\begin{figure}[htbp]
	\centering
	\subfigureCMD{$E_1$}{}{
	\scalebox{0.7}{
	\begin{tikzpicture}[auto,scale=.6]
		\useasboundingbox (-5.3,14.5) rectangle (5,3);
		\tikzstyle{every ellipse node}=[draw,inner xsep=3em,inner ysep=.75em,fill=black!15!white,draw=black!15!white]
		\tikzstyle{every circle node}=[fill=white,draw,minimum size=1.7em,inner sep=0pt]
		\foreach \x / \y in {4/3,6/2,8/1}
			\draw  (1.5,\x) node[ellipse] (ellipse\y){};
		\draw (0,4)		node[circle](x21){$u_5^1$} ++(1.5,0) node[circle](x22){$u_5^2$} ++(1.5,0) node[circle](x23){$u_5^3$};
		\draw (0,6)		node[circle](z11){$u_3^1$} ++(1.5,0) node[circle](z12){$u_3^2$} ++(1.5,0) node[circle](z13){$u_3^3$};
		\draw (0,8)		node[circle](x11){$u_1^1$} ++(1.5,0) node[circle](x12){$u_2^1$} ++(1.5,0) node[circle](x13){$u_3^1$};
		\draw (-2,5)	node[circle](u4){$u_4$};	
		\draw (-2,7)	node[circle](u2){$u_2$};
		\draw (-1,9.75)		node[circle](d){$c_0$};
		\draw (0,11)		node[circle](c1){$c_1$} ++(0,1.5) node[circle](c3){$c_3$} ++(0,1.5) node[circle](c5){$c_5$} ++(0,1.5);
		\draw (-2,11.75)		node[circle](c2){$c_2$} ++(0,1.5) node[circle](c4){$c_4$} ++(0,1.5);
		\draw[-latex,line width=4pt,gray!80!black] (-.9,13.5) -- (-.9,11);
		\foreach \x in {1,2}
			\draw[-latex] (x\x3) .. controls ++(-1,.7) and ++(1,.7) .. (x\x1);
		\foreach \x in {1}
			\draw[-latex] (z\x3) .. controls ++(-1,-.7) and ++(1,-.7) .. (z\x1);
		\foreach \x in {1} \foreach \y / \z in {1/2,3/2}
			\draw[-latex] (z\x\y) -- (x\x\z);

		\draw[-latex,black,] (c2.180) .. controls ++(-1,-.7) and ++(-1,.7) .. (ellipse1.180-2); 

		\draw[-latex,black,] (c3.0) .. controls ++(4.2,-.2) and ++(0.6,.5) .. (ellipse1.4); 
		\draw[-latex,black,] (c3.180) .. controls ++(-4.5,0.2) and ++(-1.2,1) .. (u2.150); 	
		\draw[-latex,black,] (c4.180+8) .. controls ++(-1.5,0) and ++(-2.2,.7) .. (ellipse1.west);
		\draw[-latex,black,] (c4.180+4) .. controls ++(-2,0) and ++(-2,.2) .. (u2.165);
		\draw[-latex,black,] (c4.180) .. controls ++(-2.5,0) and ++(-4.2,.2) .. (ellipse2.180); 

		\draw[-latex,black,] (c5.0-8) .. controls ++(5,0) and ++(1,.7) .. (ellipse1.east); 
		\draw[-latex,black,] (c5.180) .. controls ++(-6,0) and ++(-3,0) .. (u2.180); 	
		\draw[-latex,black,] (c5.0) .. controls ++(6,0) and ++(2,.7) .. (ellipse2.4); 
		\draw[-latex,black,] (c5.180-8) .. controls ++(-7.5,0) and ++(-2.5,0) .. (u4.west);
	\end{tikzpicture}
	} 
	} 
	\hspace{.5cm}
	\subfigureCMD{$E_2$}{}{
	\scalebox{0.7}{
	\begin{tikzpicture}[auto,scale=.6]
		\tikzstyle{every ellipse node}=[draw,inner xsep=3em,inner ysep=.75em,fill=black!15!white,draw=black!15!white]
		\tikzstyle{every circle node}=[fill=white,draw,minimum size=1.7em,inner sep=0pt]
		\foreach \x / \y in {4/3,6/2,8/1}
			\draw  (1.5,\x) node[ellipse] (ellipse\y){};
		\draw (0,4)		node[circle](x21){$p$} ++(1.5,0) node[circle](x22){$\neg t$} ++(1.5,0) node[circle](x23){$\neg q$};
		\draw (0,6)		node[circle](z11){$u_3^1$} ++(1.5,0) node[circle](z12){$u_3^2$} ++(1.5,0) node[circle](z13){$u_3^3$};
		\draw (0,8)		node[circle](x11){$\neg p$} ++(1.5,0) node[circle](x12){$r$} ++(1.5,0) node[circle](x13){$q$};
		\draw (-2,5)	node[circle](y2){$u_4$};	
		\draw (-2,7)	node[circle](y1){$u_2$};
		\draw (-1,9.75)		node[circle](d){$c_0$};
		\draw (0,11)		node[circle](c1){$c_1$} ++(0,1.5) node[circle](c1a){$c_3$} ++(0,1.5) node[circle](c2){$c_5$} ++(0,1.5);
		\draw (-2,11.75)		node[circle](c1b){$c_2$} ++(0,1.5) node[circle](c2b){$c_4$} ++(0,1.5);
		\draw[-latex] (x23) .. controls ++(1,1) and ++(1,-1) .. (x13);
		\draw[-latex] (x21) .. controls ++(-1,1) and ++(-1,-1) .. (x11);
	\end{tikzpicture}
	} 
	} 
	\hspace{.5cm}
	\subfigureCMD{$E_3$}{}{
	\scalebox{0.7}{
	\begin{tikzpicture}[auto,scale=.6]
		\tikzstyle{every ellipse node}=[draw,inner xsep=3em,inner ysep=.75em,fill=black!15!white,draw=black!15!white]
		\tikzstyle{every circle node}=[fill=white,draw,minimum size=1.7em,inner sep=0pt]
		\foreach \x / \y in {4/3,6/2,8/1}
			\draw  (1.5,\x) node[ellipse] (ellipse\y){};
		\draw (0,4)		node[circle](x21){$u_2^1$} ++(1.5,0) node[circle](x22){$u_2^2$} ++(1.5,0) node[circle](x23){$u_2^3$};
		\draw (0,6)		node[circle](z11){$u_3^1$} ++(1.5,0) node[circle](z12){$u_3^2$} ++(1.5,0) node[circle](z13){$u_3^3$};
		\draw (0,8)		node[circle](x11){$u_1^1$} ++(1.5,0) node[circle](x12){$u_1^2$} ++(1.5,0) node[circle](x13){$u_1^3$};
		\draw (-2,5)	node[circle](y2){$u_4$};	
		\draw (-2,7)	node[circle](y1){$u_2$};
		\draw (-1,9.75)		node[circle](d){$c_0$};
		\draw (0,11)		node[circle](c1){$c_1$} ++(0,1.5) node[circle](c1a){$c_3$} ++(0,1.5) node[circle](c2){$c_5$} ++(0,1.5);
		\draw (-2,11.75)		node[circle](c1b){$c_2$} ++(0,1.5) node[circle](c2b){$c_4$} ++(0,1.5);
		\foreach \x in {1}
			\foreach \y/\z in {1/3,2/1,2/3,3/1}
				\draw[-latex] (z\x\y) to (x\x\z);
	\end{tikzpicture}
	} 
	} 
	\caption{Illustration of the arc sets $E_1,E_2,E_3\subset E$ in $T^{\teq}_\varphi=(V,E)$. The thick arrow on the left represents all arcs $\bigcup_{i<j} \{(c_j,c_i)\}$ being part of~$E_1$.}
	\label{fig:teq-subgraphs}
	\end{figure}

	To show that $(V,E_1)$ is $2$-inducible, we define
	\begin{align*}
		E_1' ={}& \mathwordbox{\displaystyle\bigcup_{\substack{i\leq j\\ U_j\subset X\cup Z}}}{\displaystyle\bigcup_{U_j\subset X\cup Z}} \left(C_i \times U_j\right) \cup \bigcup_{\substack{i\leq j\\ U_j\subset Y}} (U_j \times C_i) \cup \text{} 
		 \mathwordbox{\displaystyle\bigcup_{\substack{i<j\\U_i,U_j \subset X\cup Z}}}{\displaystyle\bigcup_{U_j\subset X\cup Z}} \left(\left(U_i \times U_j\right) \setminus E_1\right) \cup \bigcup_{\substack{i<j\\U_i,U_j\subset Y}} (U_j\times U_i) \cup \text{}\\
		&  \mathwordbox{\displaystyle\bigcup_{i\text{ odd}}}{\displaystyle\bigcup_{U_j\subset X\cup Z}} \big((U_i^1\cup U_i^3)\times U_i^2\big) \cup \left(Y\times (X\cup Z)\right)\text{.}\\
	\end{align*}

	It is straightforward to check that $E'_1$ is a reorientation of $\incompart E_1$. 
	Also, it is easy but tedious, by making the obvious case distinctions, to show for $E_1$ and $E_1'$ that
	the out-neighborhood of each vertex is contained in the out-neighborhood of each of its in-neighbors, implying that~$E_1$ and~$E_1'$ are both transitive.\footnote{In a digraph $(V,E)$, the out-neighbors of a vertex $x$ are given by $D_x=\{y\in V\colon (x,y)\in E\}$. Analogously, the in-neighbors of $x$ are defined as $\overline{D}_x=\{y\in V\colon (y,x)\in E\}$.}
	For example, consider a vertex~$u_i^1 \in X$ in $E_1'$ for which
	\[	
		D=U_i^2 \cup \bigcup_{\substack{j>i\\j\text{ odd}}} U_j \qquad \text{and} \qquad	\overline{D}=Y\cup\bigcup_{\substack{j<i\\j\text{ odd}}} U_j \cup \bigcup_{j\leq i}C_j
	\]
	denote the set of all out-neighbors and all in-neighbors of~$u_i^1$ in $(V,E_1')$, respectively. It is straightforward to check that every vertex in~$\overline{D}$ also has an arc in~$E_1'$ to every vertex in~$D$.
	
	Thus, in virtue of~\lemref{lem:2-voter-char}, $(V,E_1)$ is induced by a $2$-voter profile.

	The proof for $(V,E_2)$ being $2$-inducible is analogous to the proof of the same statement in the Banks construction (see \thmref{thm:banks-np}). This is also where the orderedness of~$\varphi$ is exploited.

	The digraph $(V,E_3)$ is obviously transitive. We also observe that it consists of isomorphic and vertex-disjoint subgraphs $(U_i\cup U_{i-2},E_{3,i})$ for $i \equiv 3 \pmod 4$ with $E_i=(U^l_i\times U^{l'}_{i-2})$ for $l\neq l'$. 
	It is sufficient to find a general transitive reorientation~$E_{3,i}'$ on such a subgraph because then every completion of $\bigcup_{i\equiv 3(\text{mod }4)}E_{3,i}'$ is a transitive reorientation of~$\incompart E_3$.
	We define 
	\begin{align*}
		E_{3,i}' ={}&
		  (U_i\times U_{i-2}^2) \cup \big((U_{i-2}^1\cup U_{i-2}^3)\times U_{i-2}^2\big) \cup \text{} 
		 \big(( U_{i-2}^3 \cup U_{i}^1 ) \times (U_{i-2}^1\cup U_{i}^3) \big) \cup\text{}\\
		& (U_i^1\times U_i^2) \cup (U_i^2 \times U_i^3).
	\end{align*}
	This subgraph set is also shown in \figref{fig:teq-E3-reorientation} and it is easy to verify that it is indeed transitive.

	\begin{figure}[htbp]
	\centering
	\begin{tikzpicture}
		\tikzstyle{every circle node}=[fill=white,draw,minimum size=1.6em,inner sep=0pt]
		\draw (0,6)		node[circle](z11){$u_3^1$} ++(1.5,0) node[circle](z12){$u_3^2$} ++(1.5,0) node[circle](z13){$u_3^3$};
		\draw (0,8)		node[circle](x11){$u_1^1$} ++(1.5,0) node[circle](x12){$u_1^2$} ++(1.5,0) node[circle](x13){$u_1^3$};
		\foreach \x in {1}
			\foreach \y/\z in {1/3,2/1,2/3,3/1}
				\draw[-,dotted] (z\x\y) to (x\x\z);
		\foreach \sourcelevel/\sourceindex/\targetlevel/\targetindex in {x/1/x/2,x/3/x/2,z/1/x/1,z/1/x/2,z/1/z/2,z/2/x/2,z/2/z/3,z/3/x/2,x/3/z/3}
			\draw[-latex] (\sourcelevel1\sourceindex) to (\targetlevel1\targetindex);
		\draw[-latex] (x13) to [bend right=25] (x11);
		\draw[-latex] (z11) to [bend right=25] (z13);
	\end{tikzpicture}
	\caption{The arc set~$E_{3,3}'$ which is part of the reorientation~$E_3'$ of~$\incompart E_3$ in the proof of \thmref{thm:teq-np}. Dotted arcs denote the incomparability subgraph of $E_{3,3}'$.}
	\label{fig:teq-E3-reorientation}
	\end{figure}
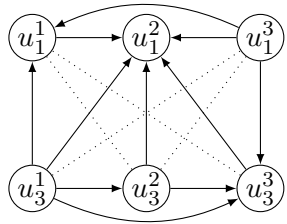

	Finally, to see the acyclicity of $(V,E_4)$, observe that 
	\[
		E_4 = \bigcup_{i<j} (C_i \times U_j) \cup \bigcup_{i<j} \left((U_i \times U_j) \setminus (\overline{E^\varphi} \cup \overline{E^z})\right) \cup
		 		 \bigcup_i \left( (U_i^1 \times U_i^2) \cup (U_i^2 \times U_i^3) \right)
	\]
	and that $E_4$ is thereby contained in the transitive closure of the ordering
	\[
		(c_0,u_1^1,u_1^2,u_1^3,c_1,u_2,c_2,u_3^1,u_3^2,u_3^3,c_3,u_4,c_4,u_5^1,\ldots,c_m)\text{.}
	\]
	This finishes the proof. 
	\end{proof}


\subsection{Slater's Rule and Kemeny's Rule} 
\label{sub:the_slater_set}
Slater's rule seeks linear rankings over alternatives that minimally conflict with the pairwise majority relation and returns the maximal elements of these rankings \citep[see, \eg][]{Lasl97a,BBH15a}. Formally, given a tournament $(V,E)$, the \emph{Slater score} of a linear ranking $\succ$ of $V$ is the number of pairs $(x,y)\in V\times V$ such that both $x\succ y$ and $(x,y)\in E$. A \emph{Slater ranking} is a ranking $\succ$ with maximum Slater score. The \emph{Slater set} consists of all those alternatives $v\in V$ that occur at the top of a Slater ranking. There is a close relationship between Slater rankings and feedback arc sets: maximizing the number of agreed-upon pairwise comparison is the same as minimizing the number of arcs of $(V,E)$ that need to be turned around so as to produce a transitive (and so acyclic) tournament. This connection makes it easy to show that computing Slater rankings is \NP-hard in general digraphs, since it is well-known that the feedback arc set problem is NP-hard \citep{GaJo79a}. To show NP-hardness of the feedback arc set problem restricted to tournaments was a long-standing open problem, which was then solved independently by \citet{Alon06a}, \citet{Coni06a}, and \citet{CTY07a}. As a consequence, we now know that computing Slater rankings and the Slater set is \NP-hard for tournaments \citep[see also][]{Hudr09b}. 

A close relative to Slater's rule is \emph{Kemeny's rule}. While Slater's rule only uses the information contained in the pairwise majority relation, Kemeny's rule also takes into account the magnitude of majority comparison, so that the input to Kemeny's rule is a \emph{weighted} majority tournament. A further difference is that, while Slater's rule is typically used so as to produce a \emph{set of winners}, Kemeny's rule is typically used to find \emph{consensus rankings}. 
Kemeny's rule has very appealing axiomatic properties \citep{YoLe78a}. Let us now formally define Kemeny's rule. Given a weighted digraph $(V,\weight)$, the \emph{Kemeny score} of a linear ranking $\succ$ of $V$ is $\sum_{x\succ y} \weight(x,y)$, and a \emph{Kemeny ranking} is a ranking with maximum Kemeny score. The Kemeny rule just returns all Kemeny rankings. Again, notice the close connection to the (weighted) feedback arc set problem. Further, notice that Kemeny's and Slater's rules coincide on tournaments where every arc has weight 1.

Let us now analyze the complexity of these two rules in a setting where there is a constant number of voters. For Kemeny's rule, \citet{DKNS01a} showed the problem to be hard even for weighted digraphs induced by a profile of 4 voters. Their reduction contained a small error that was fixed by \citet{BBD09a}. With the tools we developed in \secref{sec:bounds}, we can give a short exposition of this reduction.

\begin{theorem}
	Computing Kemeny's rule is NP-hard if the number of voters is even and at least 4.
\end{theorem}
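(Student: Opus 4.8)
The plan is to exploit the equivalence, noted just above the statement, between \emph{Kemeny rankings} and \emph{weighted minimum feedback arc sets}: a Kemeny ranking of a weighted digraph $(V,\weight)$ is exactly a linear order minimizing the total weight $\sum_{y\succ x}\weight(x,y)$ of the arcs it reverses. Hence it suffices to produce, from an instance of the (unweighted) minimum feedback arc set problem---which is \NP-hard by \citet{GaJo79a}---a weighted majority digraph on which a minimum-weight feedback arc set encodes a minimum feedback arc set of the original graph, \emph{and} which is induced by a profile of only $4$ voters. Since we aim for an even electorate by \lemref{lem:parity-of-majdim} and the construction will use arcs of weight $4$, four voters is the natural and smallest target.

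First I would fix an instance $G=(V,A)$ of feedback arc set and design a weighted digraph $W$ on a vertex set containing $V$ whose arcs fall into two groups: a rigid \emph{backbone} of arcs carrying weight $4$, arranged transitively so that any optimal ranking must respect them (reversing a weight-$4$ arc can never pay off against the cheaper options available), and a set of \emph{light} arcs of weight $2$ representing the arcs of $A$, which are the only ones an optimal ranking will ever reverse. With the backbone pinning down the global order, the residual choice on the light arcs is precisely the choice of a feedback arc set of $G$, each reversed arc costing weight $2$; thus the minimum Kemeny cost equals $2$ times the minimum feedback arc set size of $G$, giving the reduction.

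To certify that $W$ is genuinely $4$-inducible I would invoke \lemref{lem:2n-voter-suff} with $k=2$: it suffices to write the arc set of $W$ as $E_1\cup E_2$ for two \emph{orientation compatible} $2$-inducible digraphs, where the backbone is exactly $E_1\cap E_2$ (forcing weight $4$) and the light arcs form the symmetric difference $E_1\triangle E_2$ (forcing weight $2$); by \lemref{lem:2-voter-char} each $E_i$ must then be transitive with a transitively orientable incomparability graph. Alternatively, as \citet{DKNS01a} do, one exhibits the four lists directly. Finally, to pass from $4$ voters to any even $k\ge 4$, I would append $(k-4)/2$ pairs of voters with mutually reversed preferences; each such pair contributes $0$ to every majority margin, so the weighted digraph---and hence the set of Kemeny rankings---is unchanged, exactly as in the proof of \lemref{lem:decomposition}.

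The main obstacle is the explicit construction of the two halves (equivalently, of the four lists): one must encode an \emph{arbitrary} feedback-arc-set instance while keeping both halves transitive, orientation compatible, and with transitively orientable incomparability graphs, and simultaneously guarantee both that no backbone arc is reversed in an optimum and that the light arcs reproduce $A$ faithfully. This interaction of constraints---rather than the feedback-arc-set reduction itself---is where the real work lies, and it is precisely the delicate book-keeping in the original four-list construction of \citet{DKNS01a} whose flaw was repaired by \citet{BBD09a}.
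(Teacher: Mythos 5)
There is a genuine gap: the core of the proof is missing, and you acknowledge this yourself in your final paragraph. Your plan --- a weight-$4$ transitive backbone plus weight-$2$ ``light'' arcs encoding the feedback arc set instance, realized via \lemref{lem:2n-voter-suff} with the backbone as $E_1\cap E_2$ and the light arcs as $E_1\triangle E_2$ --- is never instantiated: you do not define the vertex set, the backbone, or the two transitive halves, and you explicitly defer ``the real work'' to the four-list construction of \citet{DKNS01a} and \citet{BBD09a}. A reduction whose central gadget is left as a citation to the very result being reproved is not a proof. Worse, the one correctness claim you do assert --- that ``reversing a weight-$4$ arc can never pay off against the cheaper options available'' --- is false as stated: with $4$ voters all weights are capped at $4$, so reversing a single backbone arc (cost $4$) \emph{can} pay off whenever it spares the reversal of three or more weight-$2$ light arcs (gain $\geq 6$). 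Preventing this requires a careful structural argument about how light arcs interact with the backbone, which is precisely the delicate part you postpone; it cannot be waved away by the weight ratio.

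The paper's proof avoids both difficulties with a different idea. It reduces from feedback arc set by \emph{subdividing} every arc $(a,b)$ of the instance $(V,E)$ through a new vertex $e_{ab}$. The subdivided digraph $(V',E')$ is bipartite between original vertices and subdividers, so its arcs split into two \emph{arc-disjoint} forests of unidirected stars, $E_1 = E'\cap(V\times S)$ and $E_2 = E'\cap(S\times V)$; each is $2$-inducible by Lemmas~\ref{lem:stars-2-inducible} and~\ref{lem:trans-disjoint-rels}, so $(V',E')$ is $4$-inducible by \lemref{lem:2n-voter-suff} with \emph{all} arcs of equal weight $2$ --- no backbone, hence no weight analysis at all. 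Correctness is then immediate from the observation that subdivision preserves the size of a minimum feedback arc set. Note also that this structural transformation is unavoidable in spirit: an arbitrary digraph on $n$ vertices generally needs $\Omega(n/\log n)$ voters, so placing the instance's arcs directly between the original vertices, as your sketch suggests, cannot in general yield a $4$-inducible digraph; some gadget like subdivision is needed, and it is exactly the piece your proposal omits. Your padding argument for even electorates larger than $4$ (adding pairs of mutually reversed voters) does match the paper and is correct.
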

\begin{proof}
	As we noted above, computing Kemeny's rule is equivalent to solving the feedback arc set problem, so we just need to show that this problem remains hard for digraphs inducible by 4-voter profiles. (The case for even $n > 4$ can be seen by just adding two completely reversed orders to this profile.)
	
	We show this by reduction from feedback arc set on general digraphs. Let $(V,E)$ be an instance of this problem. Now produce a new digraph $(V', E')$ from $(V,E)$ by \emph{subdividing} every arc. Thus, for each arc $(a,b)\in E$, introduce a new vertex $e_{ab}\in V'$ and arcs $(a,e_{ab}) \in E'$ and $(e_{ab}, b)\in E'$. Formally, $V' = V \cup S$, where $S = \{ e_{ab} : (a,b)\in E \}$ is the set of subdividers, and $E' = \{ (a,e_{ab}) : (a,b) \in E \} \cup \{ (e_{ab}, b) : (a,b) \in E \}$.
	\[ \tikz[anchor=base, baseline]{
		\node[fill=white,circle,draw,minimum size=1.5em,inner sep=0pt] (a) {$a$}; 
		\node[fill=white,circle,draw,minimum size=1.5em,inner sep=0pt, right=of a] (b) {$b$}; 
		\draw[-latex] (a) edge (b);} 
	\:\:\: \text{in $(V,E)$ becomes} \:\:
	\tikz[anchor=base, baseline]{
		\node[fill=white,circle,draw,minimum size=1.5em,inner sep=0pt] (a) {$a$}; 
		\node[fill=white,circle,draw,minimum size=1.5em,inner sep=1pt, right=of a, font=\small] (ab) {$e_{ab}$}; 
		\node[fill=white,circle,draw,minimum size=1.5em,inner sep=0pt, right=of ab] (b) {$b$}; 
		\draw[-latex] (a) edge (ab) (ab) edge (b);}
	\:\:
	\text{in $(V',E')$.} \]
	This already completes our description of the reduction. We now claim that $(V',E')$ is 4-inducible, and that the size of the minimum feedback arc set of $(V',E')$ is the same as that of the original graph $(V,E)$.
	
	To see that $(V', E')$ is 4-inducible, we partition its arcs into two arc-disjoint forests of unidirected stars. Therefore, by Lemmas~\ref{lem:stars-2-inducible}, \ref{lem:trans-disjoint-rels}, and~\ref{lem:2n-voter-suff}, we deduce that $(V',E')$ is 4-inducible. The promised partition is $E' = E_1 \cup E_2$, where
	\begin{align*}
	  E_1 &= E' \cap (V \times S), \\
	  E_2 &= E' \cap (S \times V).
	\end{align*}
	The set $E_1$ contains arcs from original vertices to subdividers, while the set $E_2$ contains arcs from subdividers to original vertices (see \figref{fig:subdivision-forest}).
		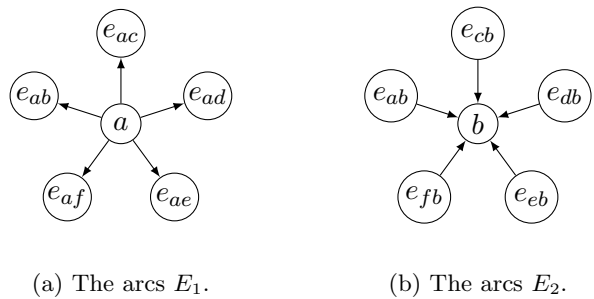
\begin{figure}[htb]
			\centering
			\subfigureCMD{The arcs $E_1$.}{fig:kemeny-4-e1}{
				\begin{tikzpicture}[auto,vertex/.style={circle,draw,inner sep=0pt, minimum size=18pt}]
					\draw[use as bounding box,opacity=0] (-1.7,-1.8) rectangle (1.7,1.5);
					\node[vertex] (1) at (162:1.2) {$e_{ab}$};
					\node[vertex] (2) at (90: 1.2) {$e_{ac}$};
					\node[vertex] (3) at (18: 1.2) {$e_{ad}$};
					\node[vertex] (4) at (306:1.2) {$e_{ae}$};
					\node[vertex] (5) at (234:1.2) {$e_{af}$};
					\node[vertex, minimum size=15pt] (c) at (0,0) {$a$};
					\draw[-latex] (c) to (1);
					\draw[-latex] (c) to (2);
					\draw[-latex] (c) to (3);
					\draw[-latex] (c) to (4);
					\draw[-latex] (c) to (5);
				\end{tikzpicture}
			}
			\hspace{2em}
			\subfigureCMD{The arcs $E_2$.}{fig:kemeny-4-e2}{
				\begin{tikzpicture}[auto,circle]
				\draw[use as bounding box,opacity=0] (-1.7,-1.8) rectangle (1.7,1.5);
				\node[vertex] (1) at (162:1.2) {$e_{ab}$};
				\node[vertex] (2) at (90: 1.2) {$e_{cb}$};
				\node[vertex] (3) at (18: 1.2) {$e_{db}$};
				\node[vertex] (4) at (306:1.2) {$e_{eb}$};
				\node[vertex] (5) at (234:1.2) {$e_{fb}$};
				\node[vertex, minimum size=15pt] (c) at (0,0) {$b$};
				\draw[latex-] (c) to (1);
				\draw[latex-] (c) to (2);
				\draw[latex-] (c) to (3);
				\draw[latex-] (c) to (4);
				\draw[latex-] (c) to (5);
				\end{tikzpicture}
			}
			\caption{Decomposition of a subdivided digraph into two forests of stars.}
			\label{fig:subdivision-forest}
		\end{figure}
		
	It is also easy to see that subdivision preserves the size of the minimal feedback arc set. If $F \subseteq E$ is a feedback arc set of $(V,E)$, then the set $F' = \{ (a, e_{ab}) : (a,b)\in F \}$ is a feedback arc set of $(V', E')$ of the same size (that is, we delete `half' of every arc of $F$). Conversely, a minimal feedback arc set $F'$ of $(V', E')$ will only ever delete one half of an original arc; deleting an arc of $(V,E)$ whenever half of it is deleted in $F'$ gives us a feedback arc set of $(V,E)$ of the same size as $F'$.
\end{proof}

It is easy to see that Kemeny's rule can be computed in polynomial time for only 2 voters (\eg both of the input votes are also optimal Kemeny rankings). Thus, the complexity of Kemeny's rule is settled for every constant \emph{even} number of voters. The complexity for a constant \emph{odd} number of voters was open. We now establish that both Kemeny's and Slater's rules are hard for 7 voters or more.

To do this, we will analyze the reduction by \citet{Coni06a} showing hardness of the feedback arc set problem restricted to tournaments. Conitzer gives a reduction from \maxsat, which asks for an assignment to the propositional variables in a Boolean formula $\varphi$ such that at least a given number $s_1$ of clauses is satisfied. 
Due to \lemref{lem:redfewsat-hardness}, we can constrain $\varphi$ to be in \redfewcnf without affecting the correctness of \citeauthor{Coni06a}'s reduction.
The reduction is based on tournaments $T^{\sla}_{\varphi}$ that admit a Slater ranking with at most $s_2$ inconsistent arcs if and only if an assignment for $\varphi$ with at least $s_1$ satisfied clauses exists, where $s_2$ depends (polynomially) on $\varphi$ and $s_1$.

Let $\mathcal{G}^{\sla}$ denote the class of all tournaments $T^{\sla}_{\varphi}$ obtained from a Boolean formula $\varphi$ in \redfewcnf according to this construction.
A  tournament $(V,E)$ is in the class $\mathcal{G}^{\sla}$ if it satisfies the following properties.
There exist integers $m, l \geq 1$, such that
\[
	V = C \cup \bigcup_{\substack{1\leq i\leq m \\ 1\leq j\leq 6}} T_i^j,
\]
where $C$ and all $T_i^j$ are pairwise disjoint and for $1\leq i \leq m$
\begin{align*}
	C 	 &= \{c_1, \ldots, c_{|C|}\},\\
	T_i^j &= \{t_i^{j,1},\ldots,t_i^{j,l}\}.
\end{align*}
Each subtournament $(T_i^j,E\cap(T_i^j\times T_i^j))$ has to be a transitive component, i.e., it is a linear order and for a vertex $v \in V\setminus T_i^j$ and vertices $v_1, v_2 \in T_i^j$, either $\{(v_1,v),(v_2,v)\}$ or $\{(v,v_1),(v,v_2)\}$ have to be in $E$. 
For our purposes, we can treat $T_i^j$ as a single vertex denoted by $t_i^j$. 
Every~$c_i$ is associated with a clause in~$\varphi$. Abusing notation, we denote this clause with~$c_i$ as well.
Every~$T_i$ corresponds to a variable~$\lambda(T_i)$ in~$\varphi$.
For notational convenience, let
\[
T^j = \bigcup_{1\leq i\leq m} t_i^j \qquad \text{and} \qquad T_i = \bigcup_{1\le j\le 6}t_i^j\text{.}
\]

For $(V,E)$ to be in $\mathcal{G}^{\sla}$, the arc set has to be of the form 
\begin{align*}
	E ={}& E_A \cup \bigcup_i \set{(t_i^1,t_i^2),(t_i^2,t_i^3),(t_i^3,t_i^1)} \cup \text{} \\
		& (T^6 \times C )\cup (C \times T^1) \cup E^\varphi
\end{align*}
where
\begin{align*}
	E_A ={}& \bigcup_{i<j} \set{(c_i,c_j)} \cup \bigcup_{i<j} (T_i \times T_j) \cup 
		\bigcup_i \bigcup_{\substack{J\in\set{4,5,6}\\1\leq j<J}} \set{(t_i^j,t_i^J)}, \text{ and}\\
	E^\varphi ={}& \set{(t_i^2,c_j), (c_j, t_i^3), (c_j, t_i^4), (t_i^5, c_j) \midd \mathwordbox[l]{\lambda(T_i)}{\lambda(T_i),\neg\lambda(T_i)}\in c_j, c_j \in C} \cup\text{}\\
				 & \set{(c_j, t_i^2), (t_i^3,c_j), (t_i^4, c_j), (c_j, t_i^5) \midd \mathwordbox[r]{\neg\lambda(T_i)}{\lambda(T_i),\neg\lambda(T_i)}\in c_j, c_j \in C} \cup\text{}\\
				 & \set{(c_j, t_i^2), (c_j, t_i^3), (t_i^4,c_j), (t_i^5, c_j) \midd \lambda(T_i),\neg\lambda(T_i)\notin c_j, c_j \in C}\text{.}
\end{align*}
So for every clause $c_j$ and every variable $\lambda(T_i)$, the $c_j$ vertex points to exactly three of the six vertices in $T_i$, but which three vertices are pointed to depends on whether and how the variable appears the clause.\footnote{The tournaments $T^{\sla}_{\varphi}$ that we have defined differ very slightly from those used in the original reduction \citep{Coni06a}. In the case that $\neg\lambda(T_i)\in c_j$, the original reduction uses arcs $(c_j, t_i^4), (t_i^5, c_j)$, while we direct them as $(t_i^4, c_j), (c_j, t_i^5)$. Conitzer's correctness proof goes through with minor changes to the sentence beginning with ``Because $d_v$ and $e_v$ always have arcs into $c_k$\dots'' \citep[][Theorem~3]{Coni06a}.}

An illustration of a tournament in $\mathcal{G}^{\sla}$ is depicted in \figref{fig:slater-cases}.

\begin{figure*}
	\centering
	\begin{tikzpicture}[vertex/.style={circle,draw,inner sep=0pt, minimum size=18pt},fixedge/.style={solid},varedge/.style={dashed}]

		\node[vertex] (ap) at (0,4) 	{$t_1^1$};
		\node[vertex] (plusp)  at (0,2.5) {$t_1^2$};
		\node[vertex] (minusp) at (0,1) {$t_1^3$};
		\node[vertex] (bp) at (2,4) 	{$t_1^4$};
		\node[vertex] (dp) at (2,2.5) 	{$t_1^5$};
		\node[vertex] (ep) at (2,1) 	{$t_1^6$}; 
		\node[vertex] (ck) at (1,-1) 	{$c_{1}$};

		\node[draw=none] (T1-label) at (1,2.7) {$T_1$};
		\node[draw=none] (x-label) at (-0.5,-1) {$\lambda(T_1)\in c_1$};
		
		\draw[-latex] (ap) to (plusp);
		\draw[-latex] (plusp) to (minusp);
		\draw[-latex] (minusp) to [bend left=25] (ap);
		\draw[-latex] (bp) to (dp);
		\draw[-latex] (bp) to [bend left=25] (ep);
		\draw[-latex] (dp) to (ep);
		
		\draw[-latex,varedge] (ck) to (minusp);
		\draw[-latex] (ck) to [bend left=40] (ap.225);
		\draw[-latex,varedge] (ck) to [bend right=40] (bp.315);
		\draw[-latex,varedge] (plusp) to (ck);
		\draw[-latex,varedge] (dp) to (ck);
		\draw[-latex] (ep) to (ck);

	\begin{scope}[shift={(4.5,0)}]
		\node[vertex] (ap) at (0,4) 	{$t_2^1$};
		\node[vertex] (plusp)  at (0,2.5) {$t_2^2$};
		\node[vertex] (minusp) at (0,1) {$t_2^3$};
		\node[vertex] (bp) at (2,4) 	{$t_2^4$};
		\node[vertex] (dp) at (2,2.5) 	{$t_2^5$};
		\node[vertex] (ep) at (2,1) 	{$t_2^6$};  
		\node[vertex] (ck) at (1,-1) 	{$c_{2}$};

		\node[draw=none] (T2-label) at (1,2.7) {$T_2$};
		\node[draw=none] (y-label) at (-0.6,-1) {$ \neg \lambda(T_2)\in c_2$};
		
		\draw[-latex] (ap) to (plusp);
		\draw[-latex] (plusp) to (minusp);
		\draw[-latex] (minusp) to [bend left=25] (ap);
		\draw[-latex] (bp) to (dp);
		\draw[-latex] (bp) to [bend left=25] (ep);
		\draw[-latex] (dp) to (ep);
		
		\draw[-latex,varedge] (minusp) to (ck);
		\draw[latex-,varedge] (dp) to (ck);
		\draw[-latex] (ep) to (ck);
		\draw[-latex] (ck) to [bend left=40] (ap.225);
		\draw[-latex,varedge] (ck) to (plusp);
		\draw[latex-,varedge] (ck) to [bend right=40] (bp.315);
	\end{scope}

	\begin{scope}[shift={(9,0)}]
		\node[vertex] (ap) at (0,4) 	{$t_3^1$};
		\node[vertex] (plusp)  at (0,2.5) {$t_3^2$};
		\node[vertex] (minusp) at (0,1) {$t_3^3$};
		\node[vertex] (bp) at (2,4) 	{$t_3^4$};
		\node[vertex] (dp) at (2,2.5) 	{$t_3^5$};
		\node[vertex] (ep) at (2,1) 	{$t_3^6$};
		\node[vertex] (ck) at (1,-1) 	{$c_{3}$};  

		\node[draw=none] (T3-label) at (1,2.7) {$T_3$};
		\node[draw=none] (z-label) at (2.5,-1) {$\begin{array}{r}  \lambda(T_3) \\ \neg \lambda(T_3)\end{array} \notin c_3 $};
		
		\draw[-latex] (ap) to (plusp);
		\draw[-latex] (plusp) to (minusp);
		\draw[-latex] (minusp) to [bend left=25] (ap);
		\draw[-latex] (bp) to (dp);
		\draw[-latex] (bp) to [bend left=25] (ep);
		\draw[-latex] (dp) to (ep);
		
		\draw[-latex] (ck) to [bend left=40] (ap.225);
		\draw[-latex,varedge] (ck) to (plusp);
		\draw[-latex,varedge] (ck) to (minusp);
		\draw[-latex,varedge] (bp.315) to [bend left=40] (ck);
		\draw[-latex,varedge] (dp) to (ck);
		\draw[-latex] (ep) to (ck);
	\end{scope}

	\draw[-latex,line width=4pt,gray] (3,-1.8) -- (9,-1.8);
	\draw[-latex,line width=4pt,gray] (3,4.7) -- (9,4.7);

	\end{tikzpicture}
	\caption{A schematic of a tournament $T_\varphi^{\sla}$ 
	to illustrate the three different cases for the arcs between $T^2 \cup T^3 \cup T^4 \cup T^5$ and $C$.
	These arcs are shown as dashed and are the only ones that depend on~$\varphi$. 
	The thick arrows below and above indicate the fixed order between and within the~$T_i^j$, and in between the~$c_i$---they stand for the following implicit, undepicted arcs: the arcs $(c_i,c_j)$ for $i<j$, the arcs in $T_i \times T_j$ for $i<j$, and the arcs in $\{t_i^1,t_i^2,t_i^3\} \times \{t_i^4,t_i^5, t_i^6\}$.
	}
	\label{fig:slater-cases}
\end{figure*}
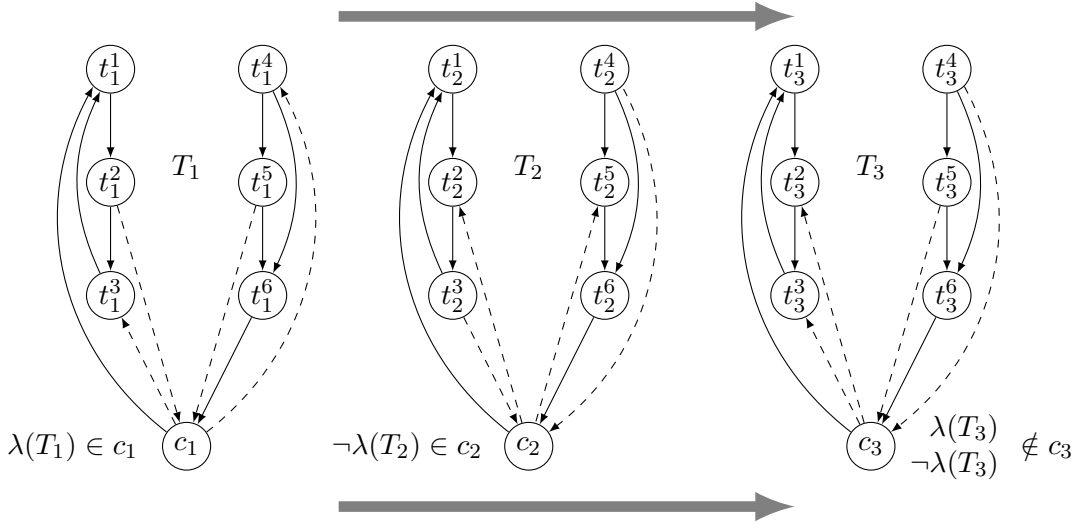

\begin{theorem}
\label{the:slater13}
The problems of computing the Slater set and of computing a Kemeny ranking are \NP-hard if the number of voters is at least $7$.
\end{theorem}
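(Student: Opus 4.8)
The plan is to follow the template of \thmref{thm:banks-np} and \thmref{thm:teq-np} and reduce everything to the single structural claim that \emph{every tournament $(V,E)\in\mathcal{G}^{\sla}$ is $7$-inducible}. Granting this, the Slater part is immediate: Slater's rule depends only on the unweighted majority tournament, so evaluating it on a $7$-voter profile whose majority tournament is $T^{\sla}_\varphi$ is the same as evaluating it on $T^{\sla}_\varphi$ itself, which by \lemref{lem:redfewsat-hardness} and \citeauthor{Coni06a}'s reduction is already \NP-hard for $\varphi$ restricted to \redfewcnf. For any larger \emph{odd} number of voters I would pad the profile with pairs of mutually reversed linear orders, which leave the majority tournament unchanged; by \lemref{lem:parity-of-majdim} these odd counts are the only admissible ones for a tournament.

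For the $7$-inducibility claim the tool is \lemref{lem:2n+1-voter-suff} with $k=3$: I would exhibit a partition of $E$ into pairwise disjoint sets
\[
	E=E_1\cup E_2\cup E_3\cup E_4
\]
such that $(V,E_1),(V,E_2),(V,E_3)$ are each induced by $2$-voter profiles and $E_4=E\setminus(E_1\cup E_2\cup E_3)$ is acyclic. Orientation compatibility of $E,E_1,E_2,E_3$ is automatic, since $E_1,E_2,E_3$ are disjoint subsets of the tournament $E$. Guided by \figref{fig:slater-cases}, the bulk of the arcs --- the transitive skeleton $E_A$, the arcs $C\times T^1$ and $T^6\times C$, and the two ``forward'' arcs $(t_i^1,t_i^2),(t_i^2,t_i^3)$ of each triangle --- form an acyclic set and go into $E_4$; I would certify acyclicity by displaying an explicit linear order on $V$ whose transitive closure contains $E_4$, exactly as in the closing paragraph of \thmref{thm:teq-np}. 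The triangle back-arcs $\bigcup_i\{(t_i^3,t_i^1)\}$ form a matching (a forest of stars) and hence are $2$-inducible by \lemref{lem:stars-2-inducible} and \lemref{lem:trans-disjoint-rels}; these I place in $E_1$. The remaining, $\varphi$-dependent arcs $E^\varphi$ I would split into the two orientation-compatible pieces $E_2,E_3$, each chosen to be transitive with a transitively orientable incomparability graph so that \lemref{lem:2-voter-char} applies.

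The principal obstacles lie in two places. First, in contrast to the Banks construction, $E^\varphi$ is \emph{not} transitive here: a path $c_j\to t_i^3\to c_{j'}$ has no closing arc in $E^\varphi$, so $E^\varphi$ cannot be made $2$-inducible in one piece, which is exactly why a third voter pair (and hence seven rather than five voters) is needed. The \redfewcnf restriction is essential at this point, since it bounds how often each $T_i$ and each clause vertex $c_j$ participate, keeping the degrees in $E^\varphi$ small enough that a clean split into two transitive, transitively-orientable pieces exists; verifying these two conditions for $E_1',E_2',E_3'$ via the in-/out-neighbourhood test of \thmref{thm:teq-np} is the most laborious step.

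Second, and more delicate, is the Kemeny part. A $7$-voter profile produced by \lemref{lem:2n+1-voter-suff} induces a \emph{weighted} tournament whose arcs carry weight $1$ or $3$, and Kemeny's rule is sensitive to these weights whereas Slater's rule is not, so the two rules need not coincide on the induced profile a priori. To transfer the hardness I would arrange the partition, and the choice of the single extra order $R^{k+1}$, so that every $\varphi$-dependent arc receives weight exactly $1$ while the heavier weight-$3$ skeleton arcs are precisely those that no optimal ranking ever reverses. Then a minimum weighted feedback arc set reverses only unit-weight counting arcs and hence coincides with a minimum (unweighted) feedback arc set of $T^{\sla}_\varphi$, so Kemeny and Slater rankings agree on the induced profile and Kemeny inherits the \NP-hardness. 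Confirming that the weight-$3$ arcs are genuinely irrelevant to the optimum --- i.e.\ that \citeauthor{Coni06a}'s counting argument survives the reweighting --- is where I expect the real work to be.
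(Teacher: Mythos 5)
Your plan --- partition $E$ and invoke \lemref{lem:2n+1-voter-suff} with $k=3$ --- is not the paper's route, and as stated it breaks in two places. The first is concrete: the set you assign to $E_4$ is \emph{not} acyclic. For every clause vertex $c_j$ and every variable gadget $T_i$, the three arcs $(c_j,t_i^1)\in C\times T^1$, $(t_i^1,t_i^6)\in E_A$, and $(t_i^6,c_j)\in T^6\times C$ form a directed triangle lying entirely inside your $E_4$, so no linear order $R^{k+1}$ can extend it and the lemma does not apply to this partition. The natural repair is to move all of $C\times T^1$ (a complete bipartite unidirected set, hence a bilevel graph and $2$-inducible) out of $E_4$; but then you must exhibit a decomposition of $C\times T^1\cup E^\varphi\cup\bigcup_i\set{(t_i^3,t_i^1)}$ into just three $2$-inducible pieces, and that decomposition is exactly the ``laborious step'' your proposal defers. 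For the Slater half alone this repair could in principle be carried out, since Slater's rule ignores weights.

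The Kemeny half, however, has a genuine gap that cannot be fixed within your framework. You want every $\varphi$-dependent arc to end up with weight $1$, which under \lemref{lem:2n+1-voter-suff} forces the extra voter $R^{k+1}$ to disagree with \emph{every} arc of $E^\varphi$. But when $\lambda(T_i)\in c_j$, the set $E^\varphi$ contains both $(t_i^2,c_j)$ and $(c_j,t_i^3)$; disagreeing with both means $R^{k+1}$ ranks $c_j$ above $t_i^2$ and $t_i^3$ above $c_j$, contradicting $(t_i^2,t_i^3)\in E_4\subseteq R^{k+1}$. So some $\varphi$-dependent arcs necessarily receive weight $3$, and these are precisely the arcs whose reversals encode clause satisfaction in \citeauthor{Coni06a}'s counting argument, so your fallback claim that the weight-$3$ arcs are never reversed is untenable. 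The paper sidesteps all of this by \emph{not} using \lemref{lem:2n+1-voter-suff}: it takes a single transitive voter whose order $E_1$ is deliberately \emph{not} a subset of $E$ (it orients the triangle back-arcs and all of $(T^1\cup\cdots\cup T^6)\times C$ against $E$), together with three $2$-voter profiles inducing sets $E_2,E_3,E_4$ that are pairwise disjoint but partially opposed, so that their combined majorities have weight $2$ on exactly the arcs where $E_1$ and $E$ disagree and weight $0$ everywhere else. Summing the four profiles flips precisely the disagreements and yields every majority margin equal to $1$; since Kemeny's and Slater's rules coincide on weighted tournaments with all weights $1$, both hardness claims follow simultaneously. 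This exact control of the margins --- impossible under your lemma-based scheme, which always produces a mix of weights $1$ and $3$ --- is the key idea your proposal is missing.
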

\begin{proof}
	Let $(V,E)$ be a tournament in $\mathcal{G}^{\sla}$ that is constructed from a formula~$\varphi$ in \redfewcnf whose set of clauses $C = C^2 \cup C^3$ is partitioned into clauses in $C^2$ that contain exactly two literals and clauses in $C^3$ that contain exactly three literals. We give a transitive ($1$-inducible) arc set $E_1$ and $2$-inducible arc sets $E_2, E_3, E_4$ such that putting these four profiles together yields a profile with $7$ voters that induces $(V,E)$ as its majority relation, and moreover, so that the majority margins are all equal to $1$. This gives the desired result.
	
	\begin{align*}
	E_1 &= \mathrlap{E_A\cup \bigcup_{i} \:\set{(t_i^1,t_i^2), (t_i^2, t_i^3), (t_i^1, t_i^3)} \cup \left( (T^1 \cup \cdots \cup T^6) \times (C^2 \cup C^3) \right)} \\
	E_2 &= \vphantom{\bigcup_i}\mathrlap{(C^2 \cup C^3) \times (T^1 \cup T^2 \cup T^3)} \\
	E_3 &= \bigcup_{c_j \in C^2} \set{(t_i^3,t_i^1), (t_i^2, c_j) \midd \lambda(T_i) \in c_j} & E_4 &= \bigcup_{c_j \in C^3} \set{(t_i^2, c_j) \midd \lambda(T_i) \in c_j} \\
	&\:\cup \bigcup_{c_j \in C^2} \set{(t_i^3,t_i^1), (t_i^3, c_j) \midd \neg\lambda(T_i) \in c_j} &&\: \cup \bigcup_{c_j \in C^3} \set{(t_i^3, c_j) \midd \neg\lambda(T_i) \in c_j} \\
	&\: \cup \bigcup_{c_j \in C^3} \set{(c_j, t_i^4) \midd \lambda(T_i) \in c_j} &&\: \cup \bigcup_{c_j \in C^2} \set{(c_j, t_i^4) \midd \lambda(T_i) \in c_j} \\
	&\: \cup \bigcup_{c_j \in C^3} \set{(c_j, t_i^5) \midd \neg\lambda(T_i) \in c_j} &&\: \cup \bigcup_{c_j \in C^2} \set{(c_j, t_i^5) \midd \neg\lambda(T_i) \in c_j}
	\end{align*}
	
	\begin{figure}
		\centering
		\subfigureCMD{The arcs in $E_1$ give a transitive tournament, inducible by $1$ voter.}{fig:kemeny-7-e1}{
		\scalebox{0.68}{
			\begin{tikzpicture}[vertex/.style={circle,draw,inner sep=0pt, minimum size=18pt},fixedge/.style={solid},varedge/.style={draw=black, very thick},vedge/.style={draw=black!50},posedge/.style={draw=green!40!black}, negedge/.style={red!50!black}]
			\draw[use as bounding box,opacity=0] (-1,4.5) rectangle (16.5,-1.5);
			
			\begin{scope}[shift={(0,0)}]
			\node[vertex] (ap) at (0,4) 	{$x^1$};
			\node[vertex] (plusp)  at (0,2.5) {$x^2$};
			\node[vertex] (minusp) at (0,1) {$x^3$};
			\node[vertex] (bp) at (2,4) 	{$x^4$};
			\node[vertex] (dp) at (2,2.5) 	{$x^5$};
			\node[vertex] (ep) at (2,1) 	{$x^6$}; 
			\node[vertex] (ck) at (1,-1) 	{$c_i$};

			\draw[-latex,vedge] (ap) to (plusp);
			\draw[-latex,vedge] (plusp) to (minusp);
			\draw[-latex,vedge] (ap) to [bend right=25] (minusp);
			\draw[-latex,vedge] (bp) to (dp);
			\draw[-latex,vedge] (bp) to [bend left=25] (ep);
			\draw[-latex,vedge] (dp) to (ep);
			
			\draw[-latex,vedge] (minusp) to (ck);
			\draw[-latex,vedge] (bp.315) to [bend left=40] (ck);
			\draw[-latex,vedge] (ep) to (ck);
			\draw[-latex,vedge] (ap.225) to [bend right=40] (ck);
			\draw[-latex,vedge] (plusp) to (ck);
			\draw[-latex,vedge] (dp) to [] (ck);
			\end{scope}
			
			\begin{scope}[shift={(4.5,0)}]
			\node[vertex] (ap) at (0,4) 	{$x^1$};
			\node[vertex] (plusp)  at (0,2.5) {$x^2$};
			\node[vertex] (minusp) at (0,1) {$x^3$};
			\node[vertex] (bp) at (2,4) 	{$x^4$};
			\node[vertex] (dp) at (2,2.5) 	{$x^5$};
			\node[vertex] (ep) at (2,1) 	{$x^6$}; 
			\node[vertex] (ck) at (1,-1) 	{$c_j$};

			\draw[-latex,vedge] (ap) to (plusp);
			\draw[-latex,vedge] (plusp) to (minusp);
			\draw[-latex,vedge] (ap) to [bend right=25] (minusp);
			\draw[-latex,vedge] (bp) to (dp);
			\draw[-latex,vedge] (bp) to [bend left=25] (ep);
			\draw[-latex,vedge] (dp) to (ep);
			
			\draw[-latex,vedge] (minusp) to (ck);
			\draw[-latex,vedge] (bp.315) to [bend left=40] (ck);
			\draw[-latex,vedge] (ep) to (ck);
			\draw[-latex,vedge] (ap.225) to [bend right=40] (ck);
			\draw[-latex,vedge] (plusp) to (ck);
			\draw[-latex,vedge] (dp) to [] (ck);
			\end{scope}

			\begin{scope}[shift={(9,0)}]
			\node[vertex] (ap) at (0,4) 	{$x^1$};
			\node[vertex] (plusp)  at (0,2.5) {$x^2$};
			\node[vertex] (minusp) at (0,1) {$x^3$};
			\node[vertex] (bp) at (2,4) 	{$x^4$};
			\node[vertex] (dp) at (2,2.5) 	{$x^5$};
			\node[vertex] (ep) at (2,1) 	{$x^6$}; 
			\node[vertex] (ck) at (1,-1) 	{$c_k$};

			\draw[-latex,vedge] (ap) to (plusp);
			\draw[-latex,vedge] (plusp) to (minusp);
			\draw[-latex,vedge] (ap) to [bend right=25] (minusp);
			\draw[-latex,vedge] (bp) to (dp);
			\draw[-latex,vedge] (bp) to [bend left=25] (ep);
			\draw[-latex,vedge] (dp) to (ep);
			
			\draw[-latex,vedge] (minusp) to (ck);
			\draw[-latex,vedge] (bp.315) to [bend left=40] (ck);
			\draw[-latex,vedge] (ep) to (ck);
			\draw[-latex,vedge] (ap.225) to [bend right=40] (ck);
			\draw[-latex,vedge] (plusp) to (ck);
			\draw[-latex,vedge] (dp) to [] (ck);
			\end{scope}
			
			\begin{scope}[shift={(13.5,0)}]
			\node[vertex] (ap) at (0,4) 	{$x^1$};
			\node[vertex] (plusp)  at (0,2.5) {$x^2$};
			\node[vertex] (minusp) at (0,1) {$x^3$};
			\node[vertex] (bp) at (2,4) 	{$x^4$};
			\node[vertex] (dp) at (2,2.5) 	{$x^5$};
			\node[vertex] (ep) at (2,1) 	{$x^6$}; 
			\node[vertex] (ck) at (1,-1) 	{$c_l$};
			
			\draw[-latex,vedge] (ap) to (plusp);
			\draw[-latex,vedge] (plusp) to (minusp);
			\draw[-latex,vedge] (ap) to [bend right=25] (minusp);
			\draw[-latex,vedge] (bp) to (dp);
			\draw[-latex,vedge] (bp) to [bend left=25] (ep);
			\draw[-latex,vedge] (dp) to (ep);
			
			\draw[-latex,vedge] (minusp) to (ck);
			\draw[-latex,vedge] (bp.315) to [bend left=40] (ck);
			\draw[-latex,vedge] (ep) to (ck);
			\draw[-latex,vedge] (ap.225) to [bend right=40] (ck);
			\draw[-latex,vedge] (plusp) to (ck);
			\draw[-latex,vedge] (dp) to [] (ck);
			
			\end{scope}
			\end{tikzpicture}
		}}
		\\
		\subfigureCMD{The arcs in $E_2$ form a bilevel graph, and are thus $2$-inducible.}{fig:kemeny-7-e2}{
		\scalebox{0.68}{
			\begin{tikzpicture}[vertex/.style={circle,draw,inner sep=0pt, minimum size=18pt},fixedge/.style={solid},varedge/.style={draw=black, very thick},vedge/.style={draw=black!50},posedge/.style={draw=green!40!black}, negedge/.style={red!50!black}]
			\draw[use as bounding box,opacity=0] (-1,4.5) rectangle (16.5,-1.5);
			
			\begin{scope}[shift={(0,0)}]
			\node[vertex] (ap) at (0,4) 	{$x^1$};
			\node[vertex] (plusp)  at (0,2.5) {$x^2$};
			\node[vertex] (minusp) at (0,1) {$x^3$};
			\node[vertex] (bp) at (2,4) 	{$x^4$};
			\node[vertex] (dp) at (2,2.5) 	{$x^5$};
			\node[vertex] (ep) at (2,1) 	{$x^6$}; 
			\node[vertex] (ck) at (1,-1) 	{$c_i$};

			\draw[-latex,varedge] (ck) to (minusp);
			\draw[-latex,varedge] (ck) to [bend left=40] (ap.225);
			\draw[-latex,varedge] (ck) to (plusp);
			\end{scope}
			
			\begin{scope}[shift={(4.5,0)}]
			\node[vertex] (ap) at (0,4) 	{$x^1$};
			\node[vertex] (plusp)  at (0,2.5) {$x^2$};
			\node[vertex] (minusp) at (0,1) {$x^3$};
			\node[vertex] (bp) at (2,4) 	{$x^4$};
			\node[vertex] (dp) at (2,2.5) 	{$x^5$};
			\node[vertex] (ep) at (2,1) 	{$x^6$}; 
			\node[vertex] (ck) at (1,-1) 	{$c_j$};

			\draw[-latex,varedge] (ck) to (minusp);
			\draw[-latex,varedge] (ck) to [bend left=40] (ap.225);
			\draw[-latex,varedge] (ck) to (plusp);
			\end{scope}

			\begin{scope}[shift={(9,0)}]
			\node[vertex] (ap) at (0,4) 	{$x^1$};
			\node[vertex] (plusp)  at (0,2.5) {$x^2$};
			\node[vertex] (minusp) at (0,1) {$x^3$};
			\node[vertex] (bp) at (2,4) 	{$x^4$};
			\node[vertex] (dp) at (2,2.5) 	{$x^5$};
			\node[vertex] (ep) at (2,1) 	{$x^6$}; 
			\node[vertex] (ck) at (1,-1) 	{$c_k$};

			\draw[-latex,varedge] (ck) to (minusp);
			\draw[-latex,varedge] (ck) to [bend left=40] (ap.225);
			\draw[-latex,varedge] (ck) to (plusp);
			\end{scope}
			
			\begin{scope}[shift={(13.5,0)}]
			\node[vertex] (ap) at (0,4) 	{$x^1$};
			\node[vertex] (plusp)  at (0,2.5) {$x^2$};
			\node[vertex] (minusp) at (0,1) {$x^3$};
			\node[vertex] (bp) at (2,4) 	{$x^4$};
			\node[vertex] (dp) at (2,2.5) 	{$x^5$};
			\node[vertex] (ep) at (2,1) 	{$x^6$}; 
			\node[vertex] (ck) at (1,-1) 	{$c_l$};

			\draw[-latex,varedge] (ck) to (minusp);
			\draw[-latex,varedge] (ck) to [bend left=40] (ap.225);
			\draw[-latex,varedge] (ck) to (plusp);
			\end{scope}
			\end{tikzpicture}}}
	
	\subfigureCMD{The arcs in $E_3$ form a ($2$-inducible) forest of unidirected stars, centred at $x^2$, $y^3$, and $c^3$.}{fig:kemeny-7-e3}{
		\scalebox{0.68}{
			\begin{tikzpicture}[vertex/.style={circle,draw,inner sep=0pt, minimum size=18pt},fixedge/.style={solid},varedge/.style={draw=black, very thick},vedge/.style={draw=black!50},posedge/.style={}, negedge/.style={}]
			\draw[use as bounding box,opacity=0] (-1,4.5) rectangle (16.5,-1.5);
			
			\begin{scope}[shift={(0,0)}]
			\node[vertex] (ap) at (0,4) 	{$x^1$};
			\node[vertex] (plusp)  at (0,2.5) {$x^2$};
			\node[vertex] (minusp) at (0,1) {$x^3$};
			\node[vertex] (bp) at (2,4) 	{$x^4$};
			\node[vertex] (dp) at (2,2.5) 	{$x^5$};
			\node[vertex] (ep) at (2,1) 	{$x^6$}; 
			\node[vertex] (ck) at (1,-1) 	{$c^2$};
			
			\node[draw=none, align=center, font=\small] (T1-label) at (1,3.3) {$x$ occurs \\ \emph{positively} \\ in 2-clause};
			
			\draw[-latex,varedge,posedge] (minusp) to [bend left=25] (ap);
			\draw[-latex,varedge,posedge] (plusp) to (ck);
			\end{scope}
			
			\begin{scope}[shift={(4.5,0)}]
			\node[vertex] (ap) at (0,4) 	{$y^1$};
			\node[vertex] (plusp)  at (0,2.5) {$y^2$};
			\node[vertex] (minusp) at (0,1) {$y^3$};
			\node[vertex] (bp) at (2,4) 	{$y^4$};
			\node[vertex] (dp) at (2,2.5) 	{$y^5$};
			\node[vertex] (ep) at (2,1) 	{$y^6$}; 
			\node[vertex] (ck) at (1,-1) 	{$c^2$};
			
			\node[draw=none, align=center, font=\small] (T2-label) at (1,3.3) {$x$ occurs \\ \emph{negatively} \\ in 2-clause};
			
			\draw[-latex,varedge,negedge] (minusp) to [bend left=25] (ap);
			
			\draw[-latex,varedge,negedge] (minusp) to (ck);
			\end{scope}

			\begin{scope}[shift={(9,0)}]
			\node[vertex] (ap) at (0,4) 	{$x^1$};
			\node[vertex] (plusp)  at (0,2.5) {$x^2$};
			\node[vertex] (minusp) at (0,1) {$x^3$};
			\node[vertex] (bp) at (2,4) 	{$x^4$};
			\node[vertex] (dp) at (2,2.5) 	{$x^5$};
			\node[vertex] (ep) at (2,1) 	{$x^6$}; 
			\node[vertex] (ck) at (1,-1) 	{$c^3$};
			
			\node[draw=none, align=center, font=\small] (T1-label) at (1,3.3) {$x$ occurs \\ \emph{positively} \\ in 3-clause};
			
			\draw[-latex,varedge,posedge] (ck) to [bend right=40] (bp.315);
			\end{scope}
			
			\begin{scope}[shift={(13.5,0)}]
			\node[vertex] (ap) at (0,4) 	{$y^1$};
			\node[vertex] (plusp)  at (0,2.5) {$y^2$};
			\node[vertex] (minusp) at (0,1) {$y^3$};
			\node[vertex] (bp) at (2,4) 	{$y^4$};
			\node[vertex] (dp) at (2,2.5) 	{$y^5$};
			\node[vertex] (ep) at (2,1) 	{$y^6$}; 
			\node[vertex] (ck) at (1,-1) 	{$c^3$};
			
			\node[draw=none, align=center, font=\small] (T2-label) at (1,3.3) {$x$ occurs \\ \emph{negatively} \\ in 3-clause};
			
			\draw[-latex,varedge,negedge] (ck) to [] (dp);
			\end{scope}

			\end{tikzpicture}
		}
	}
	\subfigureCMD{The arcs in $E_4$ form a ($2$-inducible) forest of unidirected stars, centred at $x^2$, $y^3$, and $c^2$.}{fig:kemeny-7-e4}{
		\scalebox{0.68}{
			\begin{tikzpicture}[vertex/.style={circle,draw,inner sep=0pt, minimum size=18pt},fixedge/.style={solid},varedge/.style={draw=black, very thick},vedge/.style={draw=black!50},posedge/.style={}, negedge/.style={}]
			\draw[use as bounding box,opacity=0] (-1,4.5) rectangle (16.5,-1.5);
			
			\begin{scope}[shift={(0,0)}]
			\node[vertex] (ap) at (0,4) 	{$x^1$};
			\node[vertex] (plusp)  at (0,2.5) {$x^2$};
			\node[vertex] (minusp) at (0,1) {$x^3$};
			\node[vertex] (bp) at (2,4) 	{$x^4$};
			\node[vertex] (dp) at (2,2.5) 	{$x^5$};
			\node[vertex] (ep) at (2,1) 	{$x^6$}; 
			\node[vertex] (ck) at (1,-1) 	{$c^2$};
			
			\node[draw=none, align=center, font=\small] (T1-label) at (1,3.3) {$x$ occurs \\ \emph{positively} \\ in 2-clause};
			
			\draw[-latex,varedge,posedge] (ck) to [bend right=40] (bp.315);
			\end{scope}
			
			\begin{scope}[shift={(4.5,0)}]
			\node[vertex] (ap) at (0,4) 	{$y^1$};
			\node[vertex] (plusp)  at (0,2.5) {$y^2$};
			\node[vertex] (minusp) at (0,1) {$y^3$};
			\node[vertex] (bp) at (2,4) 	{$y^4$};
			\node[vertex] (dp) at (2,2.5) 	{$y^5$};
			\node[vertex] (ep) at (2,1) 	{$y^6$}; 
			\node[vertex] (ck) at (1,-1) 	{$c^2$};
			
			\node[draw=none, align=center, font=\small] (T2-label) at (1,3.3) {$x$ occurs \\ \emph{negatively} \\ in 2-clause};
			
			\draw[-latex,varedge,negedge] (ck) to [] (dp);
			\end{scope}
			
			\begin{scope}[shift={(9,0)}]
			\node[vertex] (ap) at (0,4) 	{$x^1$};
			\node[vertex] (plusp)  at (0,2.5) {$x^2$};
			\node[vertex] (minusp) at (0,1) {$x^3$};
			\node[vertex] (bp) at (2,4) 	{$x^4$};
			\node[vertex] (dp) at (2,2.5) 	{$x^5$};
			\node[vertex] (ep) at (2,1) 	{$x^6$}; 
			\node[vertex] (ck) at (1,-1) 	{$c^3$};
			
			\node[draw=none, align=center, font=\small] (T1-label) at (1,3.3) {$x$ occurs \\ \emph{positively} \\ in 3-clause};
			
			\draw[-latex,varedge,posedge] (plusp) to (ck);
			\end{scope}
			
			\begin{scope}[shift={(13.5,0)}]
			\node[vertex] (ap) at (0,4) 	{$y^1$};
			\node[vertex] (plusp)  at (0,2.5) {$y^2$};
			\node[vertex] (minusp) at (0,1) {$y^3$};
			\node[vertex] (bp) at (2,4) 	{$y^4$};
			\node[vertex] (dp) at (2,2.5) 	{$y^5$};
			\node[vertex] (ep) at (2,1) 	{$y^6$}; 
			\node[vertex] (ck) at (1,-1) 	{$c^3$};
			
			\node[draw=none, align=center, font=\small] (T2-label) at (1,3.3) {$x$ occurs \\ \emph{negatively} \\ in 3-clause};

			\draw[-latex,varedge,negedge] (minusp) to (ck);
			\end{scope}

			\end{tikzpicture}
		}
	}
		\caption{The decomposition of the arcs of $\mathcal{G}^{\sla}$ used in the proof of \thmref{the:slater13}.}
	\end{figure}
	
	The set $E_1$ is complete and transitive: it is induced by the $1$-voter profile $R_1$ whose voter has order $t_1^1 > t_1^2 > t_1^3 > t_1^4 > t_1^5 > t_1^6 > t_2^1 > \dots > t_m^1 > \dots > t_m^6 > c_1 > \dots > c_{|C|}$. The set $E_2$ is a bilevel graph, and thus $2$-inducible (see Footnote~\ref{foot:bilevel}) by a profile $R_2$ of $2$ voters. The sets $E_3$ and $E_4$ are also $2$-inducible by profiles $R_3$ and $R_4$, since $E_2$ and $E_3$ both consist of vertex-disjoint unidirected stars (to see disjointness one appeals to the definition of \redfewcnf: every variable occurs at most once in $C^3$, and every literal occurs at most once in $C^2$).
	
	Now, notice that the sets $E_2$, $E_3$, $E_4$ are pairwise disjoint, although certain arcs (from $(C^2 \cup C^3) \times (T^2\cup T^3)$) in $E_2$ occur once in the opposite direction in $E_3$ or $E_4$. Thus, we can see that the union of the profiles $R_2$, $R_3$, and $R_4$ induces the following arcs with weight $2$, and all other arcs with weight $0$:
	\begin{align*}
	\smash{\bigcup_i} \{t_i^3, t_i^1\} \:\cup\:
	&\set{(c_j,t_i^1), (c_j, t_i^3), (c_j, t_i^4) \midd \mathwordbox[l]{\lambda(T_i)}{\lambda(T_i),\neg\lambda(T_i)}\in c_j, c_j \in C} \cup\text{}\\
	& \set{(c_j,t_i^1), (c_j, t_i^2), (c_j, t_i^5) \midd \mathwordbox[r]{\neg\lambda(T_i)}{\lambda(T_i),\neg\lambda(T_i)}\in c_j, c_j \in C} \cup\text{}\\
	& \set{(c_j,t_i^1), (c_j, t_i^2), (c_j, t_i^3) \midd \lambda(T_i),\neg\lambda(T_i)\notin c_j, c_j \in C}\text{.}
	\end{align*}	
	
	But this is precisely the set of arcs on which the intended arc set $E$ and the transitive set $E_1$ disagree! Hence adding $R_2$, $R_3$, and $R_4$ to the $1$-voter profile $R_1$ precisely fixes these disagreements, and induces our target tournament $(V,E)$. Also notice that all majority margins have weight $1$. The profile constructed has $7$ voters, as required.
\end{proof}
	

\subsection{Ranked Pairs} 
\label{sub:ranked_pairs}
	The fourth and final voting rule we investigate is called \emph{ranked pairs ($\rp$)} \citep[see, \eg][]{FHN15a}. In contrast to the other rules we discussed in the previous sections, it operates on \emph{weighted} digraphs and, just like Slater's rule, does not require the digraph to be complete. Hence, we have separate results for an odd and for an even number of voters.

	There are two versions of $\rp$ commonly discussed in the literature. The one we are concerned with is the \emph{neutral} one, \ie the one that does not differentiate among alternatives. Deciding whether a given alternative is a winner according to this version of $\rp$ is \NP-complete \citep{BrFi12a}.

	Usually, $\rp$ is regarded as a procedure. First, one defines priorities for all pairs of alternatives, and then ranks the alternatives iteratively in the order of their priority. The priority over pairs $(a,b)$ of alternatives is given by the number of voters who prefer $a$ to $b$. To avoid creating cycles, any pair 
	whose addition would yield a cycle is discarded in the procedure. The neutral version of $\rp$, which was defined by \citet{Tide87a} and considered by \citet{BrFi12a}, returns the set of all rankings the above procedure returns for \emph{some} tie breaking rule. From this point on, we refer to this variant by $\rp$.

	The \NP-hardness proof by \citet{BrFi12a} is by a reduction from \sat. For each Boolean formula~$\varphi$ in \cnf they constructed a weighted digraph~$G^{\rp}_\varphi$ such that a decision vertex~$d$ is selected by~$\rp$ from $G^{\rp}_\varphi$ if and only if~$\varphi$ is satisfiable. The construction, of course, works just as well for a reduction from \threesat. We may also assume that in every formula~$\varphi$ in \threecnf no variable occurs more than once in each clause. 

	Since the original construction by \citet{BrFi12a} does not yield a tournament, investigating it would give only results involving an even number of voters. However, a minor modification of the argument results in a tournament, which allows to consider the case of an odd number of voters. 
	We first define the class $\mathcal G^{\rp}$ in which the weighted digraphs $G^{\rp}_\varphi$ for formulas~$\varphi$ in \threecnf are contained. Then we prove that every digraph in this class is induced by an $8$-voter profile, showing that deciding whether a given alternative is a ranked pairs winner is already \NP-complete for eight voters. Later, we define the tournament class~$\mathcal T^{\rp}$ and show the same result for an odd number of voters. Finally we combine these two results into a corollary.

	A weighted digraph~$(V,E)$ (with weight function~$\weight$) belongs to $\mathcal G^{\rp}$ if and only if it satisfies the following conditions. There are some integers $m,l\ge 1$ such that
	\[
		V=D\cup U_1\cup\dots\cup U_m\cup X_1\cup \dots \cup X_l\text,
	\]
	where, for $1\le i\le m$ and $1\le j\le l$, 
	\begin{align*}
		D	&= \set d, \\
		U_i &= \set{u_i^1,u_i^2,u_i^3,u_i^4}, \text{ and }\\
		X_j	&= \set{x_j}\text.
	\end{align*}
	If $(V,E)$ is obtained as the digraph $G^{\rp}_\varphi$ for some~$\varphi$ in \threecnf,~$l$ is the number of clauses,~$m$ the number of variables occurring in~$\varphi$, the $U_i$s are the variable gadgets, the $X_j$s the clause gadgets, and, finally, $D$ the decision vertex.
	Let $U^j_i=\set{u^j_i}$, $U^j=\bigcup_{i=1}^m\set{u^j_i}$, $U=\bigcup_{i=1}^mU_i$ and $X=\bigcup_{i=1}^l X_i$.
	Moreover, $E=E^\sigma\cup E^\varphi$, where $E^\sigma$ (the \emph{skeleton}) and $E^\varphi$ (the \emph{formula dependent part}) are disjoint  such that
	\[
			E^\sigma = 
				(D\times(U^1\cup U^3)) \cup(X\times D) \cup
			 	\bigcup_{i=1}^m\big\{(u_i^1, u_i^2), (u_i^2, u_i^3), (u_i^3, u_i^4), (u_i^4, u_i^1)\big\}
	\]
	and $E^\varphi$ is such that for all $1\le i\le m$ and all $1\le j\le l$:
	\begin{align*}
		\mathwordbox{E^\varphi\subset (U^2\cup U^4)\times X\text,}{E^\varphi\cap(U^2\cup U^4)\times X_j)|}\\            
		|E^\varphi\cap(U^2\cup U^4)\times X_j)|& \le 3 \text{, and}           \\  
		|E^\varphi\cap(U^2_i\cup U^4_i)\times X_j)|& \le  1\text,
	\end{align*}
	\ie every vertex in~$X$ has at most three incoming arcs (intuitively corresponding to the literals~$x$ contains) and at most one from every $U_i$ (intuitively corresponding to the fact that no propositional variable occurs more than once in each clause).
	Finally, we check that the weight function~$\weight$ is defined such that all arcs in $E\cap((U^2\times U^3)\cup(U^4\times U^1))$ have weight~$4$ and all arcs in $E\setminus((U^2\times U^3)\cup(U^4\times U^1))$ have weight~$2$. 
	An example illustrating this definition of the class~$\mathcal G^{\rp}$ is shown in \figref{fig:rp}.

	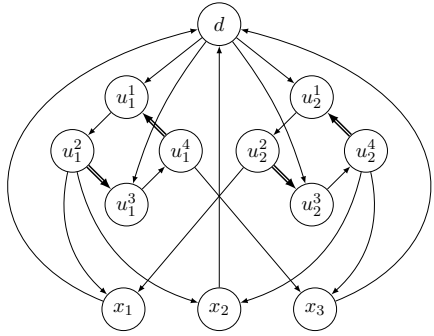
\begin{figure}[htb]
		\centering
		\scalebox{0.7}{
			\begin{tikzpicture}[scale=1.2,auto]
				\tikzstyle{every node}=[circle,draw,minimum size=2.1em,inner sep=0pt]

				\draw(.5,4) node(d){$d$};

				\draw (-1.8,2) 				  node(v1)  {$u_1^2$} 
								++(45:1.2cm)  node(v1s) {$u_1^1$} 
								++(315:1.2cm) node(v1b) {$u_1^4$} 
								++(225:1.2cm) node(v1bs){$u_1^3$};

				\draw (1.1,2) 				  node(v2b) {$u_2^2$}
				                ++(45:1.2cm)  node(v2s) {$u_2^1$}
				                ++(315:1.2cm) node(v2)  {$u_2^4$}
				                ++(225:1.2cm) node(v2bs){$u_2^3$};

				\draw[-latex,thick,double] (v1)   -- (v1bs);
				\draw[-latex]  			   (v1s)  -- (v1);
				\draw[-latex,thick,double] (v1b)  -- (v1s);
				\draw[-latex] 			   (v1bs) -- (v1b);

				\draw[latex-] 				(v2)   -- (v2bs);
				\draw[latex-,thick,double]  (v2s)  -- (v2);
				\draw[latex-] 				(v2b)  -- (v2s);
				\draw[latex-,thick,double] 	(v2bs) -- (v2b);
				
				\draw (-1,-0.5) node(y1){$x_1$} ++(1.5,0) node(y2){$x_2$} ++(1.5,0) node(y3){$x_3$};

				\draw[-latex] (v1) to [bend right] (y1);
				\draw[-latex] (v2b) -- (y1);

				\draw[-latex] (v1) to [bend right] (y2);
				\draw[-latex] (v2) to [bend left] (y2);

				\draw[-latex] (v1b) -- (y3);
				\draw[-latex] (v2) to [bend left] (y3);

				\foreach \v in {v1s,v2s}   
						{ \draw[-latex] (d) -- (\v); }

				\draw[-latex] (d) to [bend right=10] (v1bs);
				\draw[-latex] (d) to [bend left=10] (v2bs);

				\draw[-latex] (y2)  -- (d);
				\draw[-latex] (y1) .. controls (-3.5,0.5) and (-3.5,3) .. (d);
				\draw[-latex] (y3) .. controls (4.5,0.5) and (4.5,3) .. (d);	
			\end{tikzpicture}
			}
		\caption{A digraph~$(V,E)$ in the class $\mathcal G^{\rp}$. 
		Double arcs have weight~$4$ whereas normal arcs have weight~$2$.}
		\label{fig:rp}
	\end{figure}

	\begin{figure}[htb]
		\[
		\begin{array}{cccc}
		\scalebox{0.7}{
		\begin{tikzpicture}[scale=0.7,auto]
			\tikzstyle{every node}=[circle,draw,minimum size=0.7em,inner sep=0pt]

			\draw(.5,4) node(d){};

			\draw (-1.8,2) 				  node(v1)  {} 
							++(45:1.2cm)  node(v1s) {} 
							++(315:1.2cm) node(v1b) {} 
							++(225:1.2cm) node(v1bs){};
	                                                 
			\draw (1.1,2) 				  node(v2b) {}
			                ++(45:1.2cm)  node(v2s) {}
			                ++(315:1.2cm) node(v2)  {}
			                ++(225:1.2cm) node(v2bs){};

							\draw (-1,-0.5) node(y1){} ++(1.5,0) node(y2){} ++(1.5,0) node(y3){};

			\draw[-latex] (v1)   -- (v1bs);
			\draw[-latex,gray!40]  			   (v1s)  -- (v1);
			\draw[-latex] (v1b)  -- (v1s);
			\draw[-latex,gray!40] 			   (v1bs) -- (v1b);

			\draw[latex-,gray!40] 				(v2)   -- (v2bs);
			\draw[latex-]  (v2s)  -- (v2);
			\draw[latex-,gray!40] 				(v2b)  -- (v2s);
			\draw[latex-] 	(v2bs) -- (v2b);

			\draw[-latex] (v1) to [bend right] (y1);
			\draw[-latex,gray!40] (v2b) -- (y1);

			\draw[-latex] (v1) to [bend right] (y2);
			\draw[-latex,gray!40] (v2) to [bend left] (y2);

			\draw[-latex] (v1b) -- (y3);
			\draw[-latex,gray!40] (v2) to [bend left] (y3);

			\foreach \v in {v1s,v2s}   
					{ \draw[-latex,gray!40] (d) -- (\v); }

			\draw[-latex,gray!40] (d) to [bend right=10] (v1bs);
			\draw[-latex,gray!40] (d) to [bend left=10] (v2bs);

			\draw[-latex,gray!40] (y2)  -- (d);
			\draw[-latex,gray!40] (y1) .. controls (-3.5,0.5) and (-3.5,3) .. (d);
			\draw[-latex,gray!40] (y3) .. controls (4.5,0.5) and (4.5,3) .. (d);	
		\end{tikzpicture}
			}
	&
	\scalebox{0.7}{
	\begin{tikzpicture}[scale=0.7,auto]
		\tikzstyle{every node}=[circle,draw,minimum size=0.7em,inner sep=0pt]

		\draw(.5,4) node(d){};

		\draw (-1.8,2) 				  node(v1)  {} 
						++(45:1.2cm)  node(v1s) {} 
						++(315:1.2cm) node(v1b) {} 
						++(225:1.2cm) node(v1bs){};
	                                             
		\draw (1.1,2) 				  node(v2b) {}
		                ++(45:1.2cm)  node(v2s) {}
		                ++(315:1.2cm) node(v2)  {}
		                ++(225:1.2cm) node(v2bs){};

		\draw[-latex] (v1)   -- (v1bs);
		\draw[-latex,gray!40]  			   (v1s)  -- (v1);
		\draw[-latex] (v1b)  -- (v1s);
		\draw[-latex,gray!40] 			   (v1bs) -- (v1b);

		\draw[latex-,gray!40] 				(v2)   -- (v2bs);
		\draw[latex-]  (v2s)  -- (v2);
		\draw[latex-,gray!40] 				(v2b)  -- (v2s);
		\draw[latex-] 	(v2bs) -- (v2b);
		
		\draw (-1,-0.5) node(y1){} ++(1.5,0) node(y2){} ++(1.5,0) node(y3){};

		\draw[-latex,gray!40] (v1) to [bend right] (y1);
		\draw[-latex] (v2b) -- (y1);

		\draw[-latex,gray!40] (v1) to [bend right] (y2);
		\draw[-latex] (v2) to [bend left] (y2);

		\draw[-latex,gray!40] (v1b) -- (y3);
		\draw[-latex] (v2) to [bend left] (y3);

		\foreach \v in {v1s,v2s}   
				{ \draw[-latex,gray!40] (d) -- (\v); }

		\draw[-latex,gray!40] (d) to [bend right=10] (v1bs);
		\draw[-latex,gray!40] (d) to [bend left=10] (v2bs);

		\draw[-latex,gray!40] (y2)  -- (d);
		\draw[-latex,gray!40] (y1) .. controls (-3.5,0.5) and (-3.5,3) .. (d);
		\draw[-latex,gray!40] (y3) .. controls (4.5,0.5) and (4.5,3) .. (d);	
	\end{tikzpicture}
		}
	\\[2ex]
	E_1	& E_2 
	\\[3ex]
		\scalebox{0.7}{
		\begin{tikzpicture}[scale=0.7,auto]
			\tikzstyle{every node}=[circle,draw,minimum size=0.7em,inner sep=0pt]

			\draw(.5,4) node(d){};

			\draw (-1.8,2) 				  node(v1)  {} 
							++(45:1.2cm)  node(v1s) {} 
							++(315:1.2cm) node(v1b) {} 
							++(225:1.2cm) node(v1bs){};
	                                                 
			\draw (1.1,2) 				  node(v2b) {}
			                ++(45:1.2cm)  node(v2s) {}
			                ++(315:1.2cm) node(v2)  {}
			                ++(225:1.2cm) node(v2bs){};

			\draw (-1,-0.5) node(y1){} ++(1.5,0) node(y2){} ++(1.5,0) node(y3){};

				\draw[-latex,gray!40] (v1)   -- (v1bs);
				\draw[-latex,gray!40]  			   (v1s)  -- (v1);
				\draw[-latex,gray!40] (v1b)  -- (v1s);
				\draw[-latex,gray!40] 			   (v1bs) -- (v1b);

				\draw[latex-,gray!40] 				(v2)   -- (v2bs);
				\draw[latex-,gray!40]  (v2s)  -- (v2);
				\draw[latex-,gray!40] 				(v2b)  -- (v2s);
				\draw[latex-,gray!40] 	(v2bs) -- (v2b);

				\draw[-latex,gray!40] (v1) to [bend right] (y1);
				\draw[-latex,gray!40] (v2b) -- (y1);
				\draw[-latex,gray!40] (v1) to [bend right] (y2);
				\draw[-latex,gray!40] (v2) to [bend left] (y2);
				\draw[-latex,gray!40] (v1b) -- (y3);
				\draw[-latex,gray!40] (v2) to [bend left] (y3);

				\foreach \v in {v1s,v2s}   
						{ \draw[-latex] (d) -- (\v); }

				\draw[-latex] (d) to [bend right=10] (v1bs);
				\draw[-latex] (d) to [bend left=10] (v2bs);

				\draw[-latex,gray!40] (y2)  -- (d);
				\draw[-latex,gray!40] (y1) .. controls (-3.5,0.5) and (-3.5,3) .. (d);
				\draw[-latex,gray!40] (y3) .. controls (4.5,0.5) and (4.5,3) .. (d);
			\end{tikzpicture}
			}
	&
	\scalebox{0.7}{
	\begin{tikzpicture}[scale=0.7,auto]
		\tikzstyle{every node}=[circle,draw,minimum size=0.7em,inner sep=0pt]

		\draw(.5,4) node(d){};

		\draw (-1.8,2) 				  node(v1)  {} 
						++(45:1.2cm)  node(v1s) {} 
						++(315:1.2cm) node(v1b) {} 
						++(225:1.2cm) node(v1bs){};
	                                             
		\draw (1.1,2) 				  node(v2b) {}
		                ++(45:1.2cm)  node(v2s) {}
		                ++(315:1.2cm) node(v2)  {}
		                ++(225:1.2cm) node(v2bs){};

		\draw (-1,-0.5) node(y1){} ++(1.5,0) node(y2){} ++(1.5,0) node(y3){};

			\draw[-latex,gray!40] (v1)   -- (v1bs);
			\draw[-latex]  			   (v1s)  -- (v1);
			\draw[-latex,gray!40] (v1b)  -- (v1s);
			\draw[-latex] 			   (v1bs) -- (v1b);

			\draw[latex-] 				(v2)   -- (v2bs);
			\draw[latex-,gray!40]  (v2s)  -- (v2);
			\draw[latex-] 				(v2b)  -- (v2s);
			\draw[latex-,gray!40] 	(v2bs) -- (v2b);

			\draw[-latex,gray!40] (v1) to [bend right] (y1);
			\draw[-latex,gray!40] (v2b) -- (y1);
			\draw[-latex,gray!40] (v1) to [bend right] (y2);
			\draw[-latex,gray!40] (v2) to [bend left] (y2);
			\draw[-latex,gray!40] (v1b) -- (y3);
			\draw[-latex,gray!40] (v2) to [bend left] (y3);

			\foreach \v in {v1s,v2s}   
					{ \draw[-latex,gray!40] (d) -- (\v); }

			\draw[-latex,gray!40] (d) to [bend right=10] (v1bs);
			\draw[-latex,gray!40] (d) to [bend left=10] (v2bs);

			\draw[-latex] (y2)  -- (d);
			\draw[-latex] (y1) .. controls (-3.5,0.5) and (-3.5,3) .. (d);
			\draw[-latex] (y3) .. controls (4.5,0.5) and (4.5,3) .. (d);
		\end{tikzpicture}
		}		
		\\[2ex]
		E_3&E_4
			\end{array}
				\]
		\caption{The sets $E_1$, $E_2$, $E_3$, and $E_4$ for the digraph of \figref{fig:rp} as defined in the proof of \thmref{thm:RP-NP-even}.}
		\label{fig:rp_decomposed}
	\end{figure}

	Since $G^{RP}_\varphi$ is incomplete, it can only be induced by a profile involving an even number of voters. In fact, we will prove that only eight voters suffice to induce any digraph in~$\mathcal G^{\rp}$.

	\begin{theorem} \label{thm:RP-NP-even}
	Deciding whether a given alternative is a ranked pairs winner is \NP-complete if the number of voters is even and at least $8$.
	\end{theorem}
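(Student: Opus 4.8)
The plan is to establish \NP-hardness by showing that every weighted digraph $(V,\weight)$ in the class $\mathcal G^{\rp}$ can be induced by an $8$-voter profile whose margins are exactly as prescribed: weight~$4$ on the arcs in $(U^2\times U^3)\cup(U^4\times U^1)$ and weight~$2$ on every other arc of~$E$. Given this, \NP-hardness for $8$ voters follows from \citeauthor{BrFi12a}'s reduction from \threesat, and the extension to any even number of voters $\ge 8$ is obtained by padding the profile with pairs of voters holding mutually reversed preferences, which adds~$0$ to every margin and hence leaves $(V,\weight)$ unchanged. Membership in~\NP\ is routine.

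I would realise the profile by decomposing~$E$ into four arc sets $E_1,E_2,E_3,E_4$, each of which is $2$-inducible, and applying \lemref{lem:2n-voter-suff}. Since all four sets are subsets of the asymmetric relation~$E$, they are automatically pairwise orientation compatible, so the lemma yields an $8$-voter profile inducing $E_1\cup\dots\cup E_4=E$. The key refinement is to track weights: inspecting the construction behind \lemref{lem:2n-voter-suff}, each of the four two-voter blocks contributes margin exactly~$2$ to its own arcs and margin~$0$ to all remaining pairs, and orientation compatibility rules out any cancellation, so the final margin of an arc equals $2$ times the number of blocks containing it. I would therefore arrange the decomposition so that each weight-$4$ arc lies in exactly two of the $E_i$ and every other arc of~$E$ in exactly one; every non-arc then receives margin~$0$, reproducing $(V,\weight)$ exactly.

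Concretely, I would take each $E_i$ to be a forest of vertex-disjoint unidirected stars, which is $2$-inducible by Lemmas~\ref{lem:stars-2-inducible} and~\ref{lem:trans-disjoint-rels}. The assignment would be: $E_4$ carries the in-star $X\times D$ together with the weight-$2$ cycle arcs $\set{(u_i^1,u_i^2),(u_i^3,u_i^4)}$; $E_3$ carries the out-star $D\times(U^1\cup U^3)$; and the two weight-$4$ arcs $\set{(u_i^2,u_i^3),(u_i^4,u_i^1)}$ of each gadget are placed in both $E_1$ and $E_2$, where they become the cores of out-stars centred at $u_i^2$ and $u_i^4$. It remains to place the formula arcs $E^\varphi\subseteq(U^2\cup U^4)\times X$. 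Since each clause vertex has in-degree at most three in~$E^\varphi$ and at most one arc from each gadget, I would colour the (at most three) arcs entering each clause with three distinct colours, sending colour~$1$ to $E_1$, colour~$2$ to $E_2$, and colour~$3$ to $E_3$. This makes each clause vertex have in-degree at most one in every set, so no two stars share a clause as a leaf and each $E_i$ is indeed a disjoint union of stars; because the weight-$4$ arcs never sit alongside the $d$-out-star, the vertices $u_i^1,u_i^3$ also keep in-degree one inside each set.

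The main obstacle is the tight interplay of three constraints, which is what leads to four blocks (eight voters). First, the in-star $X\times D$ cannot share a set with any formula arc, since an arc $u\to x$ together with $x\to d$ would create a directed path $u\to x\to d$ and destroy transitivity; thus one whole block is spent on $X\times D$ and carries no formula arc. Second, the weight-$4$ arcs must be duplicated across two blocks to reach margin~$4$, and they cannot be merged with the $d$-out-star without giving $u_i^3$ in-degree two. Third---and this is the crux---the formula arcs genuinely cannot be lumped into a single $2$-inducible set: three literal vertices and three clauses whose incidences form a $6$-cycle yield an incomparability graph isomorphic to $\overline{C_6}$, which is not transitively orientable, so several star forests are needed to host $E^\varphi$. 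The three-colouring, combined with isolating the weight-$4$ arcs from the $d$-stars, is exactly what reconciles these demands, and I would close the argument by noting that the colouring and the resulting two-voter profiles are all computable in polynomial time.
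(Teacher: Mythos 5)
Your proposal is correct and takes essentially the same route as the paper: the four sets you describe are precisely the paper's $E_1,\dots,E_4$ (weight-$4$ arcs $(u_i^2,u_i^3),(u_i^4,u_i^1)$ duplicated in $E_1$ and $E_2$, the $d$-out-star plus one colour class of formula arcs in $E_3$, the in-star $X\times D$ plus $(u_i^1,u_i^2),(u_i^3,u_i^4)$ in $E_4$), with the same per-clause three-colouring of $E^\varphi$, the same appeal to Lemmas~\ref{lem:stars-2-inducible}, \ref{lem:trans-disjoint-rels}, and~\ref{lem:2n-voter-suff}, and the same margin bookkeeping giving weight $4$ on doubly-covered arcs and $2$ elsewhere. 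Your extra observations (orientation compatibility following from $E_i\subseteq E$, and padding with reversed voter pairs for larger even electorates) are also exactly how the paper closes the argument.
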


	\begin{proof}
	Membership in {\NP} follows from the fact that it is easy to verify whether a given ranking can be the outcome of the $\rp$ procedure, independently of the number of voters.

	For hardness, let $(V,E)$ be a digraph (with weight function~$\weight$) in~$\mathcal G^{\rp}$.
	Intuitively, $(V,E)=G^{\rp}_\varphi$ for some formula~$\varphi$ in \threecnf. It suffices to show that $(V,E)$ is induced by an $8$-voter profile. As an auxiliary notion, let for each $1\le j\le l$,
		\[
			E^\varphi\cap((U^2\cup U^4)\times X_j)=E^\varphi_{j,1}\cup E^\varphi_{j,2}\cup E^{\varphi}_{j,3}\text,
		\]
	where $|E^\varphi_{j,i}|\le 1$ for all $1\le i\le 3$. Intuitively, $E^\varphi_{j,1}$, $E^\varphi_{j,2}$, and $E^{\varphi}_{j,3}$ impose an ordering on the incoming arcs of vertex~$x_j$. Also set 
	\[
		E^\varphi_i=\bigcup_{j=1}^l E^\varphi_{j,i}
	\] 
	for each $1\le i\le 3$, \ie $E^\varphi_i$ collects the $i$-th incoming arcs of the vertices in~$X$.
	Now define the following arc sets.
	\begin{align*}
	   E_1 & =  E^\varphi_{1}\cup \bigcup_{i=1}^m\big((U_i^2\times U_i^3)\cup (U_i^4\times U_i^1)\big)\text,\\
	   E_2 & = E^\varphi_{2}\cup  \bigcup_{i=1}^m\big((U_i^2\times U_i^3)\cup (U_i^4\times U_i^1)\big)\text,\\
	   E_3 & = E^\varphi_{3}\cup (D\times(U^1\cup U^3))\text{, and}\\
	   E_4 & = (X\times D)\cup \bigcup_{i=1}^m\big((U_i^1\times U_i^2)\cup (U_i^3\times U_i^4)\big)\text.
	\end{align*}
	Observe that $E=E_1\cup E_2\cup E_3\cup E_4$ (see~\figref{fig:rp_decomposed}).
	Moreover, each of $(V,E_1)$, $(V,E_2)$, $(V,E_3)$, and $(V,E_4)$ is a vertex-disjoint union of unidirected stars. 
	Hence, by 
	\lemref{lem:trans-disjoint-rels} we may assume they are induced by the $2$-voter profiles $(R^1_1,R^1_2)$, $(R^2_1,R^2_2)$, $(R^3_1,R^3_2)$, and $(R^4_1,R^4_2)$, respectively. 
	Moreover,~$E_1$, $E_2$, $E_3$, and~$E_4$ all contained in~$E$ and therefore also pairwise orientation compatible. By \lemref{lem:2n-voter-suff} it thus follows that $(V,E)$ is induced by the $8$-voter profile 
	\[
	R=(R^1_1,R^1_2,R^2_1,R^2_2,R^3_1,R^3_2,R^4_1,R^4_2)\text.
	\]

	Moreover,
	$E_1$, $E_3$, and $E_4$ as well as~$E_2$, $E_3$, and $E_4$ are pairwise disjoint whereas $E_1\cap E_2=\bigcup_{i=1}^m\big((U_i^2\times U_i^3)\cup (U_i^4\times U_i^1)\big)$. Thus, all arcs in $E\setminus\bigcup_{i=1}^m\big((U_i^2\times U_i^3)\cup (U_i^4\times U_i^1)\big)$ have weight~$2$, whereas those in $\bigcup_{i=1}^m\big((U_i^2\times U_i^3)\cup (U_i^4\times U_i^1)\big)$ have weight~$4$. We may conclude that also the digraph $(V,E)$ with its weights is induced by the $8$-voter profile~$R$.
	\end{proof}

	The original hardness construction contained arcs with weights~$2$ or~$4$ and unspecified arcs, defining a priority over the arcs. It is easy to see that increasing all weights in such a digraph by~$1$ to~$3$ and~$5$ does not change this priority. Similarly, adding arcs with weight~$1$ is not harmful as the corresponding pairs are added to the bottom of the priority, making them irrelevent to determining whether $d$ is an $\rp$ winner or not.
	Therefore, by incorporating these observations into $G^{\rp}_\varphi$, for each Boolean formula~$\varphi$ in \threecnf, we can create a weighted tournament (call it $T^{\rp}_\varphi$) from which $d$ is selected by $\rp$ if and only if $\varphi$ is satisfiable. 
	We denote the class of weighted tournaments that consist of these $T^{\rp}_\varphi$ by $\mathcal{T}^{\rp}$.

	We adopt the same notation as for $\mathcal G^{\rp}$. A weighted tournament~$(V,E')$ (with weight function $\weight'$) belongs to $\mathcal T^{\rp}$ if and only if it satisfies the following conditions. The set of alternatives can be written as
	\[
		V=D\cup U_1\cup\dots\cup U_m\cup X_1\cup \dots \cup X_l
	\]
	whereas
	the arc set $E'$ is the union of two disjoint sets $E^{'\sigma}$ (the \emph{skeleton}) and $E^{'\varphi}$ (the \emph{formula dependent part}). Assuming that $E$ is the arc set of $\mathcal G^{\rp}_{\varphi}$, then $E^{'\varphi}=E^{\varphi}$  and $E^{'\sigma}=E^{\sigma}\cup E^{'\sigma}_c$ where

	\begin{align*}
		E^{'\sigma}_c = 
		&\big(((D\times U)\cup (U\times X))\setminus E\big)\cup
		\bigcup_{i=1}^{m}\big((U_i^1\times U_i^3)\cup (U_i^2\times U_i^4)\big)\cup\\
		&\bigcup_{i<j}(U_i\times U_j) \cup \bigcup_{i<j}(X_i\times X_j)\text{.}
	\end{align*}
	$E^{'\sigma}_c$ can be equivalently described as a reorientation of $\incompart E$. 
	Moreover, we check that $\weight'$ is defined such that all arcs in $E^{'\sigma}\cap ((U^2\times U^3)\cup(U^4\times U^1))$ have weight~$5$, all arcs in $(E^{'\varphi}\cup E^{'\sigma})\setminus((U^2\times U^3)\cup(U^4\times U^1))$ have weight~$3$, and all arcs in $E^{'\sigma}_c$ have weight $1$.

	Now we can give the second result of this section.

	\begin{theorem} \label{thm:RP-NP-odd}
	Deciding whether a given alternative is a ranked pairs winner is \NP-complete if the number of voters is odd and at least $11$.
	\end{theorem}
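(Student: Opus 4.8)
The plan is first to dispatch membership in \NP\ exactly as in \thmref{thm:RP-NP-even}: one checks in polynomial time whether a given ranking can be an output of the $\rp$ procedure, independently of the number of voters. For hardness I would reduce from \threesat\ through the class $\mathcal T^{\rp}$. By the two observations recorded just before the theorem---that uniformly raising all weights from $2,4$ to $3,5$ leaves the priority order over arcs unchanged, and that the newly added weight-$1$ arcs sit at the very bottom of the priority and hence never influence whether the decision vertex $d$ survives---the vertex $d$ is an $\rp$ winner of $T^{\rp}_\varphi$ if and only if it is an $\rp$ winner of $G^{\rp}_\varphi$, i.e.\ if and only if $\varphi$ is satisfiable. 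It therefore suffices to prove that every weighted tournament $(V,E')$ in $\mathcal T^{\rp}$, together with its prescribed weight function $\weight'$, is the weighted majority digraph of some $11$-voter profile.

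To obtain $11=2\cdot 5+1$ voters I would invoke \lemref{lem:2n+1-voter-suff} with $k=5$: I must produce five arc sets $E_1,\dots,E_5$, each induced by a $2$-voter profile and pairwise orientation compatible with one another and with $E'$, together with an acyclic set $E_6\supseteq E'\setminus(E_1\cup\dots\cup E_5)$. The ten voters from the five pairs assign every covered arc an even weight equal to twice its multiplicity among the $E_i$ (so $0,2,4,6$ for membership in $0,1,2,3$ sets), while the eleventh voter $R^6$, a linear order extending $E_6$, shifts each arc by $\pm 1$ according to whether it agrees or disagrees with $E'$. The target weights are $5$ on the \emph{type-A} arcs $\bigcup_i\big((U_i^2\times U_i^3)\cup(U_i^4\times U_i^1)\big)$, $3$ on the remaining arcs of $E^\varphi\cup E^\sigma$, and $1$ on the completion arcs $E^{'\sigma}_c$. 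I would build the five forests on top of the four sets $E_1,\dots,E_4$ from the proof of \thmref{thm:RP-NP-even}, which already give the type-A arcs even base $4$ and the other real arcs even base $2$ while leaving $E^{'\sigma}_c$ uncovered; each is a vertex-disjoint union of unidirected stars, hence $2$-inducible by \lemref{lem:stars-2-inducible} and \lemref{lem:trans-disjoint-rels}, and all are orientation compatible with $E'$ since they are contained in $E'$.

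The delicate point, and the step I expect to be the main obstacle, is that $R^6$ is a single linear order and therefore acyclic, so within every variable gadget it must disagree with at least one arc of the $4$-cycle $u_i^1\to u_i^2\to u_i^3\to u_i^4\to u_i^1$. Wherever $R^6$ disagrees with a real arc, that arc loses rather than gains a unit of weight, so its even base must be raised by two---placing it in one further star-forest---to still reach its target of $3$ or $5$; this is exactly the purpose of the fifth pair $E_5$, and it explains why four pairs plus one voter (nine voters) do not suffice. I would thus take $R^6$ to be the acyclic order underlying the reorientation $E^{'\sigma}_c$ of $\incompart E$, which automatically pins every completion arc to weight $1$, and then, reading off for each real arc whether $R^6$ agrees or disagrees, distribute the real arcs among $E_1,\dots,E_5$ with multiplicities $2$ or $3$ (type A) and $1$ or $2$ (type B) as dictated by that sign. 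The genuinely fiddly part is to verify that these multiplicities can be realised while keeping each $E_i$ a union of \emph{unidirected} stars and pairwise orientation compatible: the arcs incident to $D$, the clause vertices $X$, and the formula-dependent arcs $E^\varphi$ impose competing demands on $R^6$, so the forests must be exhibited by an explicit case analysis (mirroring the one in \thmref{thm:teq-np}) rather than by a single uniform rule. Once the five forests and $R^6$ are given and the weight bookkeeping is checked arc-class by arc-class, \lemref{lem:2n+1-voter-suff} yields the desired $11$-voter profile, and with it the claimed \NP-completeness.
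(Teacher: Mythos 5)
Your proposal is correct and follows essentially the same route as the paper's proof: the four star forests from \thmref{thm:RP-NP-even} plus a compensating fifth one, together with an acyclic eleventh order extending $E^{'\sigma}_c$, combined via \lemref{lem:2n+1-voter-suff}, with exactly the multiplicity-versus-sign weight bookkeeping you describe ($2$ or $3$ for target weight $5$, $1$ or $2$ for target weight $3$, completion arcs at weight $1$). The case analysis you defer as the main obstacle in fact collapses to a single uniform choice in the paper: taking $R^6$ to rank $d$ first, then $U_1,\dots,U_m$ (internally $u_i^1,u_i^2,u_i^3,u_i^4$), then $X_1,\dots,X_l$, the only real arcs it reverses are $(X\times D)\cup\bigcup_{i}\{(u_i^4,u_i^1)\}$, and this set, taken as $E'_5$, is itself a vertex-disjoint union of unidirected stars.
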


	\begin{proof}
		The proof here is similar to that of the previous theorem. Let $(V,E')$ be a tournament with weight function~$\weight'$ in~$\mathcal T^{\rp}$. Intuitively, $(V,E')=T^{\rp}_\varphi$ for some formula~$\varphi$ in \threecnf. It suffices to show that $(V,E')$ is induced by an $11$-voter profile. Using the notation provided in the proof of~\thmref{thm:RP-NP-even}, we define the following arc sets.
	\begin{align*}
	   E'_1 = & E^{'\varphi}_{1}\cup \bigcup_{i=1}^m\big((U_i^2\times U_i^3)\cup (U_i^4\times U_i^1)\big)\text,\\
	   E'_2 = & E^{'\varphi}_{2}\cup  \bigcup_{i=1}^m\big((U_i^2\times U_i^3)\cup (U_i^4\times U_i^1)\big)\text,\\
	   E'_3 = & E^{'\varphi}_{3}\cup (D\times(U^1\cup U^3))\text,\\
	   E'_4 = & (X\times D)\cup \bigcup_{i=1}^m\big((U_i^1\times U_i^2)\cup (U_i^3\times U_i^4)\big)\text,\\
	   E'_5 = & (X\times D)\cup \bigcup_{i=1}^m\big\{(u_i^4, u_i^1)\big\}\text{, and}\\
	   E'_6 = & (D\times U)\cup (D\times X)\cup (U\times X)\cup \bigcup_{\substack{1\le i\le m\\j<l}}(U_i^j\times U_i^l)\cup\\
	   & \bigcup_{i<j}(U_i\times U_j)\cup \bigcup_{i<j}(X_i\times X_j)\text.
	\end{align*}
	Observe that $E'_1$, $E'_2$, $E'_3$, $E'_4$, and $E'_5$ are contained in $E'$, making them pairwise orientation compatible, and that each of $(V,E'_1)$, $(V,E'_2)$, $(V,E'_3)$, $(V,E'_4)$, and $(V,E'_5)$ is a forest of stars. Therefore, in virtue of~\lemref{lem:trans-disjoint-rels} we may assume that they are induced by the $2$-voter profiles $(R^1_1,R^1_2)$, $(R^2_1,R^2_2)$, $(R^3_1,R^3_2)$, $(R^4_1,R^4_2)$,and $(R^5_1,R^5_2)$.  Moreover, it can readily be appreciated that $E'_6\supseteq E'\setminus(E'_1\cup\ldots\cup E'_5)$. As $E'_6$ defines a transitive closure for an order over all of the alternatives in $V$ (see \figref{fig:rp-order}), $(V,E'_6)$ is acyclic, and we may assume that it is induced by a voter with the preference relation $R^6=E'_6$. Thus by~\lemref{lem:2n+1-voter-suff}, $(V,E')$ is induced by the $11$-voter profile
	\[
	R=(R^1_1,R^1_2,R^2_1,R^2_2,R^3_1,R^3_2,R^4_1,R^4_2,R^5_1,R^5_2,R^6)\text.
	\]

	\begin{figure}[htb]
		\centering
		\scalebox{0.7}{
			\begin{tikzpicture}[scale=1.2,auto]
				\tikzstyle{every node}=[circle,draw,minimum size=2.1em,inner sep=0pt]

				\draw(.5,4) node(d){$d$};

				\draw (-1.8,2) 				  node(v1)  {$u_1^2$} 
								++(45:1.2cm)  node(v1s) {$u_1^1$} 
								++(315:1.2cm) node(v1b) {$u_1^4$} 
								++(225:1.2cm) node(v1bs){$u_1^3$};

				\draw (1.1,2) 				  node(v2b) {$u_2^2$}
				                ++(45:1.2cm)  node(v2s) {$u_2^1$}
				                ++(315:1.2cm) node(v2)  {$u_2^4$}
				                ++(225:1.2cm) node(v2bs){$u_2^3$};

				\draw[-latex] (v1)   -- (v1bs);
				\draw[-latex] (v1s)  -- (v1);
				\draw[-latex] (v1bs) -- (v1b);

				\draw[latex-] 				(v2)   -- (v2bs);
				\draw[latex-] 				(v2b)  -- (v2s);
				\draw[latex-] 	(v2bs) -- (v2b);
				
				\draw[latex-] (v2s) -- (v1b);
				
				\draw (-1,-0.5) node(y1){$x_1$} ++(1.5,0) node(y2){$x_2$} ++(1.5,0) node(y3){$x_3$};

				\draw[-latex] (v2) .. controls (3,0.3) and (0,0) .. (y1);

				\draw[-latex] (d) -- (v1s);
				
				\draw[-latex] (y1) -- (y2);
				\draw[-latex] (y2) -- (y3);
				
			\end{tikzpicture}
			}
		\caption{The order implied by the arc set $E'_6$ over the alternatives of a tournament $(V,E')$ in the class $\mathcal T^{\rp}$.}
		\label{fig:rp-order}
	\end{figure}
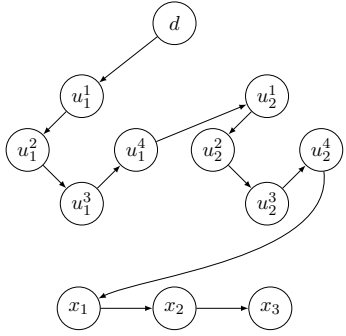

	Furthermore, note that there are some arcs in common among the arc sets and that $E'_6$ is not orientation compatible with $E'$. Arcs in $E^{'\sigma}\cap (U^2\times U^3)$ occur in $E'_1$, $E'_2$, and $E'_6$; arcs in $E^{'\sigma}\cap (U^4\times U^1)$ occur in $E'_1$, $E'_2$, and $E'_5$ while $E'_6$ includes arcs in the opposing direction or, equivalently, includes $\bigcup_{i=1}^{m}(U^1_i\times U^4_i)$; each arc in $(E^{'\varphi}\cup E^{'\sigma})\setminus((U^2\times U^3)\cup(U^4\times U^1))$ occurs in $E'_6$ and exactly one of the other arc sets; and, finally, arcs in $E^{'\sigma}_c$ occur only in $E'_6$. Thus, 
	arcs in $E^{'\sigma}\cap ((U^2\times U^3)\cup(U^4\times U^1))$ have weight~$5$, arcs in $(E^{'\varphi}\cup E^{'\sigma})\setminus((U^2\times U^3)\cup(U^4\times U^1))$ have weight~$3$, and arcs in $E^{'\sigma}_c$ have weight $1$. Therefore, we may conclude that $(V,E')$ together with its weights is induced by the $11$-voter profile $R$.
	\end{proof}

	\begin{corollary} \label{RP-NP}
		Deciding whether a given alternative is a ranked pairs winner is \NP-complete if the number of voters is either $8$ or at least $10$.
	\end{corollary}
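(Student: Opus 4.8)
The plan is to obtain the corollary as an immediate consequence of the two preceding theorems by distinguishing the parity of the number of voters $k$; no new gadget or construction is required. Membership in \NP{} is settled uniformly: as noted in the proof of \thmref{thm:RP-NP-even}, one can check in polynomial time whether a given ranking is a possible output of the $\rp$ procedure, and this verification is independent of the number of voters, so the problem lies in \NP{} for every fixed $k$.

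For hardness I would split into two cases. When $k$ is even with $k\geq 8$, \thmref{thm:RP-NP-even} already yields \NP-hardness; when $k$ is odd with $k\geq 11$, \thmref{thm:RP-NP-odd} does. It then remains only to check the bookkeeping, namely that the union of the even integers at least $8$ (that is, $\{8,10,12,\dots\}$) with the odd integers at least $11$ (that is, $\{11,13,15,\dots\}$) equals $\{8\}\cup\{k : k\geq 10\}$. This holds because every integer $k\geq 10$ is either even---hence covered by \thmref{thm:RP-NP-even}---or odd and at least $11$---hence covered by \thmref{thm:RP-NP-odd}---while $8$ is covered separately by the even theorem, and $9$ is the unique value above $7$ that slips through the gap, which is precisely why the statement reads ``$8$ or at least $10$''.

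I do not anticipate a genuine obstacle, as the corollary merely merges results already established. The only points to handle carefully are the arithmetic of coverage---especially that the odd case begins at $11$, leaving $9$ uncovered---and the implicit padding that justifies the phrase ``at least'' in both source theorems: appending a pair of voters with mutually reversed linear orders adds two voters while leaving every weighted majority margin unchanged, so any even (resp.\ odd) construction extends to all larger even (resp.\ odd) electorate sizes.
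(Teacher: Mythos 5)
Your proposal is correct and matches the paper's own argument, which simply derives the corollary from Theorems~\ref{thm:RP-NP-even} and~\ref{thm:RP-NP-odd}; your case split on parity, the coverage check leaving only $9$ uncovered, and the padding-by-reversed-voter-pairs remark just make explicit what the paper leaves implicit.
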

	\begin{proof}
		This follows from Theorems~\ref{thm:RP-NP-even} and~\ref{thm:RP-NP-odd}.
	\end{proof}


\section{Conclusion and Future Work} 
\label{sec:conclusions_and_discussion}

\begin{table}[tb]
	\centering
	\begin{tabular}{lc}
		\toprule
		Voting rule & NP-hard for $n\ge$\\
		\midrule
		Banks set					& $5$ voters\\
		Tournament equilibrium set 	& $7$ voters\\
		Slater's rule 				& $7$ voters\\
		Kemeny's rule 				& $7$ voters\\
		Ranked pairs 				& $8$ voters ($n\ne 9$)\\
		\bottomrule
	\end{tabular}
	\caption{Numbers of voters for which winner determination is NP-hard. The Banks set and the tournament equilibrium set are defined for an odd number of voters only. 
	}
	\label{tbl:results}
\end{table}

	Many hardness results in computational social choice only hold if the number of voters is roughly of the same order as the number of alternatives. In some applications of voting, however, the number of voters can be much smaller than the number of alternatives and it is unclear whether hardness still holds.

	We gave complete characterizations of $2$-inducible and $3$-inducible majority digraphs, respectively, and provided sufficient conditions for $k$-inducible majority digraphs. Using an implementation based on SAT solving, we showed that majority digraphs of real-world and generated preference profiles are inducible by at most eight voters. We did not encounter a single tournament that is not $5$-inducible.\footnote{The smallest concrete tournament known not to be $5$-inducible consists of more than 600 million vertices while non-constructive arguments entail the existence of such a tournament with at most 41 vertices.}
	
	We then leveraged the sufficient conditions we obtained earlier to show that winner determination for the Banks set, the tournament equilibrium set, Slater's rule, Kemeny's rule, and ranked pairs remains hard even when there is only a small constant number of voters. This was achieved by analyzing existing hardness proofs and checking whether the class of majority digraphs used in these constructions can be induced by small constant numbers of voters. Our hardness results are summarized in \tabref{tbl:results}.

	We believe there is some very interesting potential for future work. First, it would be desirable to completely characterize the sets of digraphs inducible by four, five, or more voters. Second, the complexity of checking whether a given majority digraph is $k$-inducible is wide open for any fixed $k\ge3$. Finally, our techniques can be used to verify whether other existing hardness proofs in computational social choice remain intact for a bounded number of voters, most notably hardness shields against manipulation, bribery, and control.\footnote{Some advances in this direction have recently been made by \citet{CFNT15a}.}



\subsubsection*{Acknowledgements}
This material is based on work supported by the Deutsche Forschungsgemeinschaft under grants {BR~2312/7-2} and {BR~2312/9-1}. Additionally, the work by Paul Harrenstein was supported by the ERC under Advanced Grant 291528 (``RACE''). Dominik Peters has been supported by EPSRC and by COST Action IC1205 on Computational Social Choice.
Preliminary results of this paper were presented at the 12th International Conference on Autonomous Agents and Multi-Agent Systems (AAMAS-2013), the 1st Workshop on Exploring Beyond the Worst Case in Computational Social Choice (EXPLORE-2014), and the 6th International Workshop on Computational Social Choice (COMSOC-2016).
The authors thank Olivier Hudry and Rolf Niedermeier for pointing us to useful references.

\bibliographystyle{ACM-Reference-Format-Journals}



\end{document}